\newcommand{\RR}{\mathbb{R}}
\newtheorem{thm}{Theorem}[section]
\newtheorem{prob}[thm]{Problem}
\newtheorem{prop}[thm]{Proposition}
\newtheorem{lemm}[thm]{Lemma}
\newtheorem{defn}{Definition}
\definecolor{lightblue}{RGB}{225,236 ,235}
\definecolor{lightgreen}{RGB}{237, 238, 207}
\definecolor{lightyellow}{RGB}{250, 238, 198}
\definecolor{lightpink}{RGB}{255 ,202, 187}
\definecolor{lightcyan}{RGB}{224, 255, 255}
\definecolor{RedLight}{RGB}{209,77,65}
\definecolor{NavyBlue}{RGB}{31,119,180}
\begin{document}

%

%

\runningtitle{Randomized HyperSteiner}

%
\runningauthor{A. A. Medbouhi, A. García-Castellanos, G. L. Marchetti, D. M. Pelt, E. J. Bekkers, D. Kragic}

\twocolumn[

\aistatstitle{Randomized HyperSteiner: A Stochastic Delaunay Triangulation Heuristic for the Hyperbolic Steiner Minimal Tree}

\aistatsauthor{
   Aniss Aiman Medbouhi$^{*\, 1}$ \qquad  Alejandro García-Castellanos$^{*\, 2}$ \qquad Giovanni Luca Marchetti$^{3}$ \\ \textbf{Daniël M. Pelt}$^{4}$ \qquad \textbf{Erik J. Bekkers}$^{2}$ \qquad \textbf{Danica Kragic}$^{1}$
}

\aistatsaddress{
$^1$School of Electrical Engineering and Computer Science, KTH Royal Institute of Technology\\
$^2$Amsterdam Machine Learning Lab (AMLab), University of Amsterdam \\
  $^3$School of Engineering Sciences, KTH Royal Institute of Technology\\
  $^4$Leiden Institute of Advanced Computer Science, Universiteit Leiden
} 
]

\begin{abstract}

We study the problem of constructing Steiner Minimal Trees (SMTs) in hyperbolic space. Exact SMT computation is NP-hard, and existing hyperbolic heuristics such as HyperSteiner are deterministic and often get trapped in locally suboptimal configurations. We introduce \textbf{Randomized HyperSteiner} (RHS), a stochastic Delaunay triangulation heuristic that incorporates randomness into the expansion process and refines candidate trees via Riemannian gradient descent optimization. Experiments on synthetic data sets and a real-world single-cell transcriptomic data show that RHS outperforms Minimum Spanning Tree (MST), Neighbour Joining, and vanilla HyperSteiner (HS). In near-boundary configurations, RHS can achieve a 32\% reduction in total length over HS, demonstrating its effectiveness and robustness in diverse data regimes.

\end{abstract}
\section{Introduction}
\label{sec:intro}

\begin{figure*}[t!]
\centering
  \begin{subfigure}[t]{0.3\linewidth}
     \centering
     \includegraphics[width=\linewidth]{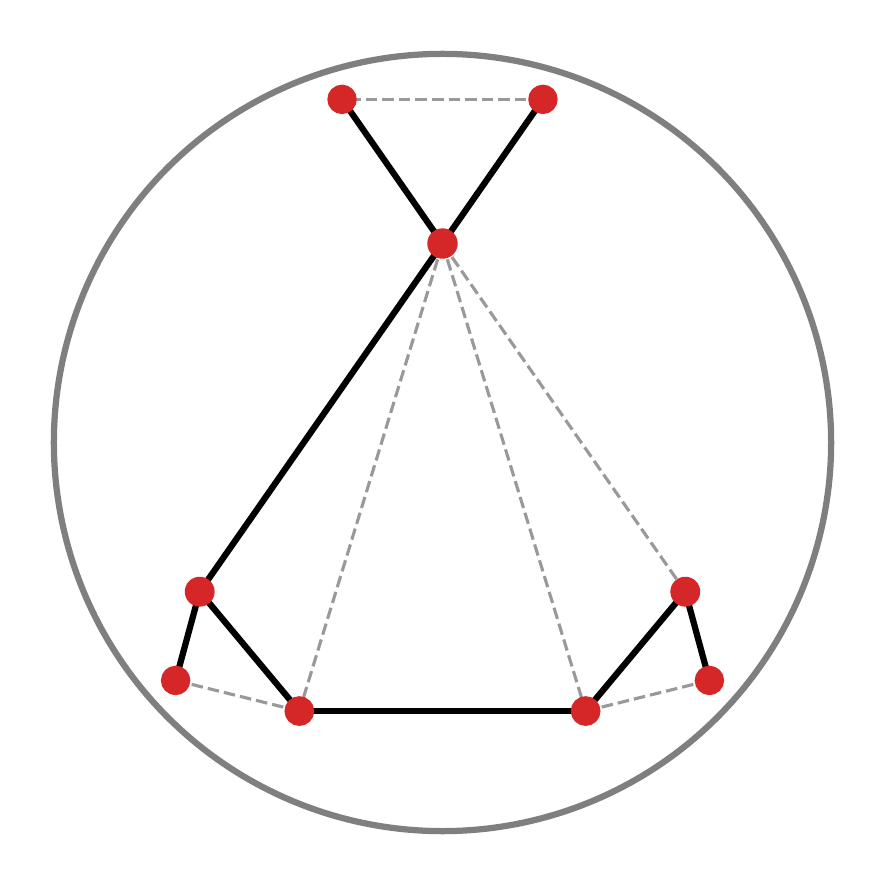}
     \caption*{Minimum Spanning Tree}
     \label{fig:samplesTable1}
 \end{subfigure}
 \hfill
 \begin{subfigure}[t]{0.3\linewidth}
     \centering
     \includegraphics[width=\linewidth]{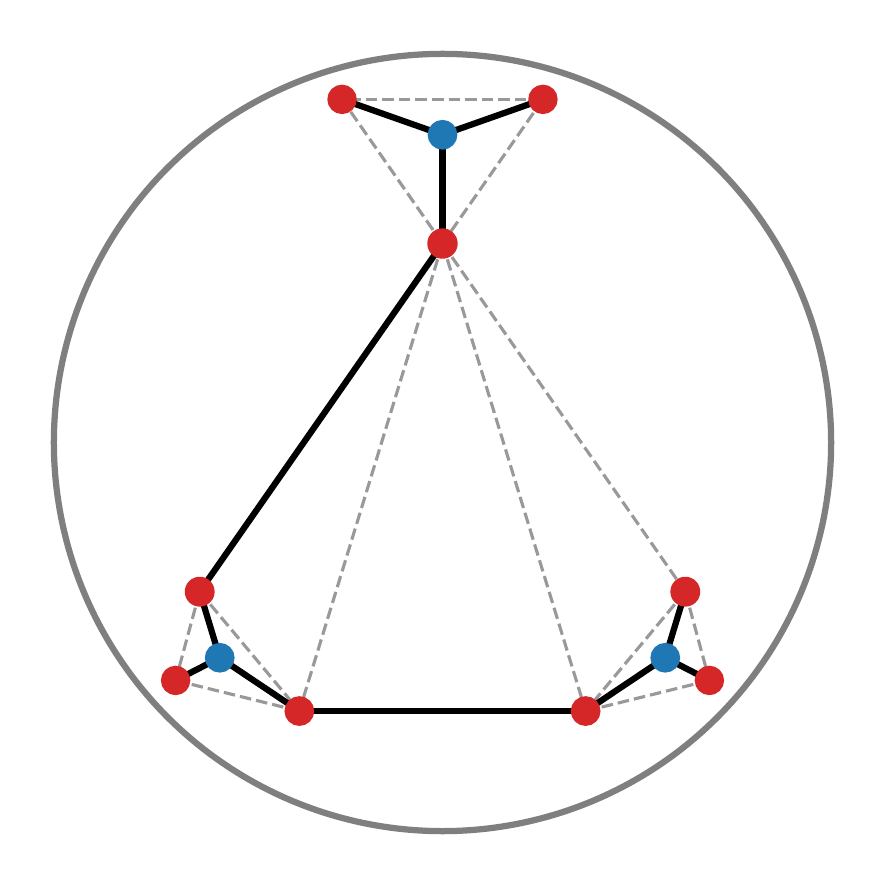}
      \caption*{Vanilla HyperSteiner}
     \label{fig:samplesTable2}
 \end{subfigure}
\hfill
 \begin{subfigure}[t]{0.3\linewidth}        
     \centering
     \includegraphics[width=\linewidth]{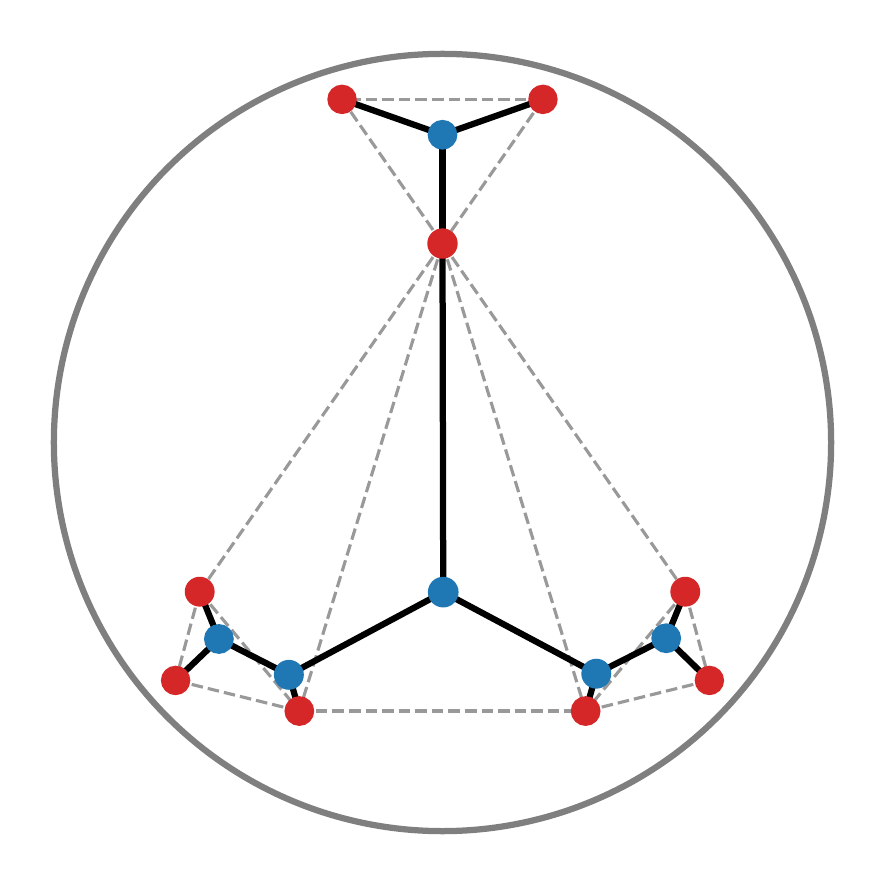}
     \caption*{Randomized HyperSteiner (ours)}
     \label{fig:samplesTable3}
 \end{subfigure}
\caption{Comparison of MST and heuristic SMTs using deterministic versus stochastic sampling over Delaunay triangles. Red ({\color{RedLight}$\newmoon$}) denotes terminals, blue ({\color{NavyBlue}$\newmoon$}) denotes Steiner points, and dashed lines correspond to the auxiliary hyperbolic DT. The deterministic (vanilla) HyperSteiner exhibits myopic behavior, whereas our randomized variant achieves superior global topology and reduces total tree length.}
\label{fig:samplesTables}
\end{figure*}

The principle of \emph{maximum parsimony}, in line with Occam’s razor, states that among competing explanations consistent with the evidence, the simplest should be preferred. In phylogenetics, for example, this principle manifests through tree-based models, where minimality of length provides a natural notion of parsimony \citep{Fitch1971maximumparsimony}. Steiner Minimal Trees (SMTs) formalize this objective by seeking the shortest tree connecting a set of points, potentially augmented with additional \emph{Steiner points}. In contrast to the Minimum Spanning Tree (MST), which only uses the input terminals, SMTs may reduce total length substantially. This makes SMTs a particularly compelling model when inferring hierarchical structures from data, where the ground truth tree is typically unknown, while tree length is an interpretable and measurable quantity that allows systematic comparison across methods. From a broader perspective, our main contribution lies in non-Euclidean computational geometry, with biological hierarchical inference serving only as one potential application of the proposed framework.

\renewcommand{\thefootnote}{\fnsymbol{footnote}}
\footnotetext[1]{\raggedright Equal contribution. E-mails: $<$medbouhi@kth.se$>$, $<$a.garciacastellanos@uva.nl$>$.}

\emph{Hyperbolic geometry} offers a suitable ambient space for tree inference, as it can embed hierarchical and exponentially growing structures with arbitrary low distortion compared to the Euclidean space \citep{sarkar2011low}. This has motivated the development of machine learning models grounded in hyperbolic geometry, with applications in domains such as language \citep{tifrea2018poincar, nickel2017poincare}, biology \citep{Klimovskaia2020SingleCellPoincareMap, Zhou2021HyperbolicGeometryGeneExpression}, games \citep{cetin2022hyperbolic}, robotics \citep{Jaquier2024HGPLVM}, and neurosciences \citep{Zhou2018HyperbolicGeometryOlfactorySpace, Taleb2025TowardsDiscoveringHierarchyOlfactoryHyperbolic}. However, while hyperbolic embeddings effectively capture hierarchy, they do not provide direct algorithms for reconstructing explicit shortest trees embedded within hyperbolic space. Moreover, the SMT problem is NP-hard \citep{garey1977complexity}, making exact solutions infeasible in hyperbolic settings and necessitating heuristic approaches. A first step in this direction was the recently proposed HyperSteiner algorithm \citep{garcia2025hypersteiner}, which adapts a classical Delaunay-based heuristic to hyperbolic geometry, achieving an efficient construction of approximate SMTs. Yet, like other deterministic methods, the vanilla HyperSteiner inherits a major limitation: its MST-based heuristic tends to become trapped in locally optimal but globally suboptimal tree topologies.

In Euclidean space, this limitation has motivated stochastic variants of Delaunay triangulation (DT) heuristics, \citep{Beasley1994DTHeuristicSMT, Laarhoven2011Randomized}, which showed that randomized Delaunay triangles selection can substantially improve solution quality by exploring a broader range of candidate topologies. Extending these ideas to hyperbolic space and combining them with novel theoretical guarantees enabling the use of Riemannian gradient descent \citep{Bonnabel2013RSGD} for global optimization, we revisit HyperSteiner from a stochastic perspective and propose the first randomized SMT heuristic tailored to hyperbolic geometry. As illustrated in Figure~\ref{fig:samplesTables}, while the vanilla HyperSteiner (center) improves upon the MST baseline (left) through local Steiner point insertion, it becomes \emph{myopic} due to its deterministic MST-based strategy, preventing exploration of alternative tree structures that could yield significantly shorter total lengths (right). Our work addresses this challenge by introducing stochasticity, thereby broadening topological exploration and reducing the risk of suboptimal solutions, yielding shorter tree configurations. Our contributions are as follows:

\vspace{-7pt}
\begin{itemize}
    \item We propose \textbf{Randomized HyperSteiner} (RHS), a stochastic Delaunay triangulation heuristic for constructing Steiner Minimal Trees in hyperbolic space.
    \item We extend the Euclidean uniqueness theorem to hyperbolic space, proving that Steiner trees have a unique minimizing configuration for fixed topology and terminals.
    \item We empirically demonstrate that RHS consistently outperforms Minimum Spanning Tree, Neighbour Joining, and original HyperSteiner across synthetic and real datasets.
    \item We show that RHS achieves reductions over MST values up to 43\%, approaching the theoretical upper bound of 50\% in near-boundary settings.
\end{itemize}

\section{Related work}
\label{sec:rel}

We review Delaunay triangulation heuristics for Euclidean Steiner Minimal Trees, non-Euclidean generalizations, and hierarchical inference methods. We add a review on hyperbolic machine learning in the Appendix~(Section~\ref{app:hyp_ml}).

\textbf{Euclidean Delaunay-Based Heuristic SMTs.} Heuristic methods for Euclidean Steiner Minimal Trees (SMTs) commonly exploit geometric structures, particularly Delaunay triangulations (DT), to identify promising local configurations for optimization \citep{zachariasen_concatenation-based_1999}. These algorithms typically follow a three-phase structure: \emph{Expansion} selects triangles from the DT as candidates for local 3-point Steiner problems (i.e., Fermat points); \emph{Heuristic Construction} assembles local solutions into a global tree; and \emph{Exploration Policy} determines refinement strategies.
The seminal Smith-Lee-Liebman (SLL) algorithm \citep{smith1981n} exemplifies this framework through deterministic MST-based expansion. SLL selects Delaunay triangles containing two MST edges as candidates for Steiner point insertion. When all internal angles are below 120°, the triangle is replaced by its Fermat point, minimizing total edge length while ensuring 120° edge angles \citep{Gilbert1968SMT}. This yields locally-optimal full Steiner topologies with high computational efficiency.
However, deterministic expansion strategies suffer from limited topological exploration, often becoming trapped in local minima for clustered or symmetric point configurations \citep{Laarhoven2011Randomized}. To address this limitation, researchers have introduced stochasticity at multiple algorithmic levels. \citet{Beasley1994DTHeuristicSMT} pioneered this direction with two key innovations: iterative expansion that recomputes the DT after each Steiner point addition, enabling dynamic geometric adaptation; and simulated annealing as stochastic exploration policy. Beasley's method examines all DT triangles at each expansion step, ensuring comprehensive local configuration analysis. \citet{Laarhoven2011Randomized} further advanced this approach by demonstrating that stochastic triangle sampling achieves comparable results with significantly reduced computational cost. Instead of exhaustively processing all triangles, their method randomly samples Delaunay simplices during expansion, improving solution quality through broader topological exploration while maintaining efficiency. Similarly, our proposed approach employs stochastic exploration to overcome the limitations of deterministic hyperbolic methods such as \cite{garcia2025hypersteiner}

\textbf{Non-Euclidean SMTs.} Steiner Minimal Tree problems have been extended to various non-Euclidean metric spaces, including rectilinear SMTs in Manhattan metrics \citep{hanan1966steiner}, orientation-dependent metrics \citep{yan1997steiner}, and Riemannian manifolds \citep{ivanov2000steiner, logan2015steiner}. Euclidean heuristics have been successfully adapted to spaces of constant curvature, with SLL algorithm extensions to spherical \citep{dolan1991minimal} and hyperbolic \citep{garcia2025hypersteiner} geometries. Alternative approaches include analytic constructions for hyperbolic Steiner trees using hyperbolic trigonometry in the upper half-plane model \citep{halverson2005steiner}, though these become computationally intractable for larger terminal sets.
To address scalability limitations, \citet{garcia2025hypersteiner} introduced HyperSteiner, an efficient heuristic that adapts Delaunay-based SMT construction to hyperbolic space. Operating in the Klein model, their method employs angle-based insertion criteria for Steiner points, achieving both effectiveness and scalability (see Algorithm~\ref{alg:og_hyper}). However, like its Euclidean counterparts, this deterministic approach suffers from myopic optimization, limiting topological exploration and potentially yielding suboptimal configurations.
Extending \citet{Beasley1994DTHeuristicSMT, Laarhoven2011Randomized}, we introduce a stochastic framework that samples candidate Steiner topologies through hyperbolic Delaunay triangulations and refines them via Riemannian gradient descent \citep{Bonnabel2013RSGD}. As demonstrated in Figure~\ref{fig:samplesTables} (right), our Randomized HyperSteiner overcomes the topological limitations of deterministic methods, discovering superior global configurations through strategic randomization combined with geometric optimization in hyperbolic space.

\textbf{Hierarchical Inference.} Inferring hierarchical structure from observational data is central to fields such as evolutionary biology, phylogenetics, and cell lineage reconstruction, where tree-like relationships describe divergence processes \citep{Cieslik2006shortestconnectivity, Gong20218benchmarkcelllineage}. Traditional methods reconstruct trees from sequence data through pairwise distance estimation---as in Unweighted Paired Group Mean Arithmetic \citep{Sokal1958UPGMA} and Neighbor Joining \citep{Saitou1987neighborjoining}---or probabilistic character-based models such as Maximum Parsimony \citep{Fitch1971maximumparsimony}, Maximum Likelihood \citep{Felsenstein1981MaximumLikelihoodPhylogenetic}, and Bayesian frameworks \citep{Huelsenbeck2001BayesianPhylogenetic}.
When formulated to minimize mutational events, phylogenetic inference can be modeled as a Steiner tree problem \citep{Foulds1979steinerminimalphylogenetictree}. This connection has been explored in Euclidean geometry \citep{Blelloch1997PhylogeneticSteinerTree, Brazil2009PhylogeneticSteinerTree, FA2022steinertreelineagecell} and is now investigable in hyperbolic space through \citet{garcia2025hypersteiner}. Recent work has applied variational Bayesian inference in hyperbolic geometry \citep{Mimori2023Geophy, Macaulay2024Dodonaphy, Chen2025hyperbolicVCSMC}, learning latent representations of molecular data conditioned on biological evolutionary models. However, the resulting phylogenetic trees are not necessarily embedded within hyperbolic space, nor is their construction guided by tree minimality objectives.
In contrast, our method applies generally to arbitrary datasets with hyperbolic representations, constructing minimal trees directly embedded within the same hyperbolic space. This enables hierarchical structure inference compatible with existing hyperbolic embedding techniques, such as those developed for biological data \citep{Klimovskaia2020SingleCellPoincareMap, Zhou2021HyperbolicGeometryGeneExpression}.
\section{Background}
\label{sec:bg}

We focus on the Klein-Beltrami model, Steiner minimal trees, and Delaunay triangulations.

\subsection{The Klein-Beltrami Model}

The hyperbolic space of dimension $n$ is a complete, simply-connected Riemannian manifold with constant negative curvature. Several models exist to represent this geometry. Among them, the Klein-Beltrami model is particularly convenient due to its straight-line geodesics facilitating computations. In this model, points lie in the open unit ball $\mathbb{K}^{n}= \left\{z \in \RR^n \mid\|z\|_2 <1\right\}$, and the Riemannian metric is defined by 
\begin{equation}
 g(z) = \frac{-1}{\langle z,z \rangle}I_{n} + \frac{1}{\langle z,z \rangle^2} z\otimes z,
 \end{equation}
where $\langle \cdot, \cdot \rangle:=-1+\langle \cdot, \cdot \rangle_E$ denotes the Lorentzian Inner Product in homogeneous coordinates (with $\langle \cdot, \cdot \rangle_E$ being the standard Euclidean inner product), $I_n$ the identity matrix, and $\otimes$ the tensor product. The corresponding geodesic distance between two points $x, y \in \mathbb{K}^n$ is given by:
\begin{equation}\label{eq:KleinDistance}
d(x,y)= \operatorname{arccosh}\left(\frac{-\langle x,y\rangle}{\sqrt{\langle x,x \rangle \langle y,y \rangle}}\right).
\end{equation}

\subsection{Steiner Minimal Trees}
\label{sec:steiner_info}

Let $\mathcal{X}$ be a metric space equipped with a distance function $d \colon \mathcal{X} \times \mathcal{X} \to \mathbb{R}_{\geq 0}$. The \emph{Steiner minimal tree} (SMT) problem seeks to find the shortest network connecting a given set of points (the \emph{terminals}), potentially including additional auxiliary points (the \emph{Steiner points}) not in the original input. This can be formally stated as follows:
\begin{prob}\label{prob:mst}
Given a finite set of points $P \subseteq \mathcal{X}$, find a finite undirected connected graph $\text{\rm SMT}(P)=(V, E)$ embedded in $\mathcal{X}$ such that $P \subseteq V$ and the total edge length $L(\text{\rm SMT}(P))$ is minimized, with:
\begin{equation}
L(\text{\rm SMT}(P)) = \sum_{(x,y) \in E} d(x,y).
\end{equation}
\end{prob}
The optimal solution is always acyclic and hence forms a tree. Thus, the SMT problem can be reformulated as finding the tree with minimal length that connects a set of points. The number of Steiner points is bounded by $|P| - 2$, and the tree is said to be a \emph{full Steiner tree} (FST) when this bound is reached. For instance, an FST for three terminals contains a single Steiner point located at the \emph{Fermat point}, while configurations with four terminals admit two Steiner points.

The SMT problem is known to be NP-hard \citep{garey1977complexity}. If no Steiner points are allowed (that is, $V = P$), the solution is simply the \emph{Minimum Spanning Tree} (MST), which can be efficiently computed via algorithms such as the one of \cite{kruskal1956shortest} with computational time complexity of $\mathcal{O}(|P|^2 \log |P|)$.

SMTs always achieve a length less than or equal to that of MSTs. The quality of the improvement is quantified by the \emph{Steiner ratio}.

\begin{defn}[Steiner ratio]\label{locstdef}
For a finite subset $P \subseteq \mathcal{X}$, the \emph{local Steiner ratio} is defined as
\[
\rho(P) = \frac{L(\mathrm{SMT}(P))}{L(\mathrm{MST}(P))} \leq 1.
\]
The \emph{global Steiner ratio} of $\mathcal{X}$ is the infimum of $\rho(P)$ over all finite subsets $P \subseteq \mathcal{X}$.
\end{defn}

The Gilbert–Pollak Conjecture states that for the Euclidean plane $\mathcal{X} = \mathbb{R}^2$, the global Steiner ratio equals $\sqrt{3} / 2$. The conjecture is still open to this date \citep{ivanov2012steiner}. For $n$-dimensional Riemannian manifolds, the global Steiner ratio does not exceed the one of the Euclidean space $\mathbb{R}^n$ of the same dimension \citep{cieslik2001steiner}. Notably, in the hyperbolic plane, the global Steiner ratio has been shown to be equal to $\frac{1}{2}$ \citep{innami_steiner_2006}. In other words, a Steiner tree in hyperbolic space can achieve in theory a total length up to 50\% shorter than that of the corresponding MST, whereas in the Euclidean plane the maximum achievable reduction is approximately 13.4\%.

\subsection{Delaunay Triangulation}
As mentioned in Section~\ref{sec:rel}, we focus on the family of heuristics that computes SMT by restricting the search space of potential tree topologies through \emph{Delaunay triangulation} (DT). Due to its relevance in this work, we recall its definition below in any metric space $\mathcal{X}$ equipped with a distance $d$. 
\begin{defn}
The \emph{Voronoi cell} of $p \in P \subseteq \mathcal{X}$ is defined as:
\begin{equation}
\label{eq:v_cell}
V(p) = \{ x\in \mathcal{X} \ | \ \forall q \in P \  d(x,q) \geq d(x,p) \}.
\end{equation}
The \emph{Delaunay triangulation} is defined as the collection of simplices with vertices $\sigma \subseteq P$ such that ${\bigcap_{p \in \sigma } V(p) \not = \emptyset}$.
\end{defn} 

This definition extends to hyperbolic space by replacing the distance function $d$ with the hyperbolic metric given in Equation~\ref{eq:KleinDistance}.

\section{Method}
\label{sec:meth}

Algorithm~\ref{alg:rad_hyper} summarizes our method. For detailed descriptions of the sub-protocols, see Appendix (Section~\ref{app:heuristic}). The Randomized HyperSteiner follows the three characteristic stages of DT-based methods (Section~\ref{sec:rel}):

\begin{algorithm}[th!]
\caption{Randomized HyperSteiner}
\label{alg:rad_hyper}
\begin{algorithmic}
\vspace{1mm}
\STATE \textbf{Input}: Terminal point set $P\subset\mathbb{K}^2$, maximum iterations $N^*$, insertion probability range $[l,u]$
\vspace{1mm}
\STATE Initialize $n \leftarrow 1$, $S \leftarrow \emptyset$, $S^* \leftarrow \emptyset$, $T^* \leftarrow \text{MST}(P)$.
\WHILE{$n \leq N^*$}
\STATE $\triangleright\, \textsc{\textbf{Fast Iterative Stochastic Expansion}}$\vspace{-5pt}
     \begin{tcolorbox}[colback=lightblue, colframe=cyan!70!black, boxrule=0.5pt]
     \FOR{$\lfloor 2\sqrt{n} - 1 \rfloor$ iterations}
    \STATE Compute hyperbolic Delaunay triangulation $\mathcal{T} = \text{DT}(P \cup S)$
    \STATE Randomly generate the insertion probability, $p$, from the range $[l,u]$.
    \FOR{each triangle $\tau \in \mathcal{T}$}
            \STATE  With probability $p$, compute the \textbf{barycenter} point $b$ of $\tau$ and  let $S = S \cup \{b\}$. 
    
    \ENDFOR
\ENDFOR
 \end{tcolorbox}
    \STATE $\triangleright\, \textsc{\textbf{Heuristic Construction}}$\vspace{-5pt}
\begin{tcolorbox}[colback=lightyellow, colframe=orange!70!black, boxrule=0.5pt]

\STATE $\triangleright\, \textsc{\textbf{Reduction Step}}$  via Algorithm~\ref{alg:red}
 \STATE $S, T \leftarrow \textsc{Degree-Condition}(P, S)$.
 \vspace{2pt}
\STATE \colorbox{lightpink}{Optimize $S'$ via \textbf{Riemannian GD}.} \vspace{5pt}
    
    \STATE $\triangleright\, \textsc{\textbf{Expansion Step}}$  via Algorithm~\ref{alg:edge-in}
    \STATE $S, T \leftarrow \textsc{Angle-Condition}(T, P, S)$.
    \vspace{2pt}
    \STATE \colorbox{lightpink}{Optimize $S'$ via \textbf{Riemannian GD}.}
\vspace{5pt}
    \IF{$L(T) < L(T^*)$}
    \STATE $\triangleright\, \textsc{\textbf{Refinement Step}}$  via Algorithm~\ref{alg:red}
     \STATE $S, T \leftarrow \textsc{Degree-Condition}(P, S)$.
     \vspace{2pt}
\STATE \colorbox{lightpink}{Optimize $S'$ via \textbf{Riemannian GD}.} \vspace{5pt}
    \ENDIF
    \end{tcolorbox}

    \STATE $\triangleright\, \textsc{\textbf{Exploration Policy}}$\vspace{-5pt}
\begin{tcolorbox}[colback=lightgreen, colframe=green!50!black, boxrule=0.5pt]
    \IF{$ L(T) < L(T^*) $}
        \STATE Set $S^*\leftarrow S$, $T^* \leftarrow T$,  $n\leftarrow 1$.
    \ELSE
       \STATE Set $S\leftarrow S^*$,  $n\leftarrow n+1$.
    \ENDIF
\end{tcolorbox}
\ENDWHILE
\vspace{1mm}
\STATE \textbf{Output}: A heuristic SMT $T^*$ with Steiner points~$S^*$ 
\vspace{1mm}
\end{algorithmic}
\end{algorithm}

\vspace{-7pt}

    \paragraph{Expansion.} At iteration $n$, given the current best Steiner points, we iteratively compute the hyperbolic DT and probabilistically sample triangles. Following \citet{Laarhoven2011Randomized}, we initialize Steiner point candidates with hyperbolic barycenters for fast computation. The step function $\lfloor 2\sqrt{n} - 1 \rfloor$ controls the number of expansions per exploration trial.
\vspace{-7pt}
    \paragraph{Heuristic Construction.} Using the expanded candidate set, construct SMT approximations that respect the following properties:  
    \begin{itemize}
        \item \emph{Angle condition:} All angles in an SMT are at least $120^\circ$, with Steiner point angles exactly $120^\circ$.

        \item \emph{Degree condition:} All SMT vertices have degree at most three, with Steiner points having degree exactly three.

        \item \emph{Topology--geometry duality:} Given optimal Steiner point locations $S$, the Steiner tree is $\text{MST}(P\cup S)$. Conversely, given an optimal topology, Steiner point optimization becomes a convex problem minimizing tree length \citep{Laarhoven2011Randomized}.  
        
    \end{itemize}

    \paragraph{Exploration Policy.} If the new tree has shorter length than the best-so-far, update the solution and reset the iteration counter; otherwise, discard the proposal and increment $n$.

We next describe the three core computational components in detail:  
(i) hyperbolic Delaunay triangulation for expansion,  
(ii) local full Steiner tree (FST) solutions for heuristic construction, and  
(iii) global refinement of Steiner points via Riemannian Gradient Descent (GD).  

\subsection{Hyperbolic Delaunay Triangulation}
\label{sec:hyperdel}

The Klein-Beltrami model enables efficient computation of hyperbolic Delaunay triangulations, since geodesics are straight lines. In this model, hyperbolic Voronoi cells can be derived from \emph{Euclidean power diagrams} \citep{HyperbolicVoronoiDiagramsMadeEasy2010}, which generalize standard Voronoi diagrams.

Let $\mathcal{X}$ be a metric space with a distance function $d$. The power cell associated with point $p \in P$ and weight $r_p \geq 0$ is defined as:
\begin{equation*}
R(p) = \{x \in \mathcal{X} \mid d(x, p)^2 - r_p^2 \leq d(x, q)^2 - r_q^2, \ \forall q \in P\}.
\end{equation*}
When all weights vanish, power cells reduce to Voronoi cells.

The Klein-Beltrami model admits a non-isometric embedding into Euclidean space, allowing hyperbolic Voronoi cells to be computed as restricted Euclidean power cells. The following result formalizes this equivalence:
\begin{thm}[Nielsen and Nock, 2009]
Given $P \subseteq \mathbb{K}^n$, there exists an explicit set $\widetilde{P} \subseteq \RR^n $ and weights $\{ r_{\widetilde{p}}\}_{\widetilde{p} \in \widetilde{P}}$ such that the hyperbolic Voronoi cells of $P$ correspond to restrictions to $\mathbb{K}^n$ of power cells of $\widetilde{P}$.
\end{thm}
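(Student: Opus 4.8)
The plan is to pass to the hyperboloid (Minkowski) lift and show that both the hyperbolic Voronoi bisectors and the Euclidean power bisectors are affine hyperplanes once restricted to $\mathbb{K}^n$, and then to match them pair-by-pair. First I would lift each $z \in \mathbb{K}^n$ to homogeneous coordinates $\tilde z = (z,1)$ with the Lorentzian form normalized so that $\langle \tilde z, \tilde z\rangle = \|z\|_2^2 - 1 < 0$. Then the distance formula \eqref{eq:KleinDistance} becomes $\cosh d(x,p) = \frac{1 - x\cdot p}{\sqrt{(1 - \|x\|_2^2)(1 - \|p\|_2^2)}}$, where $x\cdot p$ denotes the ordinary Euclidean dot product. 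Because $\operatorname{arccosh}$ is strictly increasing, the bisector $\{x : d(x,p) = d(x,q)\}$ coincides with $\{x : \cosh d(x,p) = \cosh d(x,q)\}$, and on this locus the common factor $\sqrt{1 - \|x\|_2^2}$ cancels. Writing $\alpha_p = (1 - \|p\|_2^2)^{-1/2}$, the bisector condition collapses to the affine equation $\alpha_p(1 - x\cdot p) = \alpha_q(1 - x\cdot q)$ in $x$.

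Next I would recall that a power bisector is also affine: for sites $s,t$ with weights $r_s, r_t$, the condition $\|x - s\|_2^2 - r_s^2 = \|x - t\|_2^2 - r_t^2$ loses its $\|x\|_2^2$ term and reduces to $-2\,x\cdot s + \pi_s = -2\,x\cdot t + \pi_t$, where $\pi_s := \|s\|_2^2 - r_s^2$. The explicit construction is then to set $s = \alpha_p\, p$ for each $p \in P$ and to choose the weight so that $\pi_s = 2(\alpha_p - 1)$. A short computation using $\|s\|_2^2 = \alpha_p^2\|p\|_2^2 = \alpha_p^2 - 1$ gives $r_s^2 = \|s\|_2^2 - \pi_s = (\alpha_p - 1)^2 \geq 0$, so the weights are real and the sites lie in $\RR^n$ as required. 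Substituting these choices into the power-bisector equation reproduces exactly $\alpha_p(1 - x\cdot p) = \alpha_q(1 - x\cdot q)$, matching the hyperbolic bisector.

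To upgrade this matching of bisectors to a matching of cells, I would verify that the two affine forms induce the same orientation, i.e.\ that the half-space $\{d(x,p) \le d(x,q)\}$ equals $\{\|x-s\|_2^2 - r_s^2 \le \|x-t\|_2^2 - r_t^2\}$ on $\mathbb{K}^n$. Tracing the signs through the cancellations above, both inequalities reduce to $\alpha_p(1 - x\cdot p) \le \alpha_q(1 - x\cdot q)$, so they agree. Since each Voronoi cell and each power cell is an intersection over all competitors of these half-spaces, agreement of every pairwise half-space (restricted to $\mathbb{K}^n$) yields $V(p) = R(\alpha_p p) \cap \mathbb{K}^n$ for all $p \in P$, which is precisely the claim.

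The main obstacle is bookkeeping rather than conceptual: one must fix the Lorentzian sign convention of the lift consistently so that $\langle \tilde p, \tilde p\rangle < 0$ and all radicands stay positive on $\mathbb{K}^n$, and one must confirm nonnegativity of the weights, which the clean identity $r_s = \alpha_p - 1$ (with $\alpha_p \ge 1$ since $\|p\|_2 < 1$) settles immediately. I would stress that the restriction to $\mathbb{K}^n$ is essential and not a mere technicality: the cancellation of $\sqrt{1 - \|x\|_2^2}$ and indeed the whole distance formula are valid only inside the ball, so outside the model the power diagram of $S$ carries no hyperbolic meaning and must be clipped.
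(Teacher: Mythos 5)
The paper does not prove this statement at all: it is imported verbatim as a citation to Nielsen and Nock (2009), so there is no internal proof to compare against. Your argument is, however, a correct and self-contained reconstruction of the standard linearization proof underlying that reference. The key steps all check out: in the Klein model $\cosh d(x,p) = \frac{1 - x\cdot p}{\sqrt{(1-\|x\|_2^2)(1-\|p\|_2^2)}}$, the factor $\sqrt{1-\|x\|_2^2}$ is common to all sites and cancels in every pairwise comparison, and since $1 - x\cdot p > 0$ on the ball and $\cosh$ is increasing on $[0,\infty)$, the inequality $d(x,p)\le d(x,q)$ is equivalent to the affine inequality $\alpha_p(1-x\cdot p)\le \alpha_q(1-x\cdot q)$ with $\alpha_p=(1-\|p\|_2^2)^{-1/2}$, not merely the equality of bisectors — this is exactly the orientation check needed to pass from bisectors to cells, and you do it. Your explicit choice $s=\alpha_p\,p$, $\pi_s = 2(\alpha_p-1)$ gives $r_s^2=(\alpha_p-1)^2\ge 0$, so the weights are admissible under the paper's definition of a power cell (which requires $r_s\ge 0$), and substituting back reproduces the same affine form on both sides. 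One cosmetic remark: the Minkowski/hyperboloid lift in your opening is not actually load-bearing — everything follows from the closed-form distance \eqref{eq:KleinDistance} and elementary algebra — but it does no harm and correctly fixes the sign conventions. The only thing I would add for completeness is a one-line justification that $1 - x\cdot p>0$ (Cauchy--Schwarz on the open unit ball), since the cancellation and the monotonicity argument both rely on the positivity of the quantities being compared.
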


Specifically, the set $\widetilde{P}$ (for computing power cells) and weights $\{ r_{\widetilde{p}}\}_{\widetilde{p} \in \widetilde{P}}$ are derived from points $p \in P$ as follows:
\begin{equation}
\widetilde{p} = \frac{p}{2\sqrt{1- \|p\|^2}}, \hspace{3em} r_{\widetilde{p}}^2=\frac{\|p\|^2}{4(1- \|p\|^2)} - \frac{1}{\sqrt{1- \|p\|^2}}. 
\end{equation}

Since power diagrams are efficiently computable in Euclidean space, this allows Delaunay triangulations in hyperbolic geometry to be obtained using standard computational geometry tools.

\subsection{Local FST Solutions via Fermat Points}
\label{sec:fermatpts}

Local refinement of candidate Steiner trees can be reduced to solving small full Steiner tree (FST) subproblems. The previously presented \textit{angle condition}, i.e., incident edges at Steiner points meet at $120^\circ$, motivates the introduction of \emph{isoptic curves}: 

\begin{defn}
\label{def:isoptic}
Let $\mathcal{X}$ be a complete Riemannian surface and $x,y \in \mathcal{X}$. The \emph{isoptic curve} with angle $\alpha$, denoted $C_\alpha(x,y)$, is the locus of points $s \in \mathcal{X}$ such that a geodesic from $s$ to $x$ forms an angle $\alpha$ at $s$ with a geodesic from $s$ to $y$.  
\end{defn}

Isoptic curves reduce to circular arcs when $\mathcal{X} = \mathbb{R}^2$. As a direct consequence, Fermat points can be characterized by intersecting three isoptic curves at angle $2\pi/3$: for terminals $P=\{x,y,z\} \subseteq \mathcal{X}$, the Fermat point $s$ satisfies
\begin{equation}\label{eq:inters}
\{ s\} = C_{\frac{2 \pi}{3}}(x,y) \cap C_{\frac{2 \pi}{3}}(y,z) \cap C_{\frac{2 \pi}{3}}(z,x).
\end{equation}
This generalizes the classical Torricelli construction in the Euclidean plane (Figure~\ref{fig:isopticsEuc}) to hyperbolic geometry (Figure~\ref{fig:isopticsKlein}).  

\begin{figure}[h!]
\centering
  \begin{subfigure}[t]{0.5\linewidth}
     \centering
     \includegraphics[width=\linewidth]{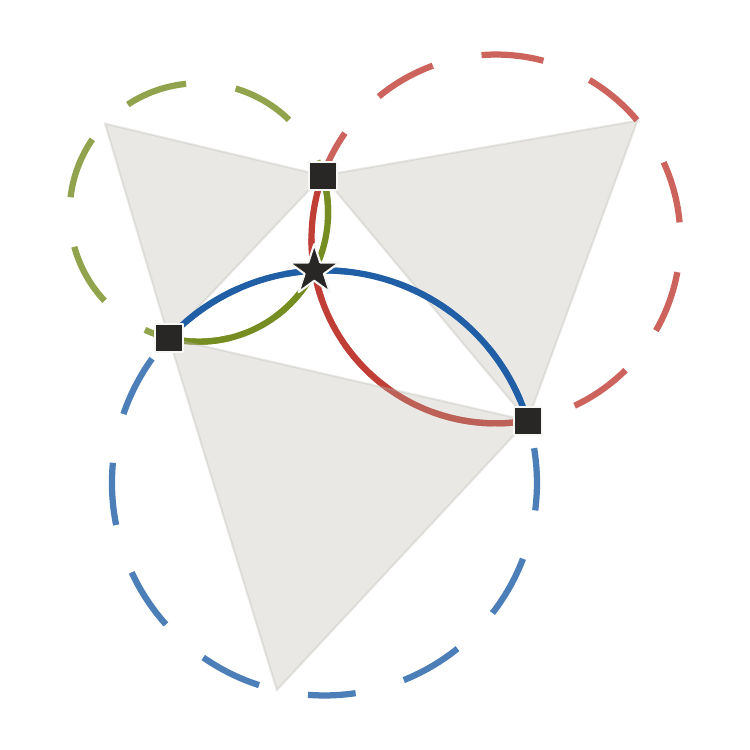}
     \caption{Euclidean, with Torricelli's construction}
     \label{fig:isopticsEuc}
 \end{subfigure}\hfill
 \begin{subfigure}[t]{0.5\linewidth}
     \centering
     \includegraphics[width=\linewidth]{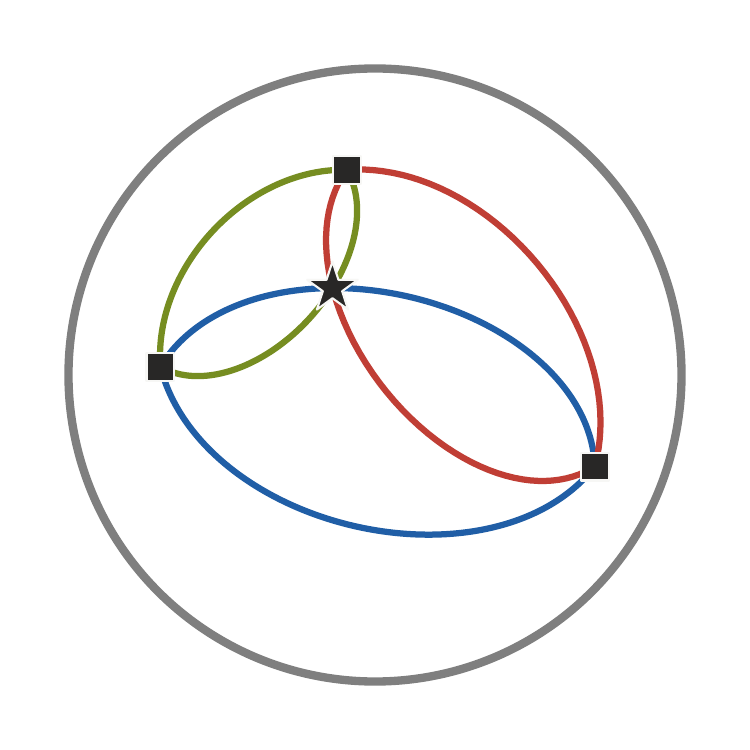}
     \caption{Klein-Beltrami}
     \label{fig:isopticsKlein}
 \end{subfigure}
\caption{Example of isoptic curves for $\alpha=2\pi/3$. Each color represents a different curve. Square ($\,\blacksquare\,$) denotes terminals while star ($\bigstar$) denotes Steiner points. Source: \cite{garcia2025hypersteiner}.}
\label{fig:isoptics}
\end{figure}

We now provide an explicit formula for isoptic curves in the Klein-Beltrami model.  

\begin{prop}[García-Castellanos\,et\,al.,\,2025]\label{prop:isoptic}

Given two points $x, y \in \mathcal{X}=\mathbb{K}^2$ and an angle $0<\alpha<\pi$, the isoptic curve $C_\alpha(x,y)$ in the Klein-Beltrami model is given by $\varphi_{x,y, \alpha}(s) = 0$, where: 
 \begin{align*}
&  \varphi_{x,y, \alpha}(s) =  \langle x, s\rangle \langle y,s \rangle - \langle x,y \rangle \langle s,s \rangle - \\
&- \cos(\alpha)\sqrt{\left(\langle x,s \rangle^2 - \langle x,x \rangle\langle s,s \rangle \right) \left(\langle y,s \rangle^2 - \langle y,y \rangle\langle s,s \rangle \right)}.
\end{align*}
\end{prop}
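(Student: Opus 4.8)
The plan is to pass to homogeneous (hyperboloid) coordinates and reduce the isoptic condition to an explicit expression for the hyperbolic angle at $s$ between the geodesics $sx$ and $sy$. Since the Klein--Beltrami distance (Equation~\ref{eq:KleinDistance}) is governed entirely by the Lorentzian form $\langle\cdot,\cdot\rangle$ on homogeneous representatives, every quantity I write will be bilinear in each of $x$, $y$, $s$; the resulting condition $\varphi_{x,y,\alpha}(s)=0$ is then manifestly invariant under rescaling of representatives and therefore descends to a well-defined locus on $\mathbb{K}^2$. The heart of the argument is a single formula: for timelike representatives $x,y,s$, the angle $\alpha$ at $s$ in the geodesic triangle $sxy$ satisfies
\[
\cos\alpha = \frac{\langle x,s\rangle\langle y,s\rangle - \langle x,y\rangle\langle s,s\rangle}{\sqrt{\bigl(\langle x,s\rangle^2 - \langle x,x\rangle\langle s,s\rangle\bigr)\bigl(\langle y,s\rangle^2 - \langle y,y\rangle\langle s,s\rangle\bigr)}},
\]
after which clearing the denominator yields $\varphi_{x,y,\alpha}(s)=0$ directly.

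To establish this formula I would first recall that in the hyperboloid model geodesics through $s$ are intersections of the surface with $2$-planes through the origin, so the initial tangent direction of the geodesic from $s$ toward $x$ is the $\langle\cdot,\cdot\rangle$-orthogonal projection of $x$ onto $T_s = \{v : \langle s,v\rangle = 0\}$, namely $v_x = x - \tfrac{\langle s,x\rangle}{\langle s,s\rangle}\,s$, and analogously $v_y$. Because the induced metric on $T_s$ is the restriction of $\langle\cdot,\cdot\rangle$, the hyperbolic angle is $\cos\alpha = \langle v_x,v_y\rangle / \sqrt{\langle v_x,v_x\rangle\langle v_y,v_y\rangle}$. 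A direct expansion gives $\langle v_x,v_y\rangle = \langle x,y\rangle - \langle s,x\rangle\langle s,y\rangle/\langle s,s\rangle$ and $\langle v_x,v_x\rangle = \langle x,x\rangle - \langle s,x\rangle^2/\langle s,s\rangle$; factoring $1/\langle s,s\rangle$ out of the numerator and out of each factor under the root, the powers of $\langle s,s\rangle$ cancel and the displayed formula drops out. Rewriting $\cos\alpha\,\sqrt{AB}=N$ as $N-\cos\alpha\sqrt{AB}=0$ is exactly the assertion $\varphi_{x,y,\alpha}(s)=0$.

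The main obstacle is not the algebra but the geometric justification and the sign bookkeeping. I would need to verify: (i) that central projection identifies the straight-line geodesics of the Klein ball with hyperboloid geodesics, so that ``the geodesic from $s$ to $x$'' is correctly modeled by the $2$-plane through $s$ and $x$; (ii) that the angle obtained from the Lorentzian form on $T_s$ is the \emph{intrinsic} hyperbolic angle, rather than a Euclidean one---this follows because the normalized representatives ($\langle s,s\rangle=-1$) constitute the hyperboloid model, which is isometric to the Klein--Beltrami model, and angle is an isometry invariant; and (iii) the signs. Since the representatives are timelike, $\langle s,s\rangle<0$, and the reverse Cauchy--Schwarz inequality on timelike vectors gives $\langle x,s\rangle^2 - \langle x,x\rangle\langle s,s\rangle > 0$, which both guarantees that $T_s$ is spacelike (so the induced metric is positive definite) and that the radicand is positive. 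Carefully tracking the negative factor $1/\langle s,s\rangle$ through the quotient is what produces the correct overall sign of the numerator $\langle x,s\rangle\langle y,s\rangle - \langle x,y\rangle\langle s,s\rangle$, and confirming that the admissible range $0<\alpha<\pi$ corresponds to $\cos\alpha\in(-1,1)$---so the curve is nonempty and the square root is the relevant branch---is the delicate part of the verification.
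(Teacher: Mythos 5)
The paper does not actually prove Proposition~\ref{prop:isoptic}; it imports it from \cite{garcia2025hypersteiner}, so there is no internal proof to compare against. Your derivation is nonetheless correct and is the standard route: lift to Lorentzian homogeneous coordinates, take $v_x = x - \tfrac{\langle s,x\rangle}{\langle s,s\rangle}s$ and $v_y$ as the projected tangent directions in the spacelike plane $T_s$, and compute $\cos\alpha = \langle v_x,v_y\rangle/\sqrt{\langle v_x,v_x\rangle\langle v_y,v_y\rangle}$. The one delicate step you flag does work out: writing $\langle v_x,v_y\rangle = \tfrac{1}{\langle s,s\rangle}\bigl(\langle x,y\rangle\langle s,s\rangle - \langle x,s\rangle\langle y,s\rangle\bigr)$ and noting $\sqrt{1/\langle s,s\rangle^{2}} = -1/\langle s,s\rangle$ because $\langle s,s\rangle<0$, the leftover factor $-1$ flips the numerator to $\langle x,s\rangle\langle y,s\rangle - \langle x,y\rangle\langle s,s\rangle$ while the two sign flips inside the radicand cancel, giving exactly the claimed $\varphi_{x,y,\alpha}(s)=0$; a sanity check at $s=0$ (where $\langle x,s\rangle=\langle y,s\rangle=\langle s,s\rangle=-1$) recovers the Euclidean angle $\cos\alpha = x\cdot y/(\|x\|\,\|y\|)$, confirming the sign and the fact that $v_x$ is a \emph{positive} multiple of the unit tangent toward $x$. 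The remaining items you list as needing verification (chords of $\mathbb{K}^2$ correspond to hyperboloid geodesics, reverse Cauchy--Schwarz makes the radicand positive, injectivity of $\cos$ on $(0,\pi)$) are all standard, so the argument is complete.
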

As the intersection of any two isoptic curves suffices to determine the Fermat point, computing the point $s$ for a terminal set $P = \{x, y, z\} \subseteq \mathbb{K}^2$ reduces to solving a system of two equations, such as:
\begin{equation}\label{eq:system}
\begin{cases}
\varphi_{x,y,\frac{2\pi}{3}}(s) = 0 \\
\varphi_{y,z,\frac{2\pi}{3}}(s) = 0.
\end{cases}
\end{equation}
The resulting nonlinear system can be solved numerically in $\mathcal{O}(1)$ time (see Appendix \ref{app:scala} for details), for example via the hybrid algorithm proposed by \cite{Powell1970NumericalMethodsNonlinearAlgebraicEquations}, which extends Newton's method to higher dimensions for root finding. 

\subsection{Global Refinement via Riemannian GD}
\label{sec:glob_ref}

While local FST corrections improve topology, they may yield suboptimal global configurations. To address this, we refine Steiner positions by minimizing total tree length over the manifold.  

Given a candidate tree $T=(V,E)$ with terminals $P$ and Steiner set $S = V \setminus P$, the objective is
\begin{equation}
\min_{S \subset \mathbb{K}^2} L(S)= \min_{S \subset \mathbb{K}^2}\sum_{(u,v) \in E} d_{\mathbb{K}}(u,v).
\end{equation}
This non-linear optimization problem can be solved via Riemannian GD \citep{Bonnabel2013RSGD}, which performs updates
\begin{equation}
S^{(t+1)} = \exp_{S^{(t)}}\left(-\eta \nabla L\left(S^{(t)}\right)\right),
\end{equation}
where $\exp$ denotes the Riemannian exponential map (applied to each Steiner point) and $\nabla L$ the Riemannian gradient of total tree length. This ensures descent along geodesics, extending Euclidean GD to hyperbolic space. Convergence properties mirror the Euclidean case presented in \cite{Gilbert1968SMT}. Indeed, we show the following uniqueness result:
\begin{thm}
\label{thm:uniqueness}
Let $P_1,\dots,P_n$ be fixed terminal points in the Klein disk model of hyperbolic space.  
Fix a tree topology with $s$ Steiner points $S_1,\dots,S_s$.  
Then there exists a unique placement of the Steiner points $S_1,\dots,S_s$ that minimizes the total tree length $L$.
\end{thm}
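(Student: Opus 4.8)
The plan is to recognize the Klein disk $\mathbb{K}^2$ as an isometric model of the hyperbolic plane $\HH^2$, a Hadamard manifold (complete, simply connected, curvature $-1$), and to study $L$ as a geodesically convex function on the product configuration space $M = (\HH^2)^s$ whose points are the tuples $(S_1,\dots,S_s)$ of admissible Steiner positions. Each edge of the fixed topology contributes a term $d(u,v)$ with $u,v$ ranging over $P \cup \{S_1,\dots,S_s\}$. By the standard convexity of the distance function in Hadamard (CAT$(0)$) spaces, each such term is convex along geodesics of $M$: for a terminal–Steiner edge this is convexity of $d(\cdot,P_j)$, and for a Steiner–Steiner edge it is the joint convexity of $d$. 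Hence $L=\sum_e d$ is geodesically convex on $M$, working intrinsically and using that Klein geodesics are straight chords only where an explicit computation is convenient.

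First I would establish existence. Since $M$ is open (a product of open disks), I need coercivity. In the fixed topology every Steiner point is joined to some terminal by a path of edges; if a Steiner point approaches $\partial\mathbb{K}^2$, its hyperbolic distance to the fixed interior terminals diverges, so at least one edge on that path, and therefore $L$, tends to $+\infty$. Thus every sublevel set $\{L \le c\}$ is a compact subset of $M$, and since $L$ is continuous and bounded below, a minimizer exists.

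For uniqueness I would argue by contradiction, combining convexity with a rigidity (pinning) argument. Suppose $S^{(0)}\neq S^{(1)}$ are both minimizers, and let $\gamma(t)=(S_1(t),\dots,S_s(t))$, $t\in[0,1]$, be the product geodesic joining them, each $S_i(\cdot)$ a geodesic of $\HH^2$. Convexity and minimality of both endpoints force $L\circ\gamma$ to be constant, hence affine; since $L$ is a sum of convex terms whose sum is affine, each edge term $t\mapsto d(u(t),v(t))$ must itself be affine on $[0,1]$. Being affine and positive at both endpoints (I first reduce to a non-degenerate minimizer with all edges of positive length), each term stays positive on $[0,1]$, so no incident vertices collide and the terms are smooth. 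The crux is then a collinearity lemma: on a surface of curvature $-1$, if $t\mapsto d(\gamma_1(t),\gamma_2(t))$ is affine and positive, then $\gamma_1$ and $\gamma_2$ lie on a common geodesic line (the case $\gamma_2\equiv P_j$ constant is included). I would prove this by a second-variation / Jacobi-field computation showing that in strictly negative curvature the second derivative of the distance is a sum of nonnegative terms vanishing only when the velocity components transverse to the connecting geodesic vanish, i.e. both geodesics slide along one line.

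Finally I would run the pinning argument, using that in the relevant topologies every Steiner point has degree three (the degree condition). At a non-degenerate minimizer the first-order optimality condition at each Steiner point states that the unit tangents toward its three neighbors sum to zero, i.e. the incident edges meet at mutual angles $120^\circ$ (the angle condition). Applying the collinearity lemma to each edge incident to a Steiner point $S_i$ that actually moves ($S_i^{(0)}\neq S_i^{(1)}$) forces its velocity $V_i=\dot S_i$ to be tangent, simultaneously, to all three edge geodesics at $S_i$; but three directions at pairwise angle $120^\circ$ share no common nonzero tangent vector, so $V_i=0$. Hence no Steiner point moves, $\gamma$ is constant, and $S^{(0)}=S^{(1)}$, a contradiction. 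The main obstacle I anticipate is making the collinearity lemma rigorous in the strictly-negatively-curved setting (its Euclidean analogue is a triviality about $|a+tb|$, whereas here it needs the Jacobi-field analysis), together with a careful treatment of degenerate minimizers where Steiner points coincide with terminals or with one another, which I would handle by collapsing such coincidences to a topology on fewer points before invoking the angle condition.
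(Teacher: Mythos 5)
Your proposal is correct in outline but replaces the paper's key rigidity lemma with a genuinely different, more intrinsic argument. Both proofs share the first half: interpolate two putative minimizers along product geodesics, invoke convexity of the distance function on a Hadamard manifold to get $L(t)\le(1-t)L(0)+tL(1)$, and conclude that $L$ is constant along the interpolation, hence each edge term is affine in $t$. Where the paper then argues that a moving Steiner point would have to trace a straight chord inside the fixed $120^\circ$--isoptic of its neighbours and rules this out by an explicit quartic-polynomial computation in Klein coordinates (Lemma~\ref{lem:segment_isoptic}, with the leading coefficient verified as a sum of squares via a CAS), you instead prove a collinearity lemma --- in strictly negative curvature an affine, positive distance between two geodesics forces them onto a common geodesic line --- via a second-variation/Jacobi-field argument, and then pin each Steiner point by noting that its velocity would have to be tangent simultaneously to three lines meeting pairwise at $120^\circ$, hence vanish. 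Your route is coordinate-free, avoids the algebraic verification, and handles cleanly the case where the neighbours of a moving Steiner point are themselves moving Steiner points (the paper's isoptic lemma is stated for \emph{fixed} foci $x,y$, so its Step~3 is loosest precisely there); the cost is that the collinearity lemma is the hardest step and is only sketched, and, like the paper, you must still treat degenerate minimizers where the $120^\circ$ condition fails --- which you at least flag and propose to handle by collapsing to a smaller topology. You also supply an existence argument (coercivity as Steiner points approach the ideal boundary, plus Hopf--Rinow compactness of sublevel sets), which the theorem statement requires but the paper's proof omits entirely.
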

Further details and proof of Theorem~\ref{thm:uniqueness} can be found in the Appendix (Section~\ref{app:uniq}).

\section{Experiments}
\label{sec:exp}
\begin{figure*}[t!]
\centering
\begin{subfigure}[t]{0.33\linewidth}
    \centering
    \includegraphics[width=\linewidth]{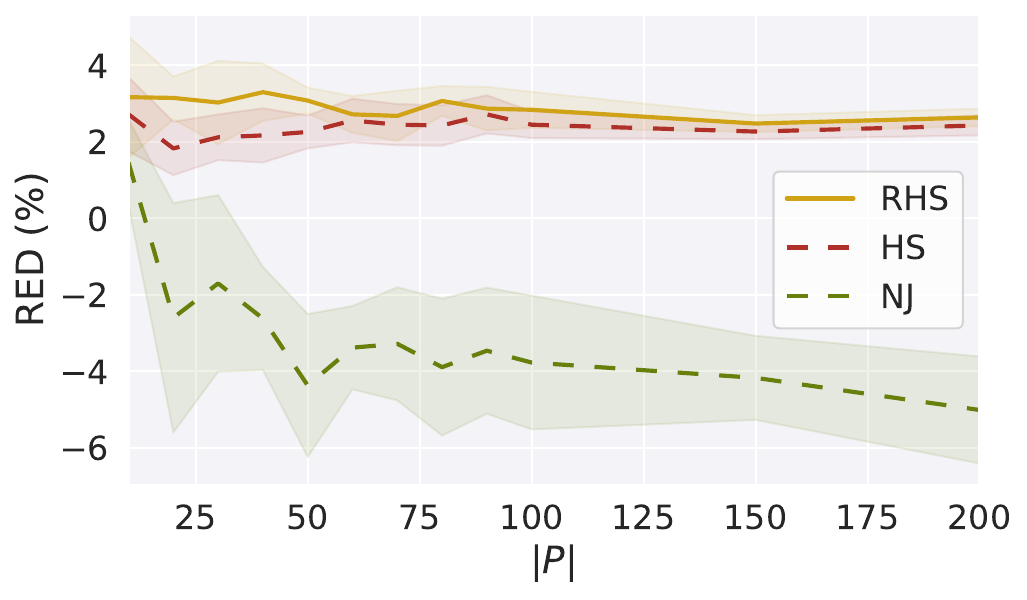}
    \captionsetup{justification=centering}
    \caption{Centered Gaussian $\mathcal{G}(0, 0.5)$}
    \label{fig:row1}
\end{subfigure}%
\hfill
\begin{subfigure}[t]{0.33\linewidth}
    \centering
    \includegraphics[width=\linewidth]{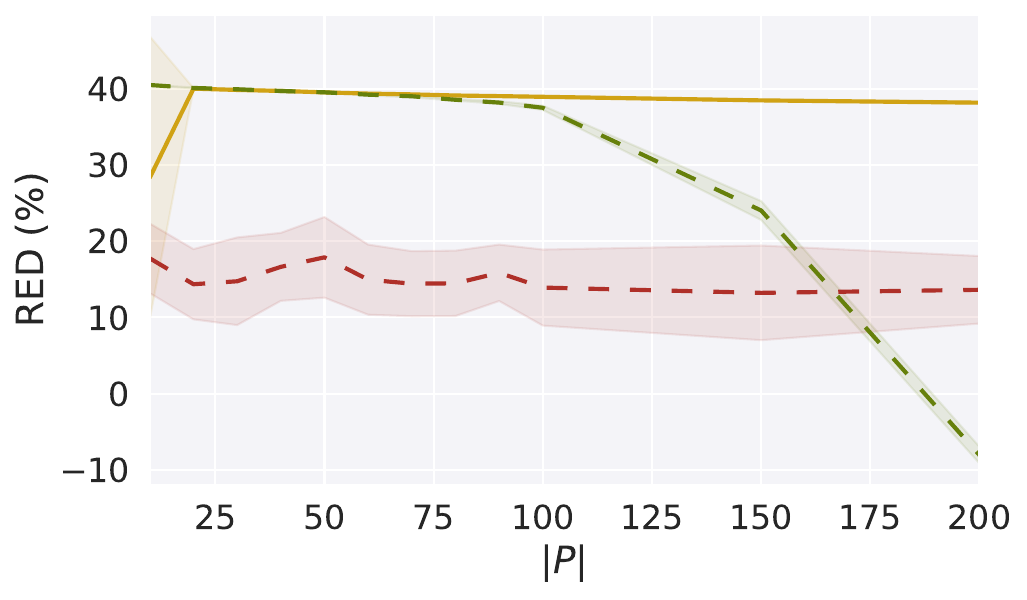}
    \captionsetup{justification=centering}
    \caption{Mixture of $10$ Gaussians $\mathcal{G}(\mu_{10,k}(1-10^{-10}), 0.1)$}
    \label{fig:row2}
\end{subfigure}%
\hfill
\begin{subfigure}[t]{0.33\linewidth}
    \centering
    \includegraphics[width=\linewidth]{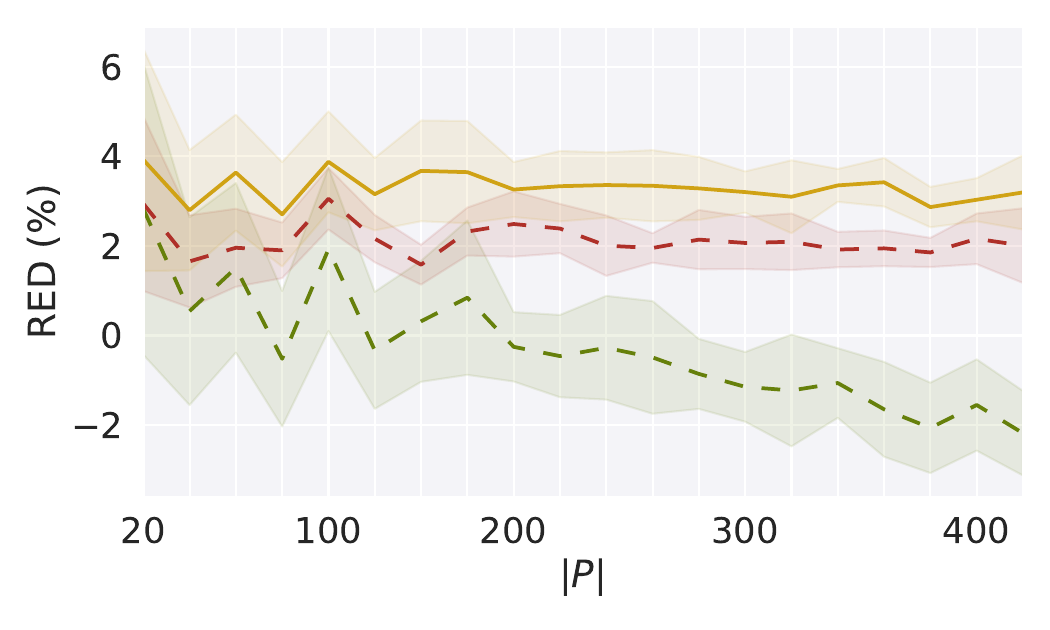}
    \captionsetup{justification=centering}
    \caption{\emph{Planaria} dataset}
    \label{fig:row3}
\end{subfigure}
\caption{
Performance comparison across data distributions. Randomized HyperSteiner (orange), vanilla HyperSteiner (red), and Neighbour Joining (green). Curves report mean tree-length reduction over MST (RED, \%) with standard deviation (shaded), averaged across 10 runs.
}
\label{fig:AcrossDataDistributionsPlot}
\end{figure*}

We empirically evaluate Randomized HyperSteiner (RHS) for constructing hyperbolic Steiner minimal trees with two objectives: (i) quantify tree quality via reduction over the Minimum Spanning Tree (RED, \%), and (ii) benchmark against standard methods—Minimum Spanning Tree (MST), Neighbour Joining (NJ) \citep{Saitou1987neighborjoining}, and the deterministic HyperSteiner (HS) heuristic \citep{garcia2025hypersteiner}. Experiments were conducted on Google Cloud (\textsc{n2d} CPUs, 1 core, 128GB RAM). Source code is available in \url{https://github.com/AGarciaCast/RandomizedHyperSteiner}; additional details appear in Appendix~\ref{app:exp_det}.

\vspace{-3pt}
\subsection{Performance Across Data Distributions}
\label{sec:comp}

We evaluate RHS across three regimes reflecting typical hyperbolic embedding scenarios (Figure~\ref{fig:AcrossDataDistributionsPlot}). For each cardinality $|P|$, results are averaged over 10 independent samplings with standard deviation reported. Runtime and detailed tables appear in Appendix~\ref{app:exp}.

\textbf{Centered Gaussian.}  
This regime models the initialization phase of hyperbolic embeddings, where points concentrate near the origin \citep{nickel2017poincare, chami2020trees}. Terminals are sampled from a pseudo-hyperbolic Gaussian $\mathcal{G}(0,0.5)$ by mapping an isotropic Euclidean Gaussian from the tangent space via the exponential map \citep{Nagano2019wrapped}, yielding moderate spread around the disk center.

\textbf{Mixture of Gaussians Near Boundary.} To capture the phenomenon where tree-embedded points concentrate at the boundary in specialized groups \citep{nickel_learning_2018, Klimovskaia2020SingleCellPoincareMap, kleinberg2007geographic, chami2020trees}, we construct ten pseudo-hyperbolic Gaussians $\mathcal{G}_k(\mu_{10,k},0.1)$ centered at a regular $10$-gon near the boundary: $\mu_{10,k} = (1-10^{-10}) e^{2 i \pi k / 10}$ for $k=1,\ldots,10$, producing well-separated clusters near the boundary.

\textbf{Real Biological Data.}  
We validate generalization on the \emph{Planaria} single-cell RNA-sequencing data \citep{Plass2018_planaria_SingleCellData}, containing 21,612 cells from \emph{S.~mediterranea} undergoing hierarchical differentiation. Following PCA dimensionality reduction to 50 components, cells are embedded into two-dimensional hyperbolic space using the method of \citet{Klimovskaia2020SingleCellPoincareMap}, which preserves the biological hierarchy.

\textbf{Results.}  

Figure~\ref{fig:AcrossDataDistributionsPlot} demonstrates that RHS consistently outperforms all baselines. In the centered Gaussian regime (Figure~\ref{fig:row1}), the dense central distribution yields Euclidean behavior as $|P|$ increases due to hyperbolic space's local flatness, producing modest RHS improvements over HS ($\sim$1\%) comparable to 2D Euclidean stochastic DT-based heuristics \citep{Laarhoven2011Randomized}. Randomization becomes critical in the boundary mixture (Figure~\ref{fig:row2}): RHS maintains substantial reductions as sample size grows, while HS provides only moderate gains and NJ deteriorates to negative performance. On real biological data (Figure~\ref{fig:row3}), RHS sustains stable improvements over HS, whereas NJ exhibits inconsistent behavior. Notably, despite being a common computational biology baseline, NJ fails to respect the \textit{principle of maximum parsimony}. The marked performance advantage for boundary-concentrated clusters (Figure~\ref{fig:row2}) motivates a deeper investigation of the relationship between boundary proximity and tree-length reduction.

\begin{figure*}[t!]
\centering
\begin{subfigure}[t]{0.5\linewidth}
    \centering
    \includegraphics[width=\linewidth]{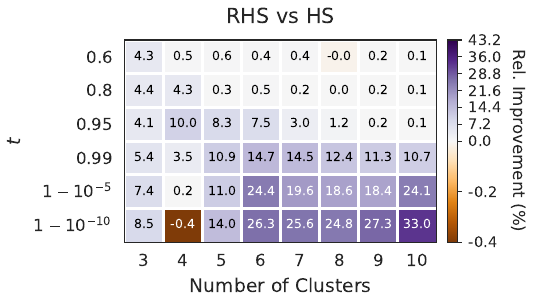}
\end{subfigure}%
\hfill
\begin{subfigure}[t]{0.5\linewidth}
    \centering
    \includegraphics[width=\linewidth]{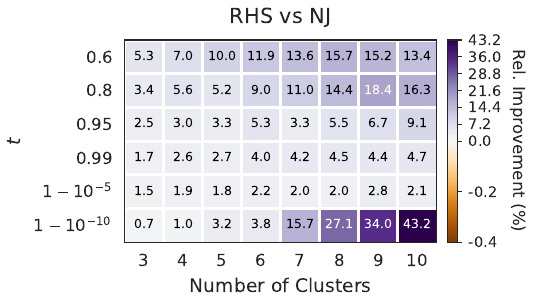}
\end{subfigure}%
\caption{
Convergence analysis for mixtures of $\mathcal{G}(\mu_{d, k}(t), 0.1)$, $k \in \{1, \ldots, d\}$ with $d \in \{3, \ldots, 10\}$ and varying radial parameter $t$, sampling 20 points per Gaussian. Values show percentage reduction in tree-length of RHS vs.\ HS (left) and NJ (right) as points approach the boundary.}
\label{fig:ConvergenceAnalysis}
\end{figure*}
\subsection{Near Boundary Performance Analysis}
\label{sec:boundary}

\begin{figure}[h]
    \centering
    \includegraphics[width=0.9\linewidth]{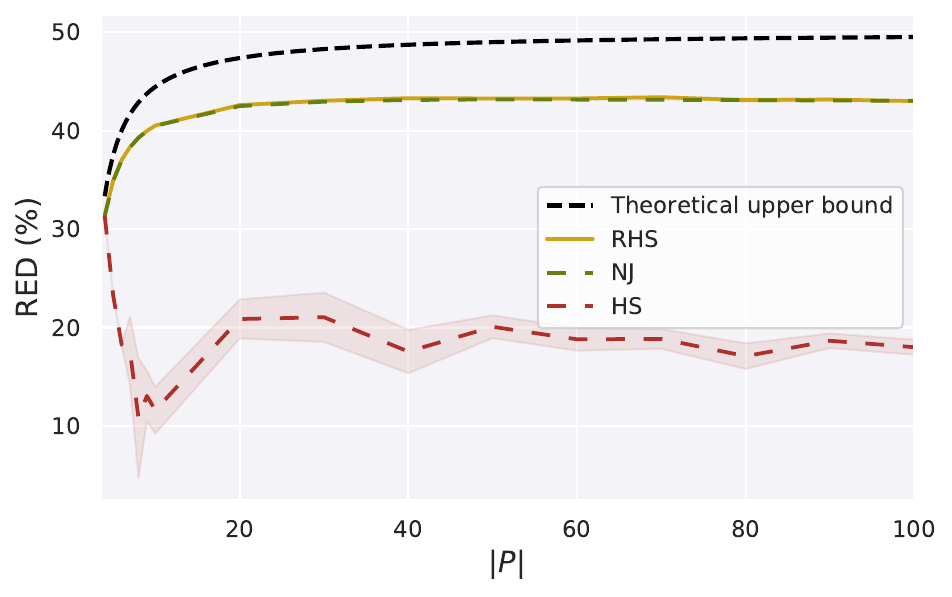}
    \caption{Tree-length reduction over the MST with $|P|$ terminals sampled near the boundary ($t=1-10^{-10}$). RHS (orange) and NJ (green) approach the theoretical bound, while HS (red) lags significantly.}
    \label{fig:exp3}
    \vspace{-10pt}
\end{figure}
The substantial improvements for boundary-concentrated clusters raise two fundamental questions: (i) how closely can heuristic methods approach the theoretical upper bound on reduction, and (ii) at what radial distance do these gains materialize? We address both through controlled experiments that systematically vary boundary proximity.

\textbf{Approaching the Theoretical Limit.}
We first evaluate performance in the extremal configuration that maximizes tree-length reduction. Following \citet{innami_steiner_2006}, the optimal arrangement for $|P|$ terminals is a regular $|P|$-gon positioned at the boundary. For $k=1,\dots,|P|$, we sample terminal $k$ from $\mathcal{G}(\mu_{|P|,k}(t),0.1)$, where $\mu_{|P|,k}(t) = t\,e^{2\pi i k / |P|}$. Setting $t=1-10^{-10}$ (the largest numerically stable value), the theoretical upper bound on reduction is $ \frac{|P|}{2(|P|-1)}$,
which converges to $50\%$ as $|P|\to\infty$. Results for each $|P|$ are averaged over three independent runs.

As we see in Figure~\ref{fig:exp3}, RHS and NJ both track this theoretical bound closely, while HS consistently underperforms. RHS achieves a maximum average reduction of $43.39\%$ at $|P|=70$, compared to $43.16\%$ for NJ at $|P|=50$, whereas HS plateaus at $31.31\%$. This translates to a relative tree-length improvement exceeding $32\%$ for RHS over the original HS heuristic. Critically, HS deteriorates rapidly beyond $|P|=5$, failing to produce competitive solutions, whereas RHS remains robust and near-optimal as terminal count grows.

\textbf{Characterizing the Transition Zone.}
To identify when proximity-driven gains emerge, we sample $20$ points per cluster from mixtures of $d\in\{3,\ldots,10\}$ Gaussians $\mathcal{G}(\mu_{d,k}(t), 0.1)$, varying the radial parameter $t$ to control distance from the origin. Figure~\ref{fig:ConvergenceAnalysis} reveals a sharp performance transition. Comparing RHS to HS (left panel), gains are negligible for $t\leq 0.8$ but increase dramatically at $t\geq 0.95$, exceeding $30\%$ improvement with $d=10$ clusters at the extreme boundary. This transition marks where hyperbolic curvature dominates local Euclidean approximations. While NJ performed comparably to RHS in the single point-per-cluster configuration of Figure~\ref{fig:exp3}, the cluster-based setting reveals a significant performance gap, with RHS achieving over $40\%$ improvement in tree-length.

These results demonstrate that RHS's stochastic construction bridges heuristic and optimal performance. As established in Section~\ref{sec:bg}, the hyperbolic plane admits maximum possible MST reduction among all complete boundaryless surfaces. RHS exploits this geometric property to attain near-optimal reductions, representing a significant advance over existing heuristics and revealing a key insight: hyperbolic geometry provides computational advantages for the Steiner tree problem unattainable in Euclidean spaces.

\section{Theoretical Limitations of Stochastic Topology Search}

Our method shares a key limitation with prior Euclidean stochastic Steiner tree heuristics \citep{Beasley1994DTHeuristicSMT, Laarhoven2011Randomized, yang2006hybrid}: for general point clouds, it does not provide a theoretical guarantee of finding the globally optimal topology. While Theorem~\ref{thm:uniqueness} shows that, for a fixed topology, the Steiner point configuration is uniquely determined, and thus the global refinement step showcased in Section~\ref{sec:glob_ref} is well posed, the topology itself remains unknown and must be explored heuristically.

\begin{figure*}[t!]
    \centering
    \begin{subfigure}[t]{0.15\textwidth}
        \centering
        \includegraphics[width=\linewidth]{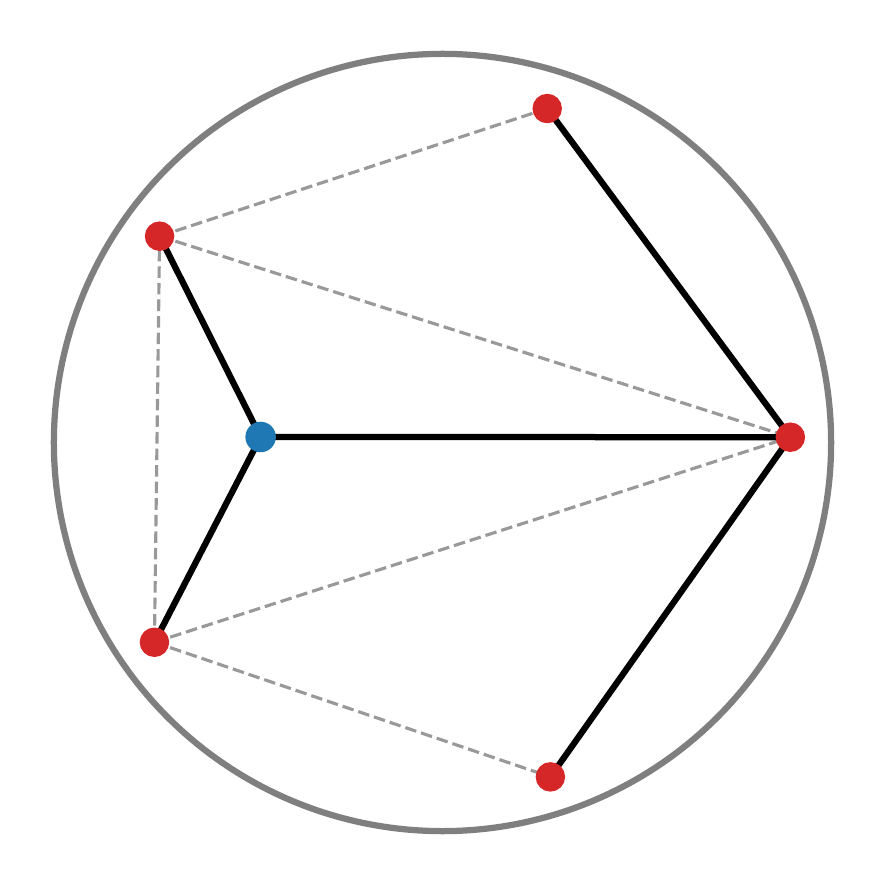}
        \caption*{\footnotesize HS: $|P|=5$}
    \end{subfigure}
    \hfill
    \begin{subfigure}[t]{0.15\textwidth}
        \centering
        \includegraphics[width=\linewidth]{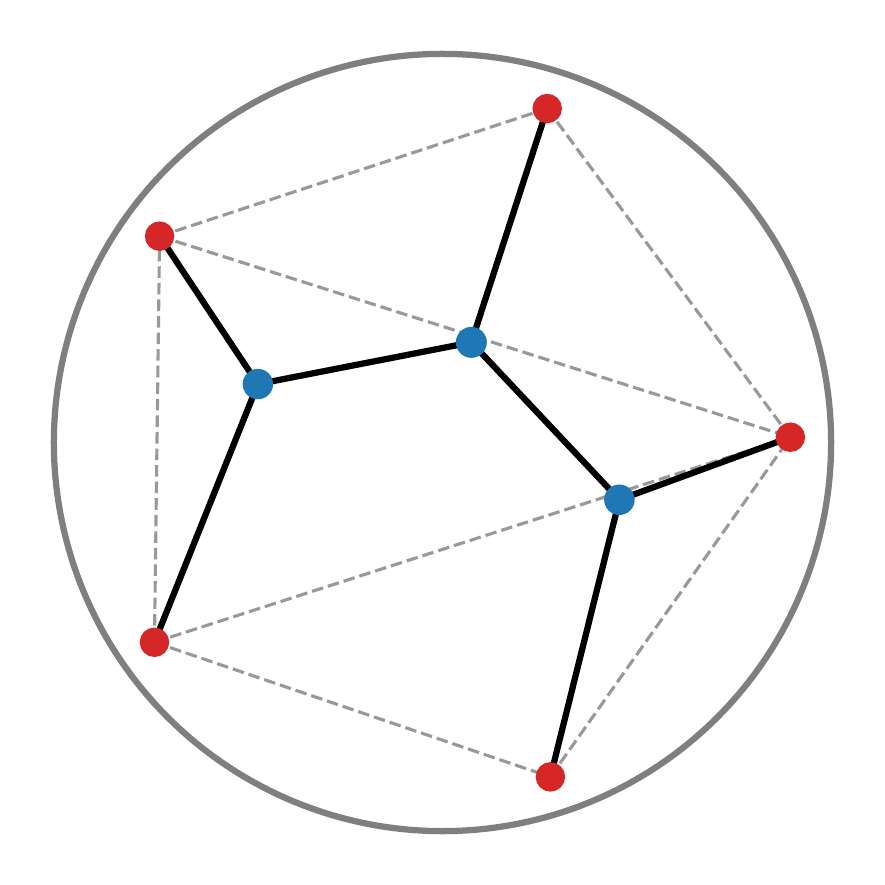}
        \caption*{\footnotesize RHS: $|P|=5$}
    \end{subfigure}
    \hfill
    \begin{subfigure}[t]{0.15\textwidth}
        \centering
        \includegraphics[width=\linewidth]{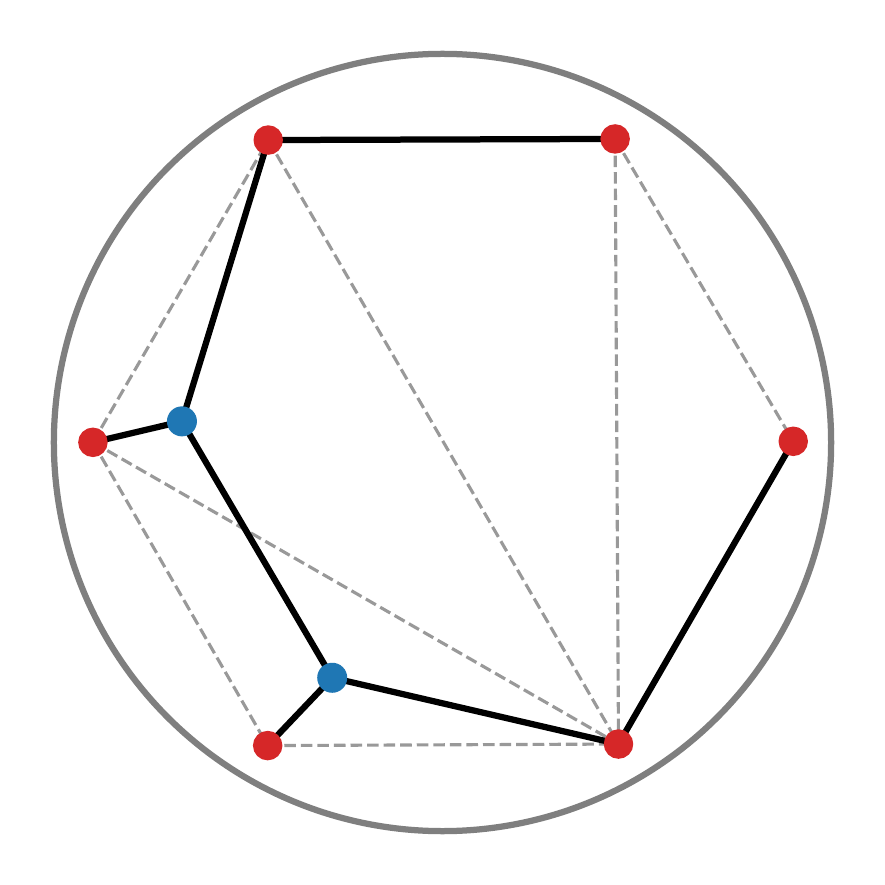}
        \caption*{\footnotesize HS: $|P|=6$}
    \end{subfigure}
    \hfill
    \begin{subfigure}[t]{0.15\textwidth}
        \centering
        \includegraphics[width=\linewidth]{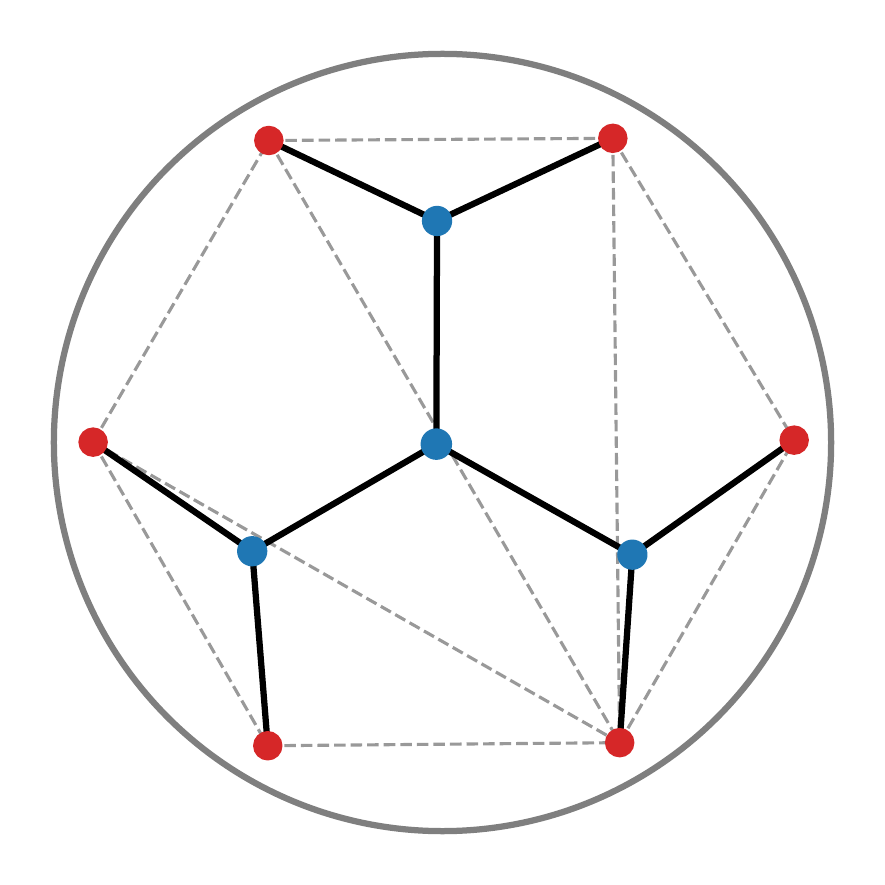}
        \caption*{\footnotesize RHS: $|P|=6$}
    \end{subfigure}
    \hfill
    \begin{subfigure}[t]{0.15\textwidth}
        \centering
        \includegraphics[width=\linewidth]{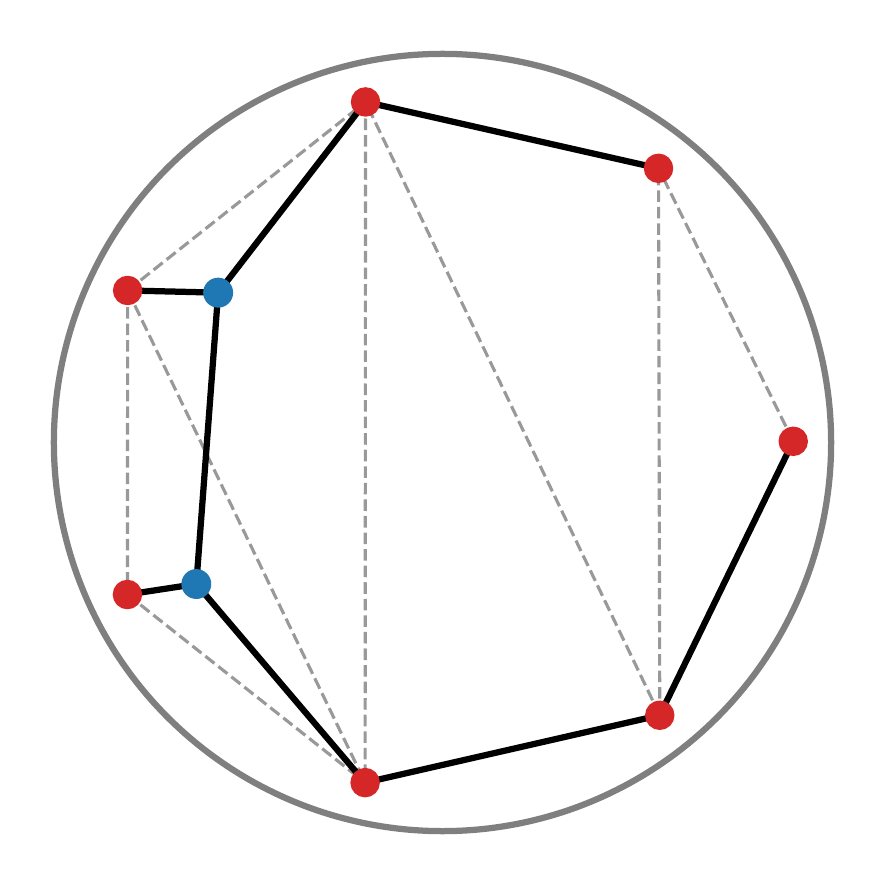}
        \caption*{\footnotesize HS: $|P|=7$}
    \end{subfigure}
    \hfill
    \begin{subfigure}[t]{0.15\textwidth}
        \centering
        \includegraphics[width=\linewidth]{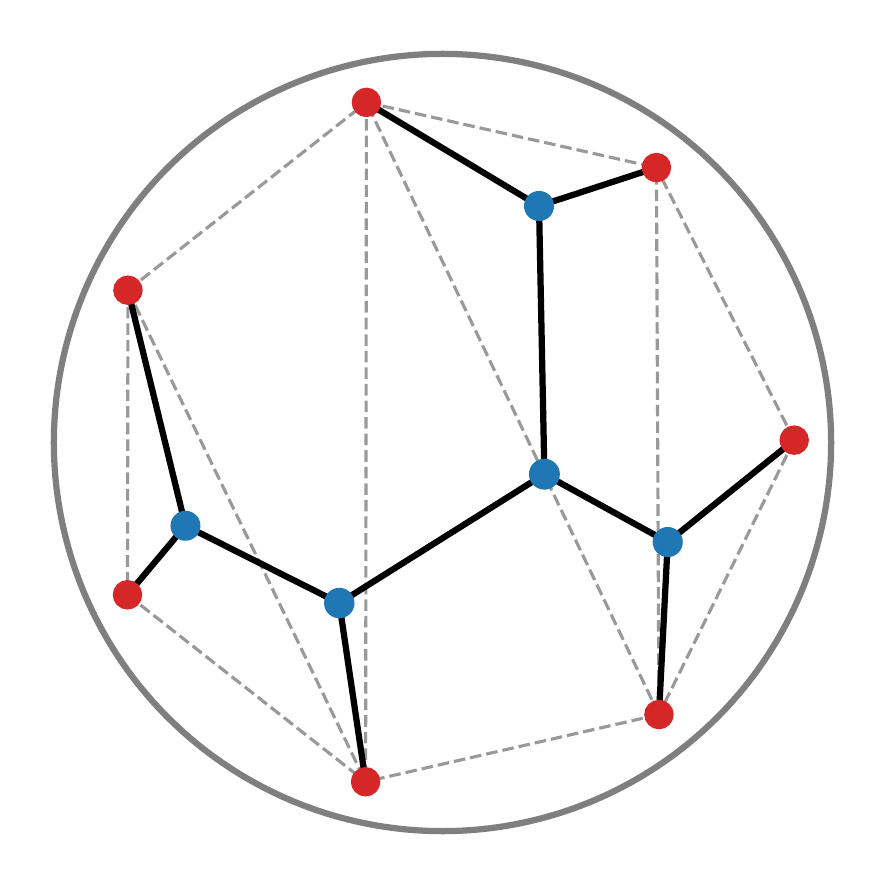}
        \caption*{\footnotesize RHS: $|P|=7$}
    \end{subfigure}
    \caption{Heuristic Steiner Minimum Trees computed by the deterministic HyperSteiner (HS) and the proposed Randomized HyperSteiner (RHS) on regular $|P|$-gons placed near the boundary ($t=0.9$). Red ({\color{RedLight}$\newmoon$}) denotes terminals, blue ({\color{NavyBlue}$\newmoon$}) denotes Steiner points, and dashed lines correspond to the auxiliary hyperbolic DT. Our method consistently produces more hierarchical tree structures than its deterministic counterpart.}
    \label{fig:plots-poly}
\end{figure*}

Accordingly, RHS relies on stochastic exploration over candidate topologies while retaining the best solution found so far. This makes standard convergence analysis infeasible. In particular, changes in topology can produce discontinuous changes in tree length, which makes it difficult to obtain a straightforward characterization of convergence or theoretical computational complexity, as in other non-deterministic Steiner tree methods \citep{Beasley1994DTHeuristicSMT, Laarhoven2011Randomized}.

Despite these theoretical limitations, the results in Section~\ref{sec:exp} and Appendix~\ref{app:exp} show that the proposed pipeline consistently improves over existing baselines across a range of synthetic and real hyperbolic datasets.

\section{Discussion and Future Work}
\label{sec:diss}

We introduced \textbf{Randomized HyperSteiner} (RHS), a stochastic Delaunay triangulation heuristic that overcomes the fundamental limitation of deterministic approaches to constructing hyperbolic Steiner Minimal Trees. By incorporating randomness into the expansion process and combining it with Riemannian gradient descent optimization, RHS avoids the local optima that trap existing methods. Moreover, as shown in Figure~\ref{fig:plots-poly}, the myopic behavior of the original HyperSteiner prevents it from generating complex hierarchical trees, whereas our proposed RHS, thanks to our iterative exploration procedure, is able to construct richer hierarchical structures with shorter total tree length.

Empirically, RHS consistently outperforms existing baselines across diverse synthetic and real data distributions, with particularly strong performance in near-boundary configurations where hyperbolic geometry's advantages are most pronounced. However, as is common with stochastic heuristics for the SMT problem \citep{Laarhoven2011Randomized}, this improved solution quality comes at the cost of increased computational time compared to deterministic approaches like the vanilla HyperSteiner. This presents practitioners with a clear trade-off: RHS should be chosen when solution quality is paramount, while the vanilla HyperSteiner remains suitable for applications requiring fast approximations. We refer the reader to Appendix~\ref{app:scala} for further discussion on the computational trade-off.

Despite these advances, our results reveal that we have not yet achieved the full potential of hyperbolic geometry for tree length minimization. A gap of approximately 7\% remains between our best empirical results and the theoretical upper bound. A promising direction for future work involves developing Polynomial-Time Approximation Schemes (PTAS) using quadtrees instead of Delaunay triangulations \citep{Arora1998ProofVerificationHardness, Arora1998PTASTSP, Mitchell1999PTASTSP}. PTAS approaches would allow users to explicitly control the performance-time trade-off by adjusting the approximation parameter. While the construction of quadtrees in hyperbolic space has been shown to be theoretically feasible \citep{kisfaludi2025near}, the practical implementation of these polynomial-time approximation algorithms in hyperbolic geometry remains an open challenge that could finally bridge the gap to optimal performance.

\section*{Acknowledgements}
This work has been supported by the Swedish Research Council, the Knut and Alice Wallenberg Foundation, the European Research Council (ERC AdG BIRD), and the Swedish Foundation for Strategic Research (SSF BOS). Alejandro García Castellanos is funded by the Hybrid Intelligence Center, a 10-year programme funded through the research programme Gravitation which is (partly) financed by the Dutch Research Council (NWO). This publication is part of the project SIGN with file number VI.Vidi.233.220 of the research programme Vidi which is (partly) financed by the Dutch Research Council (NWO) under the grant \url{https://doi.org/10.61686/PKQGZ71565}. 
\clearpage
\newpage
\bibliography{references}

\begin{thebibliography}{}

\bibitem[Absil et~al., 2008]{absil2008optimization}
Absil, P.-A., Mahony, R., and Sepulchre, R. (2008).
\newblock {\em Optimization algorithms on matrix manifolds}.
\newblock Princeton University Press.

\bibitem[Arora, 1998]{Arora1998PTASTSP}
Arora, S. (1998).
\newblock Polynomial time approximation schemes for euclidean traveling salesman and other geometric problems.
\newblock {\em J. ACM}, 45(5):753–782.

\bibitem[Arora et~al., 1998]{Arora1998ProofVerificationHardness}
Arora, S., Lund, C., Motwani, R., Sudan, M., and Szegedy, M. (1998).
\newblock Proof verification and the hardness of approximation problems.
\newblock {\em J. ACM}, 45(3):501–555.

\bibitem[Beasley and Goffinet, 1994]{Beasley1994DTHeuristicSMT}
Beasley, J.~E. and Goffinet, F. (1994).
\newblock A delaunay triangulation-based heuristic for the euclidean steiner problem.
\newblock {\em Networks}, 24(4):215--224.

\bibitem[Blelloch et~al., 2006]{Blelloch1997PhylogeneticSteinerTree}
Blelloch, G.~E., Dhamdhere, K., Halperin, E., Ravi, R., Schwartz, R., and Sridhar, S. (2006).
\newblock Fixed parameter tractability of binary near-perfect phylogenetic tree reconstruction.
\newblock In {\em Proceedings of the 33rd International Conference on Automata, Languages and Programming - Volume Part I}, ICALP'06, page 667–678, Berlin, Heidelberg. Springer-Verlag.

\bibitem[Bonnabel, 2013]{Bonnabel2013RSGD}
Bonnabel, S. (2013).
\newblock Stochastic gradient descent on riemannian manifolds.
\newblock {\em IEEE Transactions on Automatic Control}, 58(9):2217–2229.

\bibitem[Bose et~al., 2020]{bose2020latent}
Bose, J., Smofsky, A., Liao, R., Panangaden, P., and Hamilton, W. (2020).
\newblock Latent variable modelling with hyperbolic normalizing flows.
\newblock In {\em International Conference on Machine Learning}, pages 1045--1055. PMLR.

\bibitem[Boumal et~al., 2016]{boumal2016global}
Boumal, N., Absil, P.-A., and Cartis, C. (2016).
\newblock Global rates of convergence for nonconvex optimization on manifolds.
\newblock {\em IMA Journal of Numerical Analysis}.

\bibitem[Brazil et~al., 2009]{Brazil2009PhylogeneticSteinerTree}
Brazil, M., Thomas, D., Nielsen, B., Winter, P., Wulff-Nilsen, C., and Zachariasen, M. (2009).
\newblock A novel approach to phylogenetic tress: d-dimensional geometric steiner tress.
\newblock {\em Networks}, 53:104 -- 111.

\bibitem[Bridson and Haefliger, 1999]{Bridson1999}
Bridson, M.~R. and Haefliger, A. (1999).
\newblock {\em Metric Spaces of Non-Positive Curvature}.
\newblock Springer Berlin Heidelberg.

\bibitem[Cetin et~al., 2022]{cetin2022hyperbolic}
Cetin, E., Chamberlain, B., Bronstein, M., and Hunt, J.~J. (2022).
\newblock Hyperbolic deep reinforcement learning.
\newblock {\em arXiv preprint arXiv:2210.01542}.

\bibitem[Chamberlain et~al., 2017]{chamberlain2017neural}
Chamberlain, B.~P., Clough, J., and Deisenroth, M.~P. (2017).
\newblock Neural embeddings of graphs in hyperbolic space.
\newblock {\em arXiv preprint arXiv:1705.10359}.

\bibitem[Chami et~al., 2020]{chami2020trees}
Chami, I., Gu, A., Chatziafratis, V., and R{\'e}, C. (2020).
\newblock From trees to continuous embeddings and back: Hyperbolic hierarchical clustering.
\newblock {\em Advances in Neural Information Processing Systems}, 33:15065--15076.

\bibitem[Chami et~al., 2021]{Chami_2021_HoroPCA}
Chami, I., Gu, A., Nguyen, D.~P., and Re, C. (2021).
\newblock Horopca: Hyperbolic dimensionality reduction via horospherical projections.
\newblock In Meila, M. and Zhang, T., editors, {\em Proceedings of the 38th International Conference on Machine Learning}, volume 139 of {\em Proceedings of Machine Learning Research}, pages 1419--1429. PMLR.

\bibitem[Chen et~al., 2025]{Chen2025hyperbolicVCSMC}
Chen, A., Chlenski, P., Munyuza, K., Moretti, A.~K., Naesseth, C.~A., and Pe'er, I. (2025).
\newblock Variational combinatorial sequential monte carlo for bayesian phylogenetics in hyperbolic space.
\newblock In Li, Y., Mandt, S., Agrawal, S., and Khan, E., editors, {\em Proceedings of The 28th International Conference on Artificial Intelligence and Statistics}, volume 258 of {\em Proceedings of Machine Learning Research}, pages 2962--2970. PMLR.

\bibitem[Chen and Li, 2015]{chen2015homotopy}
Chen, T. and Li, T.-Y. (2015).
\newblock Homotopy continuation method for solving systems of nonlinear and polynomial equations.
\newblock {\em Communications in Information and Systems}, 15(2):119--307.

\bibitem[Chen et~al., 2022]{Chen2022FullyHyperbolicNN}
Chen, W., Han, X., Lin, Y., Zhao, H., Liu, Z., Li, P., Sun, M., and Zhou, J. (2022).
\newblock Fully hyperbolic neural networks.
\newblock In Muresan, S., Nakov, P., and Villavicencio, A., editors, {\em Proceedings of the 60th Annual Meeting of the Association for Computational Linguistics (Volume 1: Long Papers)}, pages 5672--5686, Dublin, Ireland. Association for Computational Linguistics.

\bibitem[Chung and Graham, 2006]{news_steiner_ratio}
Chung, F. and Graham, R. (2006).
\newblock A new bound for euclidean steiner minimal trees.
\newblock {\em Annals of the New York Academy of Sciences}, 440:328 -- 346.

\bibitem[Cieslik, 2001]{cieslik2001steiner}
Cieslik, D. (2001).
\newblock The steiner ratio of metric spaces.
\newblock {\em The Steiner Ratio}, pages 79--98.

\bibitem[Cieslik, 2006]{Cieslik2006shortestconnectivity}
Cieslik, D. (2006).
\newblock {\em Shortest Connectivity: An Introduction with Applications in Phylogeny}.
\newblock Combinatorial Optimization. Springer US.

\bibitem[Dolan et~al., 1991]{dolan1991minimal}
Dolan, J., Weiss, R., and MacGregor~Smith, J. (1991).
\newblock Minimal length tree networks on the unit sphere.
\newblock {\em Annals of Operations Research}, 33:501--535.

\bibitem[Elias and Lagergren, 2009]{Elias2009FastNJ}
Elias, I. and Lagergren, J. (2009).
\newblock Fast neighbor joining.
\newblock {\em Theor. Comput. Sci.}, 410:1993--2000.

\bibitem[Felsenstein, 1981]{Felsenstein1981MaximumLikelihoodPhylogenetic}
Felsenstein, J. (1981).
\newblock Evolutionary trees from dna sequences: a maximum likelihood approach.
\newblock {\em Journal of molecular evolution}, 17(6):368—376.

\bibitem[Fitch, 1971]{Fitch1971maximumparsimony}
Fitch, W.~M. (1971).
\newblock {Toward Defining the Course of Evolution: Minimum Change for a Specific Tree Topology}.
\newblock {\em Systematic Biology}, 20(4):406--416.

\bibitem[Foroughmand-Araabi et~al., 2022]{FA2022steinertreelineagecell}
Foroughmand-Araabi, M.-H., Goliaei, S., and McHardy, A.~C. (2022).
\newblock Scelestial: Fast and accurate single-cell lineage tree inference based on a steiner tree approximation algorithm.
\newblock {\em PLOS Computational Biology}, 18(8):1--27.

\bibitem[Foulds et~al., 1979]{Foulds1979steinerminimalphylogenetictree}
Foulds, L., Hendy, M., and Penny, D. (1979).
\newblock A graph theoretic approach to the development of minimal phylogenetic trees.
\newblock {\em Journal of molecular evolution}, 13:127--49.

\bibitem[Ganea et~al., 2018]{ganea2018hyperbolic}
Ganea, O., B{\'e}cigneul, G., and Hofmann, T. (2018).
\newblock Hyperbolic neural networks.
\newblock {\em Advances in neural information processing systems}, 31.

\bibitem[Garc{\'\i}a-Castellanos et~al., 2025]{garcia2025hypersteiner}
Garc{\'\i}a-Castellanos, A., Medbouhi, A.~A., Marchetti, G.~L., Bekkers, E.~J., and Kragic, D. (2025).
\newblock Hypersteiner: Computing heuristic hyperbolic steiner minimal trees.
\newblock In {\em 2025 Proceedings of the Symposium on Algorithm Engineering and Experiments (ALENEX)}, pages 194--208. SIAM.

\bibitem[Garey et~al., 1977]{garey1977complexity}
Garey, M.~R., Graham, R.~L., and Johnson, D.~S. (1977).
\newblock The complexity of computing steiner minimal trees.
\newblock {\em SIAM journal on applied mathematics}, 32(4):835--859.

\bibitem[Gilbert and Pollak, 1968]{Gilbert1968SMT}
Gilbert, E.~N. and Pollak, H.~O. (1968).
\newblock Steiner minimal trees.
\newblock {\em SIAM Journal on Applied Mathematics}, 16(1):1--29.

\bibitem[Gong et~al., 2021]{Gong20218benchmarkcelllineage}
Gong, W., Granados, A.~A., Hu, J., Jones, M.~G., Raz, O., Salvador-Martínez, I., Zhang, H., Chow, K.-H.~K., Kwak, I.-Y., Retkute, R., Prusokiene, A., Prusokas, A., Khodaverdian, A., Zhang, R., Rao, S., Wang, R., Rennert, P., Saipradeep, V.~G., Sivadasan, N., Rao, A., Joseph, T., Srinivasan, R., Peng, J., Han, L., Shang, X., Garry, D.~J., Yu, T., Chung, V., Mason, M., Liu, Z., Guan, Y., Yosef, N., Shendure, J., Telford, M.~J., Shapiro, E., Elowitz, M.~B., and Meyer, P. (2021).
\newblock Benchmarked approaches for reconstruction of in vitro cell lineages and in silico models of c. elegans and m. musculus developmental trees.
\newblock {\em Cell Systems}, 12(8):810--826.e4.

\bibitem[Gulcehre et~al., 2019]{Gulcehre2018HyperbolicAttentionNetworks}
Gulcehre, C., Denil, M., Malinowski, M., Razavi, A., Pascanu, R., Hermann, K.~M., Battaglia, P., Bapst, V., Raposo, D., Santoro, A., and de~Freitas, N. (2019).
\newblock Hyperbolic attention networks.
\newblock In {\em International Conference on Learning Representations}.

\bibitem[Guo et~al., 2022]{Guo_2022_CO-SNE}
Guo, Y., Guo, H., and Yu, S.~X. (2022).
\newblock Co-sne: Dimensionality reduction and visualization for hyperbolic data.
\newblock In {\em Proceedings of the IEEE/CVF Conference on Computer Vision and Pattern Recognition (CVPR)}, pages 21--30.

\bibitem[Halverson and March, 2005]{halverson2005steiner}
Halverson, D. and March, D. (2005).
\newblock Steiner tree constructions in hyperbolic space.
\newblock {\em preprint}.

\bibitem[Hanan, 1966]{hanan1966steiner}
Hanan, M. (1966).
\newblock On steiner’s problem with rectilinear distance.
\newblock {\em SIAM Journal on Applied mathematics}, 14(2):255--265.

\bibitem[He et~al., 2025a]{he2025helmhyperboliclargelanguage}
He, N., Anand, R., Madhu, H., Maatouk, A., Krishnaswamy, S., Tassiulas, L., Yang, M., and Ying, R. (2025a).
\newblock Helm: Hyperbolic large language models via mixture-of-curvature experts.

\bibitem[He et~al., 2025b]{He2025HyperbolicFoundationModels}
He, N., Madhu, H., Bui, N., Yang, M., and Ying, R. (2025b).
\newblock Hyperbolic deep learning for foundation models: A survey.
\newblock In {\em Proceedings of the 31st ACM SIGKDD Conference on Knowledge Discovery and Data Mining V.2}, KDD ’25, page 6021–6031. ACM.

\bibitem[Huang et~al., 2022]{huang2022riemannian}
Huang, C.-W., Aghajohari, M., Bose, J., Panangaden, P., and Courville, A.~C. (2022).
\newblock Riemannian diffusion models.
\newblock {\em Advances in Neural Information Processing Systems}, 35:2750--2761.

\bibitem[Huelsenbeck et~al., 2001]{Huelsenbeck2001BayesianPhylogenetic}
Huelsenbeck, J.~P., Ronquist, F., Nielsen, R., and Bollback, J.~P. (2001).
\newblock Bayesian inference of phylogeny and its impact on evolutionary biology.
\newblock {\em Science}, 294(5550):2310--2314.

\bibitem[Innami and Kim, 2006]{innami_steiner_2006}
Innami, N. and Kim, B.~H. (2006).
\newblock Steiner ratio for hyperbolic surfaces.
\newblock {\em Proceedings of the Japan Academy, Series A, Mathematical Sciences}, 82(6):77--79.
\newblock Publisher: The Japan Academy.

\bibitem[Ivanov and Tuzhilin, 2012]{ivanov2012steiner}
Ivanov, A.~O. and Tuzhilin, A.~A. (2012).
\newblock The steiner ratio gilbert--pollak conjecture is still open: Clarification statement.
\newblock {\em Algorithmica}, 62:630--632.

\bibitem[Ivanov et~al., 2000]{ivanov2000steiner}
Ivanov, A.~O., Tuzhilin, A.~A., and Cieslik, D. (2000).
\newblock Steiner ratio for riemannian manifolds.
\newblock {\em Russian Mathematical Surveys}, 55(6):1150.

\bibitem[Jaquier et~al., 2024]{Jaquier2024HGPLVM}
Jaquier, N., Rozo, L., Gonz\'{a}lez-Duque, M., Borovitskiy, V., and Asfour, T. (2024).
\newblock Bringing motion taxonomies to continuous domains via gplvm on hyperbolic manifolds.
\newblock In {\em Proceedings of the 41st International Conference on Machine Learning}, ICML'24. JMLR.org.

\bibitem[Kisfaludi-Bak and van Wordragen, 2025]{kisfaludi2025near}
Kisfaludi-Bak, S. and van Wordragen, G. (2025).
\newblock Near-optimal dynamic steiner spanners for constant-curvature spaces.
\newblock {\em arXiv preprint arXiv:2509.01443}.

\bibitem[Kleinberg, 2007]{kleinberg2007geographic}
Kleinberg, R. (2007).
\newblock Geographic routing using hyperbolic space.
\newblock In {\em IEEE INFOCOM 2007-26th IEEE International Conference on Computer Communications}, pages 1902--1909. IEEE.

\bibitem[Klimovskaia et~al., 2020]{Klimovskaia2020SingleCellPoincareMap}
Klimovskaia, A., Lopez-Paz, D., Bottou, L., and Nickel, M. (2020).
\newblock Poincar{\'e} maps for analyzing complex hierarchies in single-cell data.
\newblock {\em Nature Communications}.

\bibitem[Kruskal, 1956]{kruskal1956shortest}
Kruskal, J.~B. (1956).
\newblock On the shortest spanning subtree of a graph and the traveling salesman problem.
\newblock {\em Proceedings of the American Mathematical society}, 7(1):48--50.

\bibitem[Kunzmann and Hamacher, 2018]{kunzmann2018biotite}
Kunzmann, P. and Hamacher, K. (2018).
\newblock Biotite: a unifying open source computational biology framework in python.
\newblock {\em BMC bioinformatics}, 19:1--8.

\bibitem[Laarhoven and Ohlmann, 2011]{Laarhoven2011Randomized}
Laarhoven, J. and Ohlmann, J. (2011).
\newblock A randomized delaunay triangulation heuristic for the euclidean steiner tree problem in $\mathbb{R}^d$.
\newblock {\em J. Heuristics}, 17:353--372.

\bibitem[Lezcano~Casado, 2019]{lezcano2019trivializations}
Lezcano~Casado, M. (2019).
\newblock Trivializations for gradient-based optimization on manifolds.
\newblock {\em Advances in Neural Information Processing Systems}, 32.

\bibitem[Logan, 2015]{logan2015steiner}
Logan, A.~E. (2015).
\newblock {\em The Steiner Problem on Closed Surfaces of Constant Curvature}.
\newblock Brigham Young University.

\bibitem[Macaulay and Fourment, 2024]{Macaulay2024Dodonaphy}
Macaulay, M. and Fourment, M. (2024).
\newblock Differentiable phylogenetics via hyperbolic embeddings with dodonaphy.
\newblock {\em Bioinformatics Advances}, 4(1):vbae082.

\bibitem[Mathieu et~al., 2019]{mathieu2019continuous}
Mathieu, E., Le~Lan, C., Maddison, C.~J., Tomioka, R., and Teh, Y.~W. (2019).
\newblock Continuous hierarchical representations with poincar{\'e} variational auto-encoders.
\newblock {\em Advances in neural information processing systems}, 32.

\bibitem[Medbouhi et~al., 2024]{Medbouhi2024hHyperDGA}
Medbouhi, A.~A., Marchetti, G.~L., Polianskii, V., Kravberg, A., Poklukar, P., Varava, A., and Kragic, D. (2024).
\newblock Hyperbolic delaunay geometric alignment.
\newblock In Bifet, A., Davis, J., Krilavi{\v{c}}ius, T., Kull, M., Ntoutsi, E., and {\v{Z}}liobait{\.{e}}, I., editors, {\em Machine Learning and Knowledge Discovery in Databases. Research Track}, pages 111--126, Cham. Springer Nature Switzerland.

\bibitem[Mimori and Hamada, 2023]{Mimori2023Geophy}
Mimori, T. and Hamada, M. (2023).
\newblock Geophy: Differentiable phylogenetic inference via geometric gradients of tree topologies.
\newblock In Oh, A., Naumann, T., Globerson, A., Saenko, K., Hardt, M., and Levine, S., editors, {\em Advances in Neural Information Processing Systems}, volume~36, pages 36591--36619. Curran Associates, Inc.

\bibitem[Mishne et~al., 2023]{pmlr-v202-mishne23a}
Mishne, G., Wan, Z., Wang, Y., and Yang, S. (2023).
\newblock The numerical stability of hyperbolic representation learning.
\newblock In Krause, A., Brunskill, E., Cho, K., Engelhardt, B., Sabato, S., and Scarlett, J., editors, {\em Proceedings of the 40th International Conference on Machine Learning}, volume 202 of {\em Proceedings of Machine Learning Research}, pages 24925--24949. PMLR.

\bibitem[Mitchell, 1999]{Mitchell1999PTASTSP}
Mitchell, J. S.~B. (1999).
\newblock Guillotine subdivisions approximate polygonal subdivisions: A simple polynomial-time approximation scheme for geometric tsp, k-mst, and related problems.
\newblock {\em SIAM Journal on Computing}, 28(4):1298--1309.

\bibitem[Nagano et~al., 2019]{Nagano2019wrapped}
Nagano, Y., Yamaguchi, S., Fujita, Y., and Koyama, M. (2019).
\newblock A wrapped normal distribution on hyperbolic space for gradient-based learning.
\newblock In {\em International Conference on Machine Learning}.

\bibitem[Nickel and Kiela, 2017]{nickel2017poincare}
Nickel, M. and Kiela, D. (2017).
\newblock Poincar{\'e} embeddings for learning hierarchical representations.
\newblock {\em Advances in neural information processing systems}, 30.

\bibitem[Nickel and Kiela, 2018]{nickel_learning_2018}
Nickel, M. and Kiela, D. (2018).
\newblock Learning {Continuous} {Hierarchies} in the {Lorentz} {Model} of {Hyperbolic} {Geometry}.
\newblock {\em ArXiv}.

\bibitem[Nielsen and Nock, 2009]{HyperbolicVoronoiDiagramsMadeEasy2010}
Nielsen, F. and Nock, R. (2009).
\newblock Hyperbolic voronoi diagrams made easy.
\newblock {\em CoRR}, abs/0903.3287.

\bibitem[Plass et~al., 2018]{Plass2018_planaria_SingleCellData}
Plass, M., Solana, J., Wolf, F., Ayoub, S., Misios, A., Glažar, P., Obermayer, B., Theis, F., Kocks, C., and Rajewsky, N. (2018).
\newblock Cell type atlas and lineage tree of a whole complex animal by single-cell transcriptomics.
\newblock {\em Science}, 360:eaaq1723.

\bibitem[Powell, 1970]{Powell1970NumericalMethodsNonlinearAlgebraicEquations}
Powell, M. J.~D. (1970).
\newblock A hybrid method for nonlinear equations.
\newblock In Rabinowitz, P., editor, {\em Numerical Methods for Nonlinear Algebraic Equations}. Gordon and Breach.

\bibitem[Preparata and Shamos, 2012]{preparata2012computational}
Preparata, F.~P. and Shamos, M.~I. (2012).
\newblock {\em Computational geometry: an introduction}.
\newblock Springer Science \& Business Media.

\bibitem[Price et~al., 2009]{Price2009FastTree}
Price, M.~N., Dehal, P.~S., and Arkin, A.~P. (2009).
\newblock {FastTree: Computing Large Minimum Evolution Trees with Profiles instead of a Distance Matrix}.
\newblock {\em Molecular Biology and Evolution}, 26(7):1641--1650.

\bibitem[Saitou and Nei, 1987]{Saitou1987neighborjoining}
Saitou, N. and Nei, M. (1987).
\newblock {The neighbor-joining method: a new method for reconstructing phylogenetic trees.}
\newblock {\em Molecular Biology and Evolution}, 4(4):406--425.

\bibitem[Sala et~al., 2018]{Sala2018_RepresentationTradeoffsHyperbolicEmbeddings}
Sala, F., De~Sa, C., Gu, A., and Re, C. (2018).
\newblock Representation tradeoffs for hyperbolic embeddings.
\newblock In Dy, J. and Krause, A., editors, {\em Proceedings of the 35th International Conference on Machine Learning}, volume~80 of {\em Proceedings of Machine Learning Research}, pages 4460--4469. PMLR.

\bibitem[Sarkar, 2011]{sarkar2011low}
Sarkar, R. (2011).
\newblock Low distortion delaunay embedding of trees in hyperbolic plane.
\newblock In {\em International symposium on graph drawing}, pages 355--366. Springer.

\bibitem[Satterthwaite, 1946]{satterthwaite1946}
Satterthwaite, F.~E. (1946).
\newblock An approximate distribution of estimates of variance components.
\newblock {\em Biometrics}, 2 6:110--4.

\bibitem[Schuirmann, 1987]{schuirmann1987}
Schuirmann, D.~J. (1987).
\newblock A comparison of the two one-sided tests procedure and the power approach for assessing the equivalence of average bioavailability.
\newblock {\em J. Pharmacokinet. Biopharm.}, 15(6):657--680.

\bibitem[Shimizu et~al., 2021]{Shimizu2021HyperbolicNNPlusPlus}
Shimizu, R., Mukuta, Y., and Harada, T. (2021).
\newblock Hyperbolic neural networks++.
\newblock In {\em International Conference on Learning Representations}.

\bibitem[Smith et~al., 1981]{smith1981n}
Smith, J.~M., Lee, D., and Liebman, J.~S. (1981).
\newblock An o (n log n) heuristic for steiner minimal tree problems on the euclidean metric.
\newblock {\em Networks}, 11(1):23--39.

\bibitem[Sokal and Michener, 1958]{Sokal1958UPGMA}
Sokal, R.~R. and Michener, C.~D. (1958).
\newblock A statistical method for evaluating systematic relationships.
\newblock {\em University of Kansas science bulletin}, 38:1409--1438.

\bibitem[Studier and Keppler, 1988]{Studier1988NJ}
Studier, J. and Keppler, K. (1988).
\newblock A note on the neighbor-joining algorithm of saitou and nei.
\newblock {\em Molecular biology and evolution}, 5:729--31.

\bibitem[Taleb et~al., 2025]{Taleb2025TowardsDiscoveringHierarchyOlfactoryHyperbolic}
Taleb, F., Medbouhi, A.~A., Marchetti, G.~L., and Kragic, D. (2025).
\newblock Towards discovering the hierarchy of the olfactory perceptual space via hyperbolic embeddings.
\newblock {\em Science Communications Worldwide}.

\bibitem[Thompson, 1973]{thompson1973method}
Thompson, E. (1973).
\newblock The method of minimum evolution.

\bibitem[Tifrea et~al., 2018]{tifrea2018poincar}
Tifrea, A., B{\'e}cigneul, G., and Ganea, O.-E. (2018).
\newblock Poincar$\backslash$'e glove: Hyperbolic word embeddings.
\newblock {\em arXiv preprint arXiv:1810.06546}.

\bibitem[Ungar, 2009]{Ungar2009}
Ungar, A.~A. (2009).
\newblock {\em A Gyrovector Space Approach to Hyperbolic Geometry}.
\newblock Springer International Publishing.

\bibitem[Vaswani et~al., 2017]{Vaswani2017Transformer}
Vaswani, A., Shazeer, N., Parmar, N., Uszkoreit, J., Jones, L., Gomez, A.~N., Kaiser, L.~u., and Polosukhin, I. (2017).
\newblock Attention is all you need.
\newblock In Guyon, I., Luxburg, U.~V., Bengio, S., Wallach, H., Fergus, R., Vishwanathan, S., and Garnett, R., editors, {\em Advances in Neural Information Processing Systems}, volume~30. Curran Associates, Inc.

\bibitem[Welch, 1938]{welch1938}
Welch, B.~L. (1938).
\newblock The significance of the difference between two means when the population variances are unequal.
\newblock {\em Biometrika}, 29(3-4):350--362.

\bibitem[Yan et~al., 1997]{yan1997steiner}
Yan, G., Albrecht, A., Young, G., and Wong, C.-K. (1997).
\newblock The steiner tree problem in orientation metrics.
\newblock {\em journal of computer and system sciences}, 55(3):529--546.

\bibitem[Yang, 2006]{yang2006hybrid}
Yang, B. (2006).
\newblock A hybrid evolutionary algorithm for the euclidean steiner tree problem using local searches.
\newblock In {\em International Conference on Knowledge-Based and Intelligent Information and Engineering Systems}, pages 60--67. Springer.

\bibitem[Yang et~al., 2024]{Yang2024Hypformer}
Yang, M., Verma, H., Zhang, D.~C., Liu, J., King, I., and Ying, R. (2024).
\newblock Hypformer: Exploring efficient transformer fully in hyperbolic space.
\newblock In {\em Proceedings of the 30th ACM SIGKDD Conference on Knowledge Discovery and Data Mining}, KDD '24, page 3770–3781, New York, NY, USA. Association for Computing Machinery.

\bibitem[Zachariasen and Winter, 1999]{zachariasen_concatenation-based_1999}
Zachariasen, M. and Winter, P. (1999).
\newblock Concatenation-{Based} {Greedy} {Heuristics} for the {Euclidean} {Steiner} {Tree} {Problem}.
\newblock {\em Algorithmica}, 25(4):418--437.

\bibitem[Zhou and Sharpee, 2021]{Zhou2021HyperbolicGeometryGeneExpression}
Zhou, Y. and Sharpee, T.~O. (2021).
\newblock Hyperbolic geometry of gene expression.
\newblock {\em iScience}, 24(3):102225.

\bibitem[Zhou et~al., 2018]{Zhou2018HyperbolicGeometryOlfactorySpace}
Zhou, Y., Smith, B.~H., and Sharpee, T.~O. (2018).
\newblock Hyperbolic geometry of the olfactory space.
\newblock {\em Science Advances}, 4(8):eaaq1458.

\end{thebibliography}

\newpage

\clearpage
\appendix
\thispagestyle{empty}

\clearpage
\onecolumn
\aistatstitle{Randomized HyperSteiner: Supplementary Materials}

\section{Additional Details of Randomized HyperSteiner}
\label{app:heuristic}

This appendix provides further details on the design and implementation of the proposed Randomized HyperSteiner (RHS) method. 

\subsection{Expansion Phase}
\label{app:exp_phase}
\begin{wrapfigure}{r}{0.4\linewidth}
\vspace{-10pt}
  \centering
  \includegraphics[width=\linewidth]{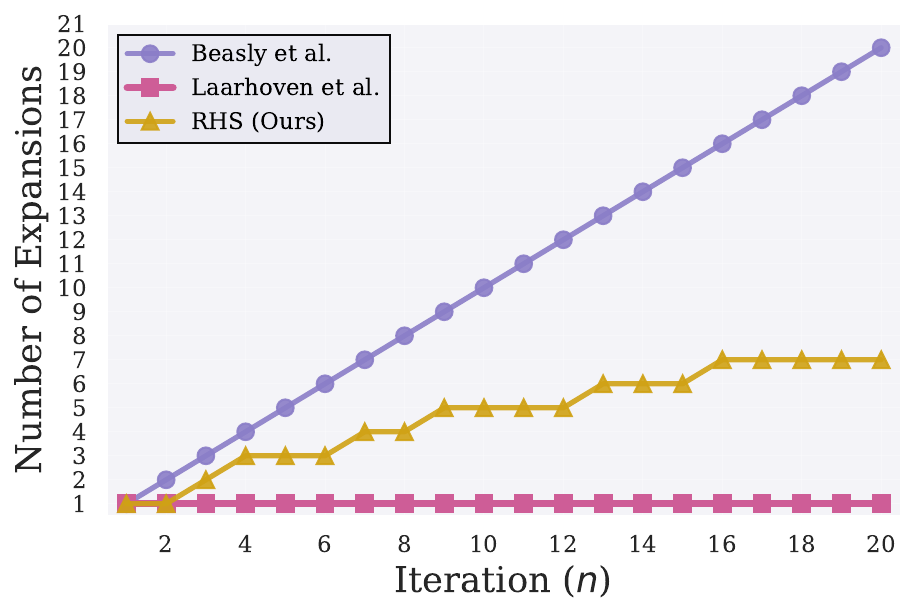}
  \caption{Comparison of expansion strategies.}
  \label{fig:exps}
  \vspace{-14pt}
\end{wrapfigure}
In the expansion phase, we iteratively construct the Delaunay triangulation (DT) of the union of the terminal set and the current Steiner candidates $S$. Following \citet{Laarhoven2011Randomized}, we begin by sampling an insertion probability $p \sim \mathcal{U}[0.3,0.6]$. Then, for each triangle in $\text{DT}(P \cup S)$, we sample from a Bernoulli distribution with success probability $p$ to decide whether to insert its hyperbolic barycenter as a new Steiner candidate. For three points $(u, v, w)$ in the Klein disk, the hyperbolic barycenter is given by
\[
m_{uvw} = \frac{\gamma_u u + \gamma_v v + \gamma_w w}{\gamma_u + \gamma_v + \gamma_w}, \quad \gamma_u = \frac{1}{\sqrt{1-\|u\|^2}},
\]
where $\gamma_u$ is the Lorentz factor \citep{Ungar2009}.

At iteration $n$, we perform $\lfloor 2\sqrt{n} - 1 \rfloor$ expansion steps (see Algorithm~\ref{alg:rad_hyper}). As discussed in Section~\ref{sec:meth}, this schedule provides finer control of the search process and allows revisiting promising configurations. By contrast, \citet{Beasley1994DTHeuristicSMT} perform exactly $n$ expansions per iteration, which scales linearly and may lead to excessive growth in later stages (see Figure~\ref{fig:exps}).

Detailed ablations regarding the insertion probabilities and the expansion scheduler can be found in Section~\ref{app:ins_prob} and Section~\ref{app:exp_sche}.

\subsection{Verification of Degree and Angle Conditions}

Algorithms~\ref{alg:red} and~\ref{alg:edge-in} enforce, respectively, the degree and angle conditions for candidate Steiner configurations. 

\paragraph{Degree condition.} Algorithm~\ref{alg:red} locally verifies degree constraints by evaluating candidate full Steiner trees on three- and four-point neighborhoods. For the three-point case, the system of equations in \eqref{eq:system} can be solved using standard numerical solvers. For the four-point case, we follow the iterative approximation as done in \citet{garcia2025hypersteiner}. Moreover, Algorithm~\ref{alg:red} leverages the fact that, given optimal Steiner points $S$, the resulting Steiner tree is the minimum spanning tree (MST) on $P \cup S$. Since in hyperbolic space the MST is always a subgraph of the DT \citep{preparata2012computational}, we can compute the MST via the DT in $\mathcal{O}(|P \cup S'| \log |P \cup S'|)$ time, avoiding redundant computation \citep{smith1981n}.

\paragraph{Angle condition.} Algorithm~\ref{alg:edge-in} checks the $120^\circ$ angle condition in constant time using the hyperbolic cosine rule. Specifically, the angle between edges $(x_i, x_j)$ and $(x_j, x_l)$ is given by

\[
\theta = \arccos\!\left(\frac{\cosh\!\left(d_{\mathbb{K}}(x_j, x_i)\right)\cosh\!\left(d_{\mathbb{K}}(x_j, x_l)\right) - \cosh\!\left(d_{\mathbb{K}}(x_i, x_l)\right)}{\sinh\!\left(d_{\mathbb{K}}(x_j, x_i)\right)\sinh\!\left(d_{\mathbb{K}}(x_j, x_l)\right)}\right).
\]

\begin{algorithm}[ht!]
\caption{Reduction via Degree Condition Verification --- Adapted from \cite{Beasley1994DTHeuristicSMT}}
\label{alg:red}
\begin{algorithmic}

\vspace{1mm}
\STATE \textbf{Input}: Terminal set $P\subset\mathbb{K}^2$, Steiner set $S\subset\mathbb{K}^2$
\vspace{1mm}

\STATE $S' \leftarrow S$
\STATE $T \leftarrow \text{MST}(P \cup S')$

\REPEAT
    \FOR{each $s \in S'$}
        \STATE $d \leftarrow \deg_T(s)$
        \IF{$d \leq 2$ \textbf{or} $d \geq 5$}
            \STATE $S' \leftarrow S' \setminus \{s\}$
        \ELSIF{$d = 3$}
            \STATE $N \leftarrow$ neighbors of $s$ in $T$
            \STATE $s^* \leftarrow$ optimal Steiner point for triangle $N$
            \IF{$s^*$ exists}
                \STATE Replace $s$ with $s^*$ in $S'$
            \ELSE
                \STATE $S' \leftarrow S' \setminus \{s\}$
            \ENDIF
        \ENDIF
    \ENDFOR
    \STATE $T \leftarrow \text{MST}(P \cup S')$
\UNTIL{no changes in $S'$}

\FOR{each $s \in S'$}
    \IF{$\deg_T(s) = 4$}
        \STATE $N \leftarrow$ neighbors of $s$ in $T$
        \STATE $Q, (s_1, s_2) \leftarrow$ optimal Full Steiner tree for 4-point set $N$, and Steiner points
        \STATE $S' \leftarrow (S' \setminus \{s\}) \cup \{s_1, s_2\}$
        \STATE Substitute the subgraph of the neighborhood of $s$ on $T$ by $Q$
    \ENDIF
\ENDFOR

\vspace{1mm}
\STATE \textbf{Output}: Reduced Steiner set $S'$, tree $T$ on $P \cup S'$
\vspace{1mm}
\end{algorithmic}
\end{algorithm}

\begin{algorithm}[ht!]
\caption{Expansion via Angle Condition Verification --- Adapted from \cite{thompson1973method}}
\label{alg:edge-in}
\begin{algorithmic}

\vspace{1mm}
\STATE \textbf{Input}: Terminal point set $P\subset\mathbb{K}^2$, Steiner point set $S\subset\mathbb{K}^2$, tree $T$ on $P \cup S$
\vspace{1mm}

\STATE Let $E$ be the set of edges in $T$.
\STATE Set $T' \leftarrow T$, $S' \leftarrow S$, and edge list $\leftarrow \emptyset$.

\FOR{$i = 1, \ldots, |X \cup S|$}
    \FOR{$j$ such that $(x_i, x_j) \in E$}
        \FOR{$l$ such that $(x_j, x_l) \in E$}
            \IF{$(x_j, x_l)$ meets $(x_i, x_j)$ at angle less than $120^{\circ}$}
                \STATE edge list $\leftarrow \text{edge list}\cup (x_j, x_l)$.
            \ENDIF
        \ENDFOR
        \IF{edge list $\neq \emptyset$}
            \STATE From edge list, select $(x_j, x_k)$ which minimizes the angle with $(x_i, x_j)$.
            \STATE $s^* \leftarrow$ optimal Steiner point for triangle $\triangle(x_i, x_j, x_k)$
            \IF{$s^*$ exists}
            \STATE Let $S' \leftarrow S \cup s^*$.
            \STATE Remove edges $(x_i, x_j)$ and $(x_j, x_k)$ from $T'$.
            \STATE Add edges $(x_i, s^*)$, $(x_j, s^*)$, and $(x_k, s^*)$ to $T'$.
            \ENDIF
         \STATE Set edge list $\leftarrow \emptyset$.
        \ENDIF
    \ENDFOR
\ENDFOR

\vspace{1mm}
\STATE \textbf{Output}: Expanded Steiner set $S'$, tree $T'$ on $P \cup S'$
\vspace{1mm}

\end{algorithmic}
\end{algorithm}

\vspace{50pt}
\subsection{Global Refinement via Riemannian Gradient Descent}
\label{app:reim_desc}
As discussed by \citet{lezcano2019trivializations}, evaluating the exponential map can be computationally expensive, motivating the use of retractions as efficient first-order approximations.

\begin{defn}[Retraction]\label{def:retraction}
A differentiable map $r : T\mathcal{M} \to \mathcal{M}$ is a \emph{retraction} if, for every $p \in \mathcal{M}$, the map $r_p : T_p\mathcal{M}\to \mathcal{M}$ satisfies:
\begin{enumerate}
    \item $r_p(0) = p$ (identity at the origin), and
    \item $(\mathrm{d}\, r_p)_0 = \mathrm{Id}$ (first-order equivalence to the exponential map).
\end{enumerate}
\end{defn}

Retractions provide efficient approximations of geodesic flow near the origin. For our problem, the update rule becomes
\[
S^{(t+1)} = r_{S^{(t)}}\left(-\eta \nabla L\left(S^{(t)}\right)\right),
\]
where $L$ is the tree length function and $\eta$ is the learning rate. As shown by \citet{boumal2016global}, gradient descent with retractions preserves convergence guarantees of exponential map-based methods while being significantly more efficient. 

To further reduce runtime, we apply early stopping upon convergence (see Appendix~\ref{app:exp}). While we use a fixed learning rate, more advanced line-search strategies (e.g., Armijo backtracking \citep{absil2008optimization}) could further improve stability and convergence speed.

\subsection{Exploration Policy}
Following \citet{Laarhoven2011Randomized}, we set the maximum number of iterations to $N^* = \lfloor \sqrt{|P|}\rfloor$. This choice balances exploration and efficiency, enabling sufficient search depth to discover improved topologies while avoiding wasted computation. The algorithm maintains the best solution found ($T^*$) and continues exploring until either (i) a better solution is identified ($L(T) < L(T^*)$), triggering a restart from the new checkpoint, or (ii) the iteration budget $N^*$ is exhausted. Figure~\ref{fig:explore} illustrates this policy.

\begin{figure}[ht!]
    \centering
    \includegraphics[width=0.95\linewidth]{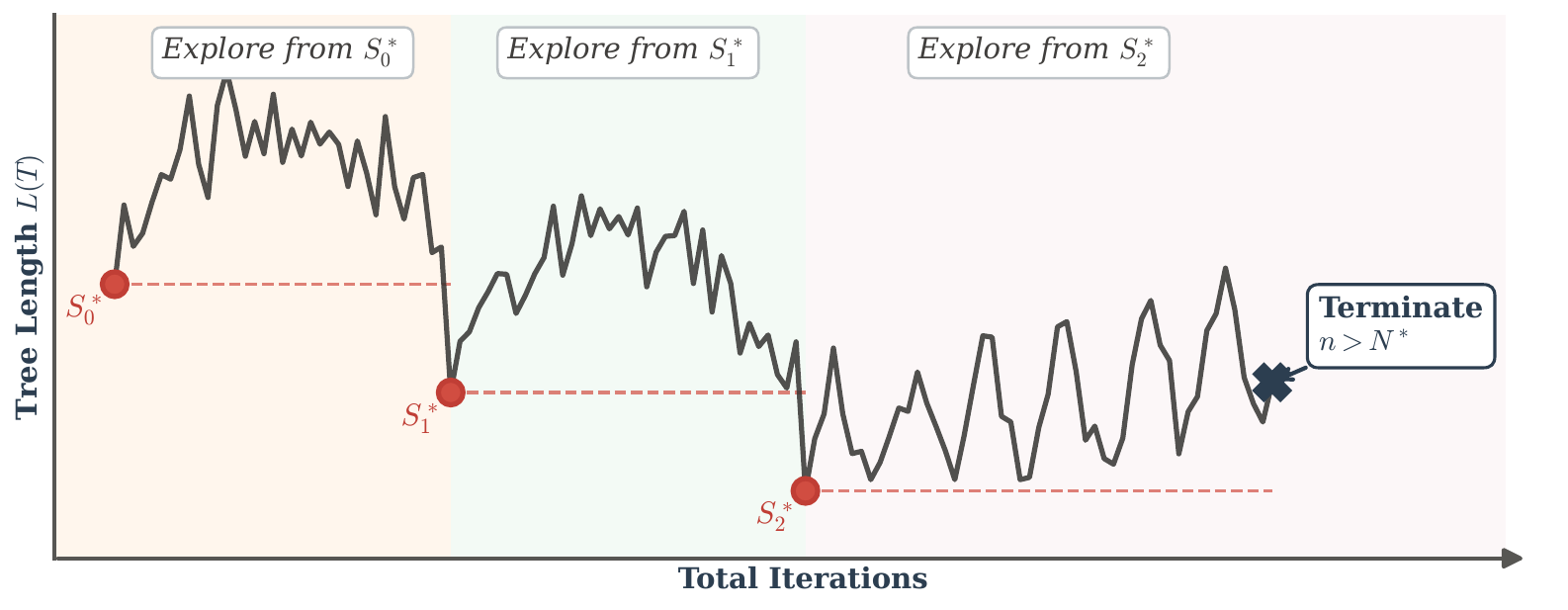}
    \caption{Exploration policy of RHS. The search restarts from a new checkpoint when a better configuration is discovered.}
    \label{fig:explore}
\end{figure}

\section{Review on Original HyperSteiner}
\label{app:og_hyp}

In this section, we review the original HyperSteiner algorithm presented in \cite{garcia2025hypersteiner}. As mentioned in Section~\ref{sec:rel}, Algorithm~\ref{alg:og_hyper} is based on the Smith-Lee-Liebman (SLL) algorithm \citep{smith1981n} with a minor modification from \cite{zachariasen_concatenation-based_1999}.

\begin{algorithm}[ht]
\caption{Original HyperSteiner}
\label{alg:og_hyper}
\begin{algorithmic}
\vspace{1mm}
\STATE \textbf{Input}: Terminal point set $P\subset\mathbb{K}^2$
\vspace{1mm}

\STATE $\triangleright\, \textsc{\textbf{Deterministic MST-Based Expansion}}$\vspace{-5pt}

    \begin{tcolorbox}[colback=lightcyan, colframe=cyan!70!black, boxrule=0.5pt]
\begin{enumerate}
    \item Construct the Delaunay triangulation, $\text{DT}(P)$.
    \item Construct $\text{MST}(P)$ (Kruskal algorithm) and simultaneously build a priority queue as follows:
    \begin{enumerate}[label*=\arabic*.]
        \item Mark all the triangles $\sigma \in \text{DT}(P)$ containing two edges of $\text{MST}(P)$ and admitting an FST. 
        \item Place the FSTs of marked triangles $\sigma$ in a queue $Q$ prioritized on $\rho(\sigma)$ (smaller first).
    \end{enumerate}
    \end{enumerate}
    \end{tcolorbox}

    \STATE $\triangleright\, \textsc{\textbf{Heuristic Construction}}$\vspace{-5pt}
\begin{tcolorbox}[colback=lightyellow, colframe=orange!70!black, boxrule=0.5pt]

    \begin{enumerate}
     \setcounter{enumi}{2}

    \item Add the FST of the $4$-terminal subsets:
    \begin{enumerate}[label*=\arabic*.]
        \item For each marked triangle $\sigma$, find its adjacent triangles $\sigma'$ such that $\sigma$ and $\sigma'$ contain three edges of the $\text{MST}(P)$.
        \item Compute the FST $\sigma \cup \sigma'$ for each of the two possible topologies and add the minimal one to $Q$. 
    \end{enumerate}
    \item Convert $Q$ to an ordered list and append to it the edges of $\text{MST}(P)$, sorted in non-decreasing order.
    \item Let $T$ be an empty tree. An FST in $Q$ is added to $T$ if it does not create a cycle (greedy concatenation).  
\end{enumerate}

\end{tcolorbox}
    
\vspace{1mm}
\STATE \textbf{Output}: A heuristic SMT $T$ 
\vspace{1mm}
\end{algorithmic}
\end{algorithm}

The SLL algorithm uses the MST as an initial approximation of the SMT. This approach relies on the Gilbert-Pollak conjecture that the Steiner ratio satisfies $\rho(P) \geq \sqrt{3}/2 \approx 0.866$ for any $P\subset \mathbb{R}^2$, which implies ${L(\text{MST}(P)) \leq 2/\sqrt{3} \cdot L(\text{SMT}(P))}$\footnote{The Gilbert-Pollak conjecture remains open \citep{ivanov2012steiner}. The best known lower bound is $\sim 0.824$ \citep{news_steiner_ratio}.}. Similarly, the original HyperSteiner is based on the fact that, as described in Section~\ref {sec:bg},  on hyperbolic space the Steiner ratio satisfies $\rho(P) \geq 1/2$ for any $P\subset \mathbb{K}^2$ \citep{innami_steiner_2006}.

The original HyperSteiner algorithm replaces MST edges connecting 3-terminal and 4-terminal subsets with their corresponding full Steiner trees (FSTs). Since the MST is a subgraph of the Delaunay triangulation \citep{preparata2012computational}, the algorithm restricts its search to vertices of Delaunay triangles.

This algorithm follows the standard three-phase DT-based framework from Section~\ref{sec:rel}: expansion, heuristic construction, and decision policy. The expansion phase builds a priority queue of 3-terminal FSTs based on Steiner ratios. The construction phase adds 4-terminal FSTs from adjacent triangle pairs. The decision policy uses greedy concatenation, accepting all improvements that do not create cycles.

As shown in Section~\ref{sec:exp}, the original HyperSteiner can be myopic due to its reliance on the MST as a guiding structure, leading to locally optimal but globally suboptimal configurations. However, this approach achieves fast computational complexity of $\mathcal{O}(|P| \log |P|)$.

\clearpage
\section{Uniqueness of Relatively Minimal Trees in Hyperbolic Space}
\label{app:uniq}

\begin{lemm}
\label{lem:segment_isoptic}
Let $x,y$ be two distinct points strictly inside the Klein disk and let $z\neq z'$ be two distinct points. Parametrize the geodesic segment
\[
s(t) = (1-t)z + tz', \qquad t\in[0,1].
\]
Then the entire segment $[z,z']$ lies in the $120^\circ$--isoptic of $(x,y)$ if and only if $z=z'$. In particular, if $z\neq z'$ the whole segment cannot be contained in the isoptic.
\end{lemm}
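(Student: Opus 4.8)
The plan is to pass from the transcendental isoptic equation $\varphi_{x,y,2\pi/3}(s)=0$ of Proposition~\ref{prop:isoptic} to a genuine polynomial identity in the parameter $t$, and then to exhibit a single value of $t$ at which this polynomial is nonzero. First I would lift everything to the homogeneous (Lorentzian) coordinates used throughout, in which the Klein points $x,y$ are time-like ($\langle x,x\rangle,\langle y,y\rangle<0$) and the straight segment is affine: $s(t)$ depends affinely on $t$, so $\langle x,s(t)\rangle$ and $\langle y,s(t)\rangle$ have degree $\le 1$ in $t$ while $\langle s(t),s(t)\rangle$ has degree $\le 2$. Because $\cos(2\pi/3)=-\tfrac12$, squaring the defining equation removes the radical: setting
\[
\Phi(s)=\big(\langle x,s\rangle\langle y,s\rangle-\langle x,y\rangle\langle s,s\rangle\big)^2-\tfrac14\big(\langle x,s\rangle^2-\langle x,x\rangle\langle s,s\rangle\big)\big(\langle y,s\rangle^2-\langle y,y\rangle\langle s,s\rangle\big),
\]
every point on the isoptic satisfies $\Phi(s)=0$. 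Substituting $s=s(t)$ gives a polynomial $p(t):=\Phi(s(t))$ of degree at most $4$.

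Second, I would convert the hypothesis into a polynomial identity. If the entire segment lay in the isoptic, then $p(t)=0$ for all $t\in[0,1]$; a polynomial vanishing on an interval is identically zero, so $p\equiv 0$ on all of $\RR$. It therefore suffices to produce one real $t_\ast$ with $p(t_\ast)\neq 0$. The natural choice is an \emph{ideal endpoint} of the chord: since $\langle s(t),s(t)\rangle$ has leading coefficient $\|z'-z\|^2>0$ and value $\|z\|^2-1<0$ at $t=0$, it admits real roots $t_\ast$, precisely where the extended chord meets the boundary circle. At such a root $\langle s(t_\ast),s(t_\ast)\rangle=0$, so every term carrying a factor $\langle s,s\rangle$ drops out and
\[
p(t_\ast)=\langle x,s(t_\ast)\rangle^2\langle y,s(t_\ast)\rangle^2-\tfrac14\langle x,s(t_\ast)\rangle^2\langle y,s(t_\ast)\rangle^2=\tfrac34\,\langle x,s(t_\ast)\rangle^2\langle y,s(t_\ast)\rangle^2.
\]

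Finally I would show this residual quantity is strictly positive. The vector $s(t_\ast)$ is null and nonzero (its homogeneous time-coordinate equals $1$), whereas $x$ is time-like; in Lorentzian signature the orthogonal complement of a time-like vector is positive definite and hence contains no nonzero null vector, so $\langle x,s(t_\ast)\rangle\neq 0$, and likewise $\langle y,s(t_\ast)\rangle\neq 0$. Thus $p(t_\ast)>0$, contradicting $p\equiv 0$. This forces $z=z'$, establishing the contrapositive of the nontrivial direction (the case $z=z'$ being the degenerate single point). I expect the only routine delicacy to be bookkeeping the degrees and sign conventions of the Lorentzian form; the conceptual crux — that the surviving factor $\tfrac34(\cdot)^2$ cannot vanish because a time-like vector is never orthogonal to a null one — is exactly the ``subtended angle $\to 0$ at the ideal boundary'' phenomenon rendered algebraically, and is where the argument really bites.
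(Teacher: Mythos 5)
Your proof is correct, and while it shares the paper's first move --- squaring the isoptic equation $L+\tfrac12\sqrt{p_1p_2}=0$ to obtain a quartic $p(t)$ in the segment parameter and reducing the claim to $p\not\equiv 0$ --- it diverges at the decisive step. The paper first normalizes the segment by a hyperbolic isometry to $[(0,0),(r,0)]$ and then computes the coefficient of $t^4$ with a computer algebra system, exhibiting it as $r^4\bigl(3(bd-1)^2+(b-d)^2\bigr)$ and arguing this sum of squares cannot vanish for interior foci. You instead extend $p$ to all of $\mathbb{R}$ (legitimate, since a polynomial vanishing on $[0,1]$ vanishes identically) and evaluate at an ideal point $t_\ast$ of the chord, i.e.\ a real root of the upward-opening quadratic $\langle s(t),s(t)\rangle$, which exists because $\langle s(0),s(0)\rangle=\|z\|^2-1<0$. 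There the quartic collapses to $\tfrac34\langle x,s(t_\ast)\rangle^2\langle y,s(t_\ast)\rangle^2$, and positivity follows from the Lorentzian fact that a time-like vector ($x$ or $y$ interior) cannot be orthogonal to a nonzero null vector ($s(t_\ast)$ on the boundary, with homogeneous time-coordinate $1$). Your route buys a coordinate-free, CAS-free argument that also generalizes verbatim to any isoptic angle $\alpha\in(0,\pi)$ (the residue becomes $\sin^2\alpha\,\langle x,s\rangle^2\langle y,s\rangle^2$), at the cost of needing the small observations that the roots $t_\ast$ may lie outside $[0,1]$ and that $z,z'$ must lie in the disk for the real roots to exist --- both of which you handle, and both of which hold in the intended application since the isoptic itself lies inside $\mathbb{K}^2$. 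The paper's route is more computational but verifies the leading coefficient directly, which some readers may prefer as it avoids reasoning about points outside the original segment.
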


\begin{proof}
The automorphism group of the disk acts transitively on ordered pairs of distinct interior points and consists of hyperbolic isometries. Thus there exists a disk automorphism $\Phi$ with
\[
\Phi(z)=(0,0), \qquad \Phi(z')=(r,0) \text{ with } r\in (0,1).
\]
Because $\Phi$ is an isometry, the hyperbolic distance is preserved and
\[
0\neq d_{\mathbb K}(z,z') = d_{\mathbb K}((0,0),(r,0)) = 2\tanh^{-1}(r),
\]
so $r = \tanh(\tfrac12 d_{\mathbb K}(z,z'))\in (0,1)$. In particular $r$ is uniquely determined by the hyperbolic length of $[z,z']$. The isoptic property is preserved by $\Phi$, so it suffices to prove the statement for the normalized configuration with $\hat{z}:=\Phi(z)=(0,0)$ and $\hat{z}':=\Phi(z')=(r,0)$.

Write $\hat{x}:=\Phi(x)=(a,b)$ and $\hat{y}:=\Phi(y)=(c,d)$ with $a^2+b^2<1,\;c^2+d^2<1$. Parametrize the normalized segment by
\[
s(t) = (rt,0), \qquad t\in[0,1].
\]
Let $\langle u,v\rangle=-1+u_0v_0+u_1v_1$ denote the Lorentzian inner product. Define
\[
\begin{aligned}
u(t)&=\langle \hat{x},s(t)\rangle, &
v(t)&=\langle \hat{y},s(t)\rangle, &
w(t)&=\langle s(t),s(t)\rangle,\\
L(t)&=u(t)v(t)-\langle \hat{x},\hat{y}\rangle w(t), &
p_1(t)&=u(t)^2-\langle \hat{x},\hat{x}\rangle w(t), &
p_2(t)&=v(t)^2-\langle \hat{y},\hat{y}\rangle w(t).
\end{aligned}
\]
Hence, we can rewrite the isoptic condition of Proposition~\ref{prop:isoptic}, as
\[
\varphi(s(t)) = L(t) + \tfrac12 \sqrt{p_1(t)p_2(t)} = 0.
\]
If $\varphi(s(t))\equiv 0$ on the entire segment, then squaring yields the polynomial identity
\begin{equation}
\label{eq:quartic}
    Q(t):=4L(t)^2 - p_1(t)p_2(t) \equiv 0.
\end{equation}

Since $\deg Q \le 4$, this means all coefficients of $Q(t)$ vanish.

Expanding $Q$ with a Computer Algebra System (CAS), as done in Figure~\ref{fig:coeff_code}, one finds that the coefficient of $t^4$ is
\[
E_4 = r^4\Big(3b^2d^2 + b^2 - 8bd + d^2 + 3\Big)
     = r^4\Big(3(bd-1)^2 + (b-d)^2\Big).
\]
Because $0<r<1$ we have $r^4>0$, and the parenthesis is a sum of two squares, hence nonnegative. Equality $E_4=0$ would force simultaneously
\[
(bd)r^2 = 1, \qquad b = dr^2,
\]
which implies $d^2=1/r^4>1$. This contradicts the assumption that $\hat{y}=(c,d)$ lies strictly inside the unit disk, since $c^2+d^2<1$ requires $|d|<1$. Therefore $E_4>0$ for any two interior foci $\hat{x},\hat{y}$.

Hence the quartic $Q(t)$ cannot vanish identically on the line. Thus the entire nontrivial segment $[z,z']$ cannot lie in the isoptic. The only case where the whole segment is contained is when $z=z'$, which is trivial.
\end{proof}

\begin{figure*}[t!]
    \centering
    \includegraphics[width=0.8\textwidth]{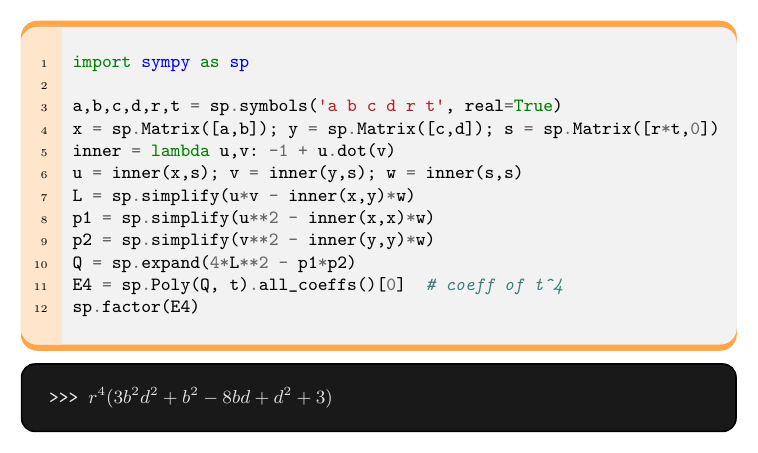}
    \vspace{-5mm}
    \caption{Simple SymPy code snippet for computing the coefficients of the quartic presented in Equation~\ref{eq:quartic}.}
    \vspace{-2mm}
    \label{fig:coeff_code}
\end{figure*}

\begin{prop}[\cite{Bridson1999}]
\label{prop:convexity_hadamard}
Let $(\mathcal{H}, g)$ be a Hadamard manifold, i.e., a complete, simply connected Riemannian manifold with sectional curvature $K < 0$ (this includes hyperbolic spaces). For any two geodesics $\gamma, \eta: [0,1] \to \mathcal{H}$, the distance function $t \mapsto d(\gamma(t), \eta(t))$ is convex on $[0,1]$. That is,
\begin{equation}
d(\gamma(t), \eta(t)) \leq (1-t) d(\gamma(0), \eta(0)) + t d(\gamma(1), \eta(1))
\end{equation}
for all $t \in [0,1]$.
\end{prop}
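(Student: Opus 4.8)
The plan is to prove this by comparison geometry: use the Cartan--Hadamard theorem to identify $(\mathcal{H}, g)$ as a CAT(0) space and then transfer the elementary convexity of Euclidean distance through triangle comparison, exactly in the spirit of \cite{Bridson1999}. First I would recall that a complete, simply connected manifold of nonpositive sectional curvature is CAT(0): the exponential map is a diffeomorphism, so geodesics between any two points are unique, and every geodesic triangle is no thicker than its comparison triangle in the Euclidean plane $\mathbb{R}^2$. This guarantees all comparison objects below are well defined. Throughout, $\gamma$ and $\eta$ are taken with affine (constant-speed) parametrization, so that the parameter $t$ corresponds to the fraction $t$ of arc length along each geodesic.

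The core ingredient is a \emph{cone} sub-lemma: if two geodesics $\alpha, \beta \colon [0,1] \to \mathcal{H}$ share a common endpoint $\alpha(0) = \beta(0)$, then $d(\alpha(t), \beta(t)) \le t\, d(\alpha(1), \beta(1))$ for all $t$. To see this, form the comparison triangle for $\triangle(\alpha(0), \alpha(1), \beta(1))$ in $\mathbb{R}^2$. By similar triangles, the Euclidean distance between the comparison points at parameter $t$ on the two sides emanating from the shared vertex equals exactly $t$ times the opposite side length; the CAT(0) inequality then bounds $d(\alpha(t), \beta(t))$ above by this Euclidean value. The symmetric argument, with the geodesics instead sharing the endpoint at $t = 1$, gives $d(\alpha(t), \beta(t)) \le (1-t)\, d(\alpha(0), \beta(0))$.

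With this in hand, the general statement follows by splitting the geodesic quadrilateral along a diagonal. I would introduce the auxiliary geodesic $\sigma \colon [0,1] \to \mathcal{H}$ from $\gamma(0)$ to $\eta(1)$, so the triangle inequality gives $d(\gamma(t), \eta(t)) \le d(\gamma(t), \sigma(t)) + d(\sigma(t), \eta(t))$. Since $\gamma$ and $\sigma$ share the endpoint $\gamma(0) = \sigma(0)$, the cone sub-lemma yields $d(\gamma(t), \sigma(t)) \le t\, d(\gamma(1), \eta(1))$; and since $\sigma$ and $\eta$ share the endpoint $\sigma(1) = \eta(1)$, it gives $d(\sigma(t), \eta(t)) \le (1-t)\, d(\gamma(0), \eta(0))$. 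Adding these two bounds produces precisely the desired inequality $d(\gamma(t), \eta(t)) \le (1-t)\, d(\gamma(0), \eta(0)) + t\, d(\gamma(1), \eta(1))$.

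The main obstacle is not the final assembly, which is routine once the framework is in place, but justifying the passage to CAT(0) rigorously: one must invoke Cartan--Hadamard to secure uniqueness of geodesics and hence well-definedness of comparison triangles, and verify the CAT(0) inequality under the stated hypothesis $K < 0$ (indeed $K \le 0$ already suffices). A secondary technicality is handling degenerate configurations, for instance when $\gamma(t)$ and $\eta(t)$ coincide or when a side of a comparison triangle collapses; these are limiting cases absorbed by continuity of $d$. As an alternative self-contained route avoiding comparison theory, one could compute the second variation of arc length along the geodesic variation $s \mapsto \sigma_t(s)$ joining $\gamma(t)$ to $\eta(t)$, expressing $\tfrac{d^2}{dt^2} d(\gamma(t), \eta(t))$ through an index form whose nonnegativity is forced by $K \le 0$; this is more explicit but requires extra care at points where $d$ fails to be smooth.
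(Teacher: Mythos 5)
Your proof is correct: the Cartan--Hadamard reduction to CAT(0), the cone sub-lemma for two geodesics sharing an endpoint, and the splitting of the geodesic quadrilateral along the diagonal from $\gamma(0)$ to $\eta(1)$ constitute exactly the standard argument for convexity of the metric in CAT(0) spaces found in \cite{Bridson1999}, which is the very source the paper invokes. The paper states this proposition without proof, deferring entirely to that citation, so your argument supplies precisely the intended reference proof (and your observation that $K \le 0$ already suffices, with $K < 0$ being stronger than needed, is also accurate).
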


\paragraph{Notation.}  
Let $P_1,\dots,P_n$ be fixed terminal points in the Klein disk model of hyperbolic space.  
Fix a tree topology with adjacency weights $a_{ij}\in\{0,1\}$ and $s$ Steiner points.  
A \emph{configuration} of vertices is a tuple
\[
V = (V_1,\dots,V_{n+s}) \quad\text{with}\quad V_i = P_i \ \text{for } 1\le i\le n,\;\; V_k = S_k \ \text{for } k>n.
\]
The total tree length is
\[
L(V) = \sum_{i<j} a_{ij}\, d_{\mathbb K}(V_i,V_j),
\]
where $d_{\mathbb K}(\cdot,\cdot)$ is the Klein distance.  

Given two configurations $V(0)$ and $V(1)$ with the same terminals but different Steiner points, define the interpolated configuration $V(t)$ for $t\in[0,1]$ by
\[
V_i(t) =
\begin{cases}
P_i, & i\le n \quad (\text{terminals fixed}), \\
(1-t)S_i + tS_i', & i>n \quad (\text{Steiner interpolation}).
\end{cases}
\]
Since geodesics in the Klein disk are Euclidean chords, each $t\mapsto V_i(t)$ is a hyperbolic geodesic (possibly constant if $i\le n$).  
The length of the interpolated tree is
\[
L(t) = L(V(t)) = \sum_{i<j} a_{ij}\, d_{\mathbb K}(V_i(t),V_j(t)).
\]

\newtheorem*{thUniq}{Theorem~\ref{thm:uniqueness} (Restated)}

\begin{thUniq}
Given a fixed tree topology and fixed terminals $P_1,\dots,P_n$ in the Klein disk model, the minimizing configuration of Steiner points $S_{n+1},\dots,S_{n+s}$ is unique.
\end{thUniq}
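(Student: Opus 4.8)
The plan is to argue by convexity together with the rigidity encoded in the isoptic lemma. First I would show that the interpolated length $L(t)$ defined above is convex on $[0,1]$: since geodesics in the Klein model are Euclidean chords, every coordinate path $t\mapsto V_i(t)$ is a hyperbolic geodesic, so each summand $t\mapsto d_{\mathbb K}(V_i(t),V_j(t))$ is convex by Proposition~\ref{prop:convexity_hadamard} (hyperbolic space is a Hadamard manifold), and $L(t)=\sum_{i<j}a_{ij}\,d_{\mathbb K}(V_i(t),V_j(t))$ is a nonnegative combination of convex functions, hence convex.

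Next, suppose toward a contradiction that there are two \emph{distinct} minimizing configurations $V(0)$ and $V(1)$ sharing the terminals $P_1,\dots,P_n$ and both attaining the minimal length $L^\ast$. Convexity gives $L(t)\le (1-t)L(0)+tL(1)=L^\ast$, while minimality of $L^\ast$ over all placements with the fixed topology gives $L(t)\ge L^\ast$; hence $L(t)\equiv L^\ast$. In particular every intermediate configuration $V(t)$ is itself a minimizer, so the first-order optimality (\emph{angle}) condition holds along the whole path: at each degree-three Steiner point the three incident edges meet pairwise at $120^\circ$ for every $t\in[0,1]$.

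Finally I would extract a contradiction from the motion of the Steiner points. Let $M$ be the set of Steiner points that actually move, i.e. $S_k(0)\neq S_k(1)$; since $V(0)\neq V(1)$ and the terminals are fixed, $M\neq\emptyset$. The edges joining two vertices of $M$ form a forest (a subgraph of the topology tree), which must contain a vertex $S_k$ of forest-degree at most one; as $S_k$ has tree-degree exactly three, it therefore has at least two \emph{fixed} neighbors $A,B$ (terminals or Steiner points outside $M$). By the angle condition above, the geodesics $S_k(t)A$ and $S_k(t)B$ meet at $120^\circ$ for all $t$, so the entire nontrivial segment $\{S_k(t):t\in[0,1]\}$ lies in the $120^\circ$--isoptic $C_{2\pi/3}(A,B)$. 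This directly contradicts Lemma~\ref{lem:segment_isoptic}. Hence $M=\emptyset$, forcing $V(0)=V(1)$ and establishing uniqueness.

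The main obstacle I expect is justifying the exact $120^\circ$ angle condition uniformly along the interpolation, which presumes that every minimizer $V(t)$ is a nondegenerate full Steiner configuration: if a Steiner point collides with a neighbor or the incident edge structure degenerates, the first-order condition weakens to an inequality and the isoptic step no longer applies verbatim. Handling these degenerate placements — for instance by showing a colliding Steiner point may be removed to reduce to a smaller fixed topology, or by restricting to the open set where Steiner points stay distinct from their neighbors and invoking continuity — is the delicate part; the existence of the desired forest-leaf Steiner point with two fixed neighbors, which only uses the degree-three structure of Steiner vertices, is routine by comparison.
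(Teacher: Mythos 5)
Your proposal follows the paper's proof almost step for step: Steps 1 and 2 (convexity of $L(t)$ via Proposition~\ref{prop:convexity_hadamard}, then equality $L(t)\equiv L_{\min}$ between two putative minimizers) are identical. Where you genuinely diverge is the endgame. The paper's Step 3 asserts that along the interpolation each Steiner point ``would have to remain in the $120^\circ$--isoptic of its neighbors'' and invokes Lemma~\ref{lem:segment_isoptic}, then adds an induction on $s$ in Step 4 by freezing one agreed-upon Steiner point as a new terminal. As written, the paper does not explain why some Steiner point has two \emph{fixed} foci for its isoptic --- Lemma~\ref{lem:segment_isoptic} concerns the isoptic of a fixed pair $(x,y)$, and if all three neighbors of a Steiner point are themselves moving, the lemma does not apply verbatim. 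Your forest-leaf argument repairs exactly this: the moving Steiner points induce a forest inside the topology tree, a leaf of that forest has at least two stationary neighbors $A,B$, and then the whole segment $\{S_k(t)\}$ sits in the single curve $C_{2\pi/3}(A,B)$, contradicting the lemma directly and dispensing with the induction. So your route is the same in substance but tighter in the one place where the paper is loose; what the paper's induction buys in exchange is a shorter exposition. Your closing caveat about degenerate minimizers (collapsed edges, Steiner points of degree less than three, where the $120^\circ$ condition weakens to an inequality) is a real issue, but it is one the paper's proof silently shares rather than a defect of your argument relative to it.
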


\begin{proof} This proof will closely follow the steps of the Euclidean counter part presented in \cite{Gilbert1968SMT}, but showing that can be adapted for the Klein disk.

\noindent\textbf{Step 1. Convexity of $L(t)$.}
Fix a pair $(i,j)$ with $a_{ij}=1$.  
Both $t\mapsto V_i(t)$ and $t\mapsto V_j(t)$ are geodesics (constant if terminal, chordal if Steiner).  
By Proposition~\ref{prop:convexity_hadamard}, the function
\[
t \;\mapsto\; d_{\mathbb K}\big(V_i(t),V_j(t)\big)
\]
is convex. Explicitly,
\begin{equation}\label{eq:pairwise}
d_{\mathbb K}\big(V_i(t),V_j(t)\big) \le (1-t)d_{\mathbb K}\big(V_i(0),V_j(0)\big) + t d_{\mathbb K}\big(V_i(1),V_j(1)\big).
\end{equation}
(If one of the geodesics is constant — corresponding to a terminal — the statement still holds: the distance to a fixed point is convex along any geodesic.)

Multiplying \eqref{eq:pairwise} by $a_{ij}\ge 0$ and summing over all pairs $i<j$, we obtain
\[
L(t) = \sum_{i<j} a_{ij}\, d_{\mathbb K}(V_i(t),V_j(t)) \;\le\;
(1-t)\sum_{i<j} a_{ij}\, d_{\mathbb K}(V_i(0),V_j(0)) + t\sum_{i<j} a_{ij}\, d_{\mathbb K}(V_i(1),V_j(1)).
\]
That is,
\begin{equation}\label{eq:convexity_tree}
L(t) \le (1-t)L(0) + tL(1).
\end{equation}
Hence $L(t)$ is convex in $t$.

\noindent\textbf{Step 2. Equality at minima.}
Suppose two configurations $V(0)$ and $V(1)$ both minimize the tree length, i.e.\ $L(0)=L(1)=L_{\min}$.  
Then from \eqref{eq:convexity_tree},
\[
L(t) \le (1-t)L(0)+tL(1) = L_{\min}.
\]
But since $L(t)$ is by definition the length of a feasible tree, we must also have $L(t)\ge L_{\min}$.  
Therefore $L(t)=L_{\min}$ for all $t\in[0,1]$.

\noindent\textbf{Step 3. Rigidity of Steiner points.}
Consider a Steiner point $S$ in $V(0)$ and its counterpart $S'$ in $V(1)$.  
At a minimizing configuration, incident edges meet at $120^\circ$.  
Along the interpolation $S(t)$, the Steiner point would have to remain in the $120^\circ$--isoptic of its neighbors.  
By Lemma~\ref{lem:segment_isoptic}, this is only possible if $S=S'$.  
Thus all Steiner points coincide in the two minimizing configurations.

\noindent\textbf{Step 4. Induction on the number of Steiner points.}
If $s=0$, there are no Steiner points and the tree is uniquely determined by the terminals and topology.  
Assume uniqueness holds for $s-1$ Steiner points.  
For $s$ Steiner points, Step~3 shows that any two minimal configurations agree on at least one Steiner point.  
Treating this common Steiner point as an additional fixed terminal reduces the problem to $s-1$ Steiner points, where uniqueness holds by the induction hypothesis.  
Thus the full configuration is unique.

\end{proof}

\begin{lemm}[Model-independence]
\label{lem:model_independence}
Let $\mathbb{H}^2$ denote $2$-dimensional hyperbolic space, and let $\mathcal{M}_1, \mathcal{M}_2$ be two standard models of $\mathbb{H}^2$ (e.g.\ Klein disk, Poincaré disk, upper half--plane, hyperboloid).  
Suppose the Steiner problem with fixed terminals has a unique minimizing configuration of Steiner points in $\mathcal{M}_1$.  
Then the corresponding minimizing configuration in $\mathcal{M}_2$ is also unique.
\end{lemm}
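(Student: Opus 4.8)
The plan is to exploit the fact that all standard models of $\mathbb{H}^2$ are mutually isometric via explicit diffeomorphisms, and that the Steiner tree-length functional of a fixed topology depends only on the intrinsic hyperbolic distances between the vertices. Since those distances are preserved by any isometry, the entire optimization problem transports verbatim from one model to another, and ``having a unique minimizer'' is an isometry-invariant property. The argument is therefore a transport-of-structure argument rather than a genuinely geometric one.

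First I would fix an isometry $\Psi \colon \mathcal{M}_1 \to \mathcal{M}_2$ between the two models. Such a map exists, is a diffeomorphism satisfying $d_{\mathcal{M}_2}(\Psi(p),\Psi(q)) = d_{\mathcal{M}_1}(p,q)$ for all $p,q \in \mathcal{M}_1$, and has inverse $\Psi^{-1}$ that is again an isometry; for the concrete models (Klein disk, Poincaré disk, upper half-plane, hyperboloid) these are the classical gnomonic and stereographic correspondences. The terminals $P_1,\dots,P_n$ in $\mathcal{M}_1$ then have images $\Psi(P_1),\dots,\Psi(P_n)$, which are by definition the corresponding terminals in $\mathcal{M}_2$, and $\Psi$ places each Steiner configuration $(S_{n+1},\dots,S_{n+s})$ in $\mathcal{M}_1$ in bijection with the configuration $(\Psi(S_{n+1}),\dots,\Psi(S_{n+s}))$ in $\mathcal{M}_2$.

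Next I would verify that $\Psi$ preserves the length functional for the fixed topology with adjacency weights $a_{ij}$. Writing $V$ for a configuration in $\mathcal{M}_1$ and $\Psi(V)$ for its image, each edge contributes $a_{ij}\, d_{\mathcal{M}_2}(\Psi(V_i),\Psi(V_j)) = a_{ij}\, d_{\mathcal{M}_1}(V_i,V_j)$ by the isometry property, so summing over all pairs gives $L_{\mathcal{M}_2}(\Psi(V)) = L_{\mathcal{M}_1}(V)$. Hence $\Psi$ carries the sublevel sets of the two length functionals onto one another, and in particular a configuration minimizes $L_{\mathcal{M}_2}$ if and only if its $\Psi^{-1}$-preimage minimizes $L_{\mathcal{M}_1}$. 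Since $\Psi$ restricts to a bijection between configurations fixing the respective terminals, the set of minimizers in $\mathcal{M}_2$ is in bijection with the set of minimizers in $\mathcal{M}_1$; if the latter is a singleton by hypothesis, so is the former.

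I do not expect a substantive obstacle here, since all the content lives in the single distance-preservation identity. The only point requiring care is the bookkeeping: I must ensure that the combinatorial tree topology (the weights $a_{ij}$) and the labeling of terminals are carried across by the \emph{same} map $\Psi$, so that ``the same topology with the same terminals'' on the two sides genuinely corresponds under the bijection. This is automatic, because $\Psi$ relabels every vertex simultaneously by the fixed correspondence $V_i \mapsto \Psi(V_i)$ and leaves the adjacency structure $a_{ij}$ untouched, so no edge is created or destroyed in passing between the models.
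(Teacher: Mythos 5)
Your proposal is correct and follows essentially the same transport-of-structure argument as the paper: fix a bijective isometry between the two models, observe that it preserves the tree-length functional edge by edge, and conclude that it carries the set of minimizers of one model bijectively onto that of the other. The extra bookkeeping remark about the adjacency weights and terminal labels is a harmless elaboration of what the paper leaves implicit.
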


\begin{proof}
There exists a bijective isometry 
\[
\iota : \mathcal{M}_1 \to \mathcal{M}_2
\]
between the two models.  
For any configuration $V$ in $\mathcal{M}_1$, define $\iota(V)$ as the image configuration in $\mathcal{M}_2$.  
Since $\iota$ is an isometry, we have
\[
d_{\mathcal{M}_2}\big(\iota(x), \iota(y)\big) \;=\; d_{\mathcal{M}_1}(x,y) \qquad \text{for all } x,y\in\mathcal{M}_1.
\]
Therefore the tree length functional is preserved:
\[
L_{\mathcal{M}_2}(\iota(V)) \;=\; L_{\mathcal{M}_1}(V).
\]
In particular, $\iota$ maps minimizing configurations in $\mathcal{M}_1$ bijectively to minimizing configurations in $\mathcal{M}_2$.  
If the minimizer is unique in $\mathcal{M}_1$, its image is unique in $\mathcal{M}_2$.
\end{proof}

Since we have established in Theorem~\ref{thm:uniqueness} that the minimizing configuration of Steiner points is unique in the Klein disk model, and since all standard models of hyperbolic space are mutually isometric, Lemma~\ref{lem:model_independence} implies the following general statement:\\

\vspace{-6pt}
\begin{center}
$\boxed{\parbox{0.8\textwidth}{\centering \emph{In hyperbolic space, given a fixed terminal set and a fixed tree topology, the optimal locations of the Steiner points are unique, independent of the chosen model.}}}$
\end{center}

\section{Additional Experimental Details}\label{app:exp_det}

This appendix provides implementation details for the three methods compared in our experiments: Randomized HyperSteiner (RHS), the original HyperSteiner (HS), and Neighbor Joining (NJ). 

\subsection{Hyperbolic SMT Heuristic Methods}

\paragraph{Randomized HyperSteiner (RHS).} 
We configure the global optimization procedure presented in Section~\ref{sec:glob_ref}, with the following hyperparameters, selected through tuning: a maximum of 10{,}000 epochs, learning rate $\eta = 1 \times 10^{-2}$, early stopping with patience of 100 epochs, and a convergence threshold of $\delta = 1 \times 10^{-6}$.

Although optimal learning rates are dataset-dependent, extensive hyperparameter tuning shows that $\eta = 1\times 10^{-2}$ performs consistently well across all datasets. As noted in Appendix~\ref{app:reim_desc}, more advanced optimization schemes (e.g., Armijo backtracking \citep{absil2008optimization} or Adam via trivializations \citep{lezcano2019trivializations}) could further reduce this dependence. 

For constructing full Steiner trees on four terminal points, we adopt the recursive decomposition approach of \citet{halverson2005steiner} with three recursion levels as recommended by \citet{garcia2025hypersteiner}. Moreover, for the three-point FSTs, we will solve the system of isoptic curves \eqref{eq:system}, using Powell's hybrid method \citep{Powell1970NumericalMethodsNonlinearAlgebraicEquations}. While \citet{garcia2025hypersteiner} demonstrated that Polynomial Homotopy Continuation methods \citep{chen2015homotopy} can achieve better solutions, the computational overhead outweighs the minor performance gains for our application.

\paragraph{Original HyperSteiner (HS).}
For consistency, we configure HS identically to RHS: both use the recursive solver for four-point trees and Powell’s method for three-point isoptic curves. 

\subsection{Neighbor Joining Method}
\label{app:nj}

\paragraph{Algorithm Overview.}
Neighbor Joining (NJ) \citep{Saitou1987neighborjoining} is a greedy agglomerative algorithm for phylogenetic tree reconstruction from pairwise distances. The method maintains a distance matrix and iteratively merges pairs of nodes into internal nodes using the Q-criterion, which corrects for varying evolutionary rates. NJ runs in $\mathcal{O}(n^3)$ time and produces a unique topology without requiring a molecular clock assumption.

\paragraph{Implementation and Adaptation.} NJ produces a tree topology together with edge lengths, yet the tree itself is not embedded in hyperbolic space. In line with \cite{Mimori2023Geophy}, we therefore disregard the edge lengths and retain only the topology. We subsequently extend NJ by embedding this topology into hyperbolic space via Riemannian gradient descent as follows:
\begin{enumerate}
    \item \textbf{Distance Matrix Construction.} Compute pairwise distances between all terminal points using the hyperbolic metric in the Klein disk.
    \item \textbf{Topology Inference.} Apply standard NJ to the distance matrix, yielding a tree topology $T_{NJ}$ with internal nodes (interpreted as Steiner points). We implement NJ via the Biotite Python library \citep{kunzmann2018biotite}.
    \item \textbf{Geometric Refinement Embedding.} Fixing topology $T_{NJ}$, optimize Steiner point positions to minimize total tree length in hyperbolic space. We initialize positions by sampling from a pseudo-hyperbolic Gaussian $\mathcal{G}(0,0.1)$ and optimize via Riemannian gradient descent with learning rate $\eta = 1$ for 10{,}000 epochs---see Section~\ref{sec:glob_ref} for further details regarding the optimization problem given a fixed topology.
\end{enumerate}

\newpage 
\section{Additional Results}
\label{app:exp}
\subsection{Performance Across Data Distributions}

\begin{wrapfigure}{r}{0.3\linewidth}
  \centering
  \vspace{-50pt}
  \includegraphics[width=0.75\linewidth]{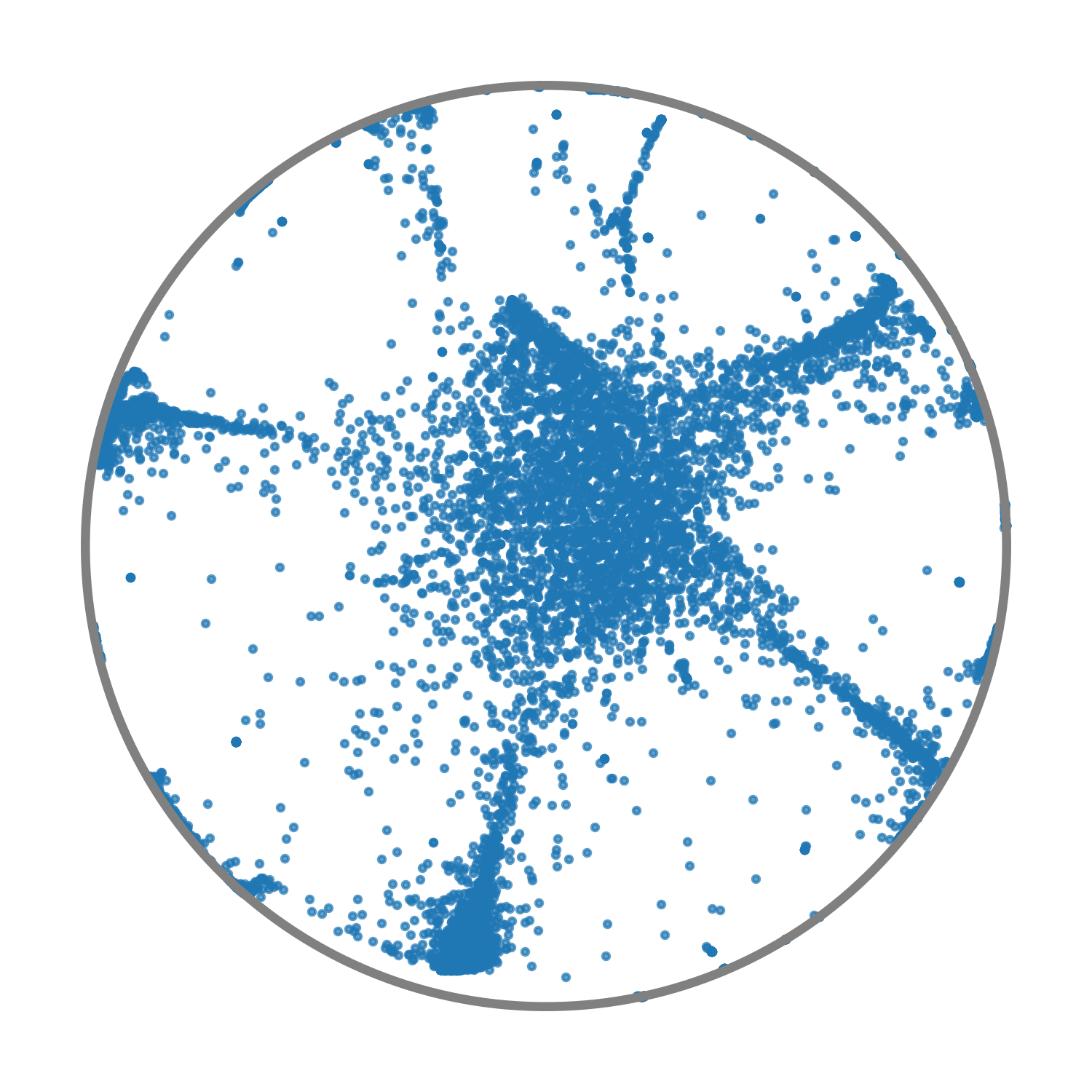}
  \caption{Planaria dataset.}
  \label{fig:planaria_plot}
  \vspace{-30pt}
\end{wrapfigure}

To complement the analysis in Section~\ref{sec:comp}, we provide additional visualizations and detailed numerical results. 

Figure \ref{fig:planaria_plot} shows the embedding of the \emph{Planaria} single-cell RNA-sequencing data \citep{Plass2018_planaria_SingleCellData}, while Tables \ref{tab:exp1} and \ref{tab:exp2} present the performance metrics for the synthetic experiments, including the percentage of reduction in tree-length over the Minimum Spanning Tree (RED) and computational efficiency measured by CPU wall time.

\vspace{30pt}
\begin{table*}[hb!]
\centering
\resizebox{\textwidth}{!}{
\begin{tabular}{lcccccc cccc}
\toprule
\multicolumn{1}{c}{} & \multicolumn{2}{c}{NJ} & \multicolumn{2}{c}{HS} & \multicolumn{2}{c}{RHS (ours)} & \multicolumn{2}{c}{vs HS (RED)} & \multicolumn{2}{c}{vs NJ (RED)} \\
\cmidrule(lr){2-3} \cmidrule(lr){4-5} \cmidrule(lr){6-7} \cmidrule(lr){8-9} \cmidrule(lr){10-11}
$|P|$ & RED\,$(\uparrow)$ & CPU & RED\,$(\uparrow)$ & CPU & RED\,$(\uparrow)$ & CPU & $p_{\text{Welch}}$ & $p_{\text{TOST}}$ & $p_{\text{Welch}}$ & $p_{\text{TOST}}$ \\
\midrule
    10 & $1.46_{\pm 1.12}$ & $5.50_{\pm 0.03}$ & $2.72_{\pm 0.97}$ & $0.03_{\pm 0.01}$ & $3.17_{\pm 1.59}$ & $4.60_{\pm 3.40}$ & $.457$ & $.183$ & $\mathbf{.013}$ & $.867$ \\
    20 & $-2.60_{\pm 3.00}$ & $5.68_{\pm 0.04}$ & $1.83_{\pm 0.70}$ & $0.06_{\pm 0.01}$ & $3.15_{\pm 0.56}$ & $9.87_{\pm 7.00}$ & $\mathbf{<\!.001}$ & $.863$ & $\mathbf{<\!.001}$ & $1.000$ \\
    30 & $-1.70_{\pm 2.31}$ & $5.86_{\pm 0.03}$ & $2.12_{\pm 0.60}$ & $0.10_{\pm 0.02}$ & $3.03_{\pm 1.09}$ & $12.65_{\pm 6.94}$ & $\mathbf{.036}$ & $.411$ & $\mathbf{<\!.001}$ & $1.000$ \\
    40 & $-2.62_{\pm 1.35}$ & $5.95_{\pm 0.04}$ & $2.17_{\pm 0.71}$ & $0.13_{\pm 0.02}$ & $3.30_{\pm 0.75}$ & $17.59_{\pm 6.97}$ & $\mathbf{.003}$ & $.652$ & $\mathbf{<\!.001}$ & $1.000$ \\
    50 & $-4.37_{\pm 1.87}$ & $6.14_{\pm 0.04}$ & $2.26_{\pm 0.43}$ & $0.17_{\pm 0.02}$ & $3.08_{\pm 0.34}$ & $26.21_{\pm 8.67}$ & $\mathbf{<\!.001}$ & $.157$ & $\mathbf{<\!.001}$ & $1.000$ \\
    60 & $-3.38_{\pm 1.09}$ & $6.29_{\pm 0.03}$ & $2.56_{\pm 0.57}$ & $0.20_{\pm 0.02}$ & $2.72_{\pm 0.48}$ & $26.95_{\pm 8.40}$ & $.506$ & $\mathbf{.001}$$^{\approx}$ & $\mathbf{<\!.001}$ & $1.000$ \\
    70 & $-3.28_{\pm 1.48}$ & $6.51_{\pm 0.04}$ & $2.45_{\pm 0.54}$ & $0.22_{\pm 0.01}$ & $2.68_{\pm 0.66}$ & $31.09_{\pm 11.84}$ & $.405$ & $\mathbf{.005}$$^{\approx}$ & $\mathbf{<\!.001}$ & $1.000$ \\
    80 & $-3.89_{\pm 1.79}$ & $6.77_{\pm 0.03}$ & $2.43_{\pm 0.53}$ & $0.26_{\pm 0.02}$ & $3.07_{\pm 0.39}$ & $38.55_{\pm 8.45}$ & $\mathbf{.007}$ & $.051$ & $\mathbf{<\!.001}$ & $1.000$ \\
    90 & $-3.46_{\pm 1.65}$ & $7.01_{\pm 0.04}$ & $2.72_{\pm 0.50}$ & $0.30_{\pm 0.02}$ & $2.87_{\pm 0.57}$ & $66.07_{\pm 30.15}$ & $.540$ & $\mathbf{.001}$$^{\approx}$ & $\mathbf{<\!.001}$ & $1.000$ \\
    100 & $-3.77_{\pm 1.75}$ & $7.24_{\pm 0.03}$ & $2.45_{\pm 0.35}$ & $0.33_{\pm 0.02}$ & $2.84_{\pm 0.47}$ & $70.31_{\pm 14.92}$ & $.051$ & $\mathbf{.002}$$^{\approx}$ & $\mathbf{<\!.001}$ & $1.000$ \\
    150 & $-4.17_{\pm 1.10}$ & $8.46_{\pm 0.03}$ & $2.27_{\pm 0.20}$ & $0.48_{\pm 0.02}$ & $2.48_{\pm 0.22}$ & $139.87_{\pm 49.03}$ & $\mathbf{.039}$ & $\mathbf{<\!.001}$$^{\approx}$ & $\mathbf{<\!.001}$ & $1.000$ \\
    200 & $-5.01_{\pm 1.40}$ & $11.09_{\pm 0.06}$ & $2.43_{\pm 0.26}$ & $0.66_{\pm 0.04}$ & $2.64_{\pm 0.23}$ & $341.42_{\pm 96.02}$ & $.072$ & $\mathbf{<\!.001}$$^{\approx}$ & $\mathbf{<\!.001}$ & $1.000$ \\
\bottomrule
\end{tabular}
}
\caption{Sampling $|P|$ points from a centered hyperbolic Gaussian $\mathcal{G}(0, 0.5)$. RED: reduction over MST (\%). CPU: total CPU time (sec.). Significance columns report Welch $t$-test and TOST p-values (RHS vs.\ the indicated baseline, RED metric, $n{=}10$, $\varepsilon{=}1.0$). \textbf{Bold} indicates $p{<}0.05$; ${}^{\approx}$ marks TOST equivalence (90\,\% CI $\subseteq (-\varepsilon,+\varepsilon)$ and $p_{\text{TOST}}{<}0.05$).}
\label{tab:exp1}
\end{table*}

\begin{table*}[b!]
\centering
\resizebox{\textwidth}{!}{
\begin{tabular}{lcccccc cccc}
\toprule
\multicolumn{1}{c}{} & \multicolumn{2}{c}{NJ} & \multicolumn{2}{c}{HS} & \multicolumn{2}{c}{RHS (ours)} & \multicolumn{2}{c}{vs HS (RED)} & \multicolumn{2}{c}{vs NJ (RED)} \\
\cmidrule(lr){2-3} \cmidrule(lr){4-5} \cmidrule(lr){6-7} \cmidrule(lr){8-9} \cmidrule(lr){10-11}
$|P|$ & RED\,$(\uparrow)$ & CPU & RED\,$(\uparrow)$ & CPU & RED\,$(\uparrow)$ & CPU & $p_{\text{Welch}}$ & $p_{\text{TOST}}$ & $p_{\text{Welch}}$ & $p_{\text{TOST}}$ \\
\midrule
    10 & $40.49_{\pm 0.03}$ & $5.52_{\pm 0.04}$ & $17.75_{\pm 4.57}$ & $0.02_{\pm 0.01}$ & $28.33_{\pm 18.54}$ & $25.70_{\pm 13.30}$ & $.110$ & $.928$ & $.068$ & $.955$ \\
    20 & $40.10_{\pm 0.06}$ & $5.73_{\pm 0.05}$ & $14.35_{\pm 4.61}$ & $0.05_{\pm 0.02}$ & $40.01_{\pm 0.09}$ & $61.94_{\pm 16.77}$ & $\mathbf{<\!.001}$ & $1.000$ & $\mathbf{.018}$ & $\mathbf{<\!.001}$$^{\approx}$ \\
    30 & $39.93_{\pm 0.08}$ & $5.91_{\pm 0.04}$ & $14.74_{\pm 5.75}$ & $0.08_{\pm 0.06}$ & $39.84_{\pm 0.13}$ & $95.05_{\pm 27.50}$ & $\mathbf{<\!.001}$ & $1.000$ & $.082$ & $\mathbf{<\!.001}$$^{\approx}$ \\
    40 & $39.70_{\pm 0.09}$ & $6.06_{\pm 0.03}$ & $16.63_{\pm 4.46}$ & $0.09_{\pm 0.03}$ & $39.71_{\pm 0.12}$ & $119.86_{\pm 53.15}$ & $\mathbf{<\!.001}$ & $1.000$ & $.836$ & $\mathbf{<\!.001}$$^{\approx}$ \\
    50 & $39.53_{\pm 0.06}$ & $6.00_{\pm 0.02}$ & $17.88_{\pm 5.29}$ & $0.10_{\pm 0.04}$ & $39.52_{\pm 0.05}$ & $116.38_{\pm 45.92}$ & $\mathbf{<\!.001}$ & $1.000$ & $.690$ & $\mathbf{<\!.001}$$^{\approx}$ \\
    60 & $39.22_{\pm 0.10}$ & $6.37_{\pm 0.04}$ & $14.95_{\pm 4.60}$ & $0.11_{\pm 0.02}$ & $39.36_{\pm 0.11}$ & $87.99_{\pm 9.33}$ & $\mathbf{<\!.001}$ & $1.000$ & $\mathbf{.008}$ & $\mathbf{<\!.001}$$^{\approx}$ \\
    70 & $39.01_{\pm 0.16}$ & $6.55_{\pm 0.05}$ & $14.42_{\pm 4.27}$ & $0.11_{\pm 0.03}$ & $39.26_{\pm 0.05}$ & $118.26_{\pm 24.23}$ & $\mathbf{<\!.001}$ & $1.000$ & $\mathbf{<\!.001}$ & $\mathbf{<\!.001}$$^{\approx}$ \\
    80 & $38.55_{\pm 0.11}$ & $6.77_{\pm 0.04}$ & $14.47_{\pm 4.29}$ & $0.12_{\pm 0.03}$ & $39.11_{\pm 0.14}$ & $111.56_{\pm 12.72}$ & $\mathbf{<\!.001}$ & $1.000$ & $\mathbf{<\!.001}$ & $\mathbf{<\!.001}$$^{\approx}$ \\
    90 & $38.18_{\pm 0.18}$ & $6.96_{\pm 0.06}$ & $15.85_{\pm 3.70}$ & $0.14_{\pm 0.03}$ & $39.03_{\pm 0.08}$ & $171.84_{\pm 59.68}$ & $\mathbf{<\!.001}$ & $1.000$ & $\mathbf{<\!.001}$ & $\mathbf{.016}$$^{\approx}$ \\
    100 & $37.50_{\pm 0.28}$ & $7.26_{\pm 0.07}$ & $13.91_{\pm 4.99}$ & $0.13_{\pm 0.04}$ & $38.94_{\pm 0.12}$ & $183.57_{\pm 60.30}$ & $\mathbf{<\!.001}$ & $1.000$ & $\mathbf{<\!.001}$ & $1.000$ \\
    150 & $24.02_{\pm 1.24}$ & $8.48_{\pm 0.05}$ & $13.22_{\pm 6.21}$ & $0.16_{\pm 0.02}$ & $38.47_{\pm 0.10}$ & $205.34_{\pm 52.63}$ & $\mathbf{<\!.001}$ & $1.000$ & $\mathbf{<\!.001}$ & $1.000$ \\
    200 & $-8.08_{\pm 1.09}$ & $10.51_{\pm 0.06}$ & $13.62_{\pm 4.43}$ & $0.18_{\pm 0.03}$ & $38.17_{\pm 0.17}$ & $265.46_{\pm 80.59}$ & $\mathbf{<\!.001}$ & $1.000$ & $\mathbf{<\!.001}$ & $1.000$ \\
\bottomrule
\end{tabular}
}
\caption{Sampling $|P|$ points from a mixture of $10$ hyperbolic Gaussians $\mathcal{G}(\mu_{10,k}(1-10^{-10}), 0.1)$, $k \in \{1, \ldots, 10\}$. RED: reduction over MST (\%). CPU: total CPU time (sec.). Significance columns report Welch $t$-test and TOST p-values (RHS vs.\ the indicated baseline, RED metric, $n{=}10$, $\varepsilon{=}1.0$). \textbf{Bold} indicates $p{<}0.05$; ${}^{\approx}$ marks TOST equivalence (90\,\% CI $\subseteq (-\varepsilon,+\varepsilon)$ and $p_{\text{TOST}}{<}0.05$). }
\label{tab:exp2}
\end{table*}

\newpage

For each configuration, we assess statistical differences between RHS and each baseline via two complementary tests: the Welch--Satterthwaite two-sample
$t$-test \citep{welch1938, satterthwaite1946}, which detects significant
differences in means without assuming equal variances across groups, and the
Two One-Sided Tests (TOST) procedure \citep{schuirmann1987}, which declares
\emph{practical equivalence} when the 90\% confidence interval of the mean
difference lies within $[-\varepsilon, +\varepsilon]$ (in our case we choose $\varepsilon{=}1.0$).
A significant Welch $p$-value with a non-significant TOST indicates a
detectable difference; a significant TOST (${}^{\approx}$) with a
non-significant Welch confirms practical equivalence; both significant implies
a real but negligible gap; neither significant is inconclusive.

\subsection{Near Boundary Performance Analysis}
\label{app:near_res}

We present in Figure \ref{fig:exp3_data} some examples of the sampled data considered for the first experiment (\emph{Approaching the Theoretical Limit}) of Section \ref{sec:boundary}, and in Table \ref{tab:exp3_results} the corresponding results. Similarly, for the second experiment (\emph{Characterizing the Transition Zone}), we illustrate in Figure \ref{fig:expTransition_data} some datasets employed, and in Figure \ref{fig:ConvergenceAnalysis_indv} the associated results.

The standard deviations reported in Table~\ref{tab:exp3_results} reveal that RHS not only achieves higher reductions but also maintains low variance across random seeds; the exceptions are failures arising from numerical instability at the boundary, in which case the method does not return a valid solution.  HS shows a different but related fragility: because it relies directly on Delaunay triangulations, it also becomes unstable near the boundary, but instead of terminating, it typically produces bad tree candidates. Stochastic expansion therefore contributes not only to solution quality but also to robustness, even though both heuristics remain sensitive to numerical issues in extreme boundary regimes. In contrast, NJ is less prompt to numerical instability in these regular polygon configurations near the boundary. Further discussion regarding the numerical instabilities at extreme boundary configurations can be found in Appendix~\ref{app:near_dis}.

However, as highlighted in Figure \ref{fig:ConvergenceAnalysis_indv}, NJ performances degrade significatively when multiple clusters are introduced. For sufficiently high $t$ (around 0.95 for RHS, 0.99 for HS, and already from 0.6 for NJ), and for a fixed number of clusters between 3 and 6, all three methods show improved RED as $t$ increases. This trend also holds for RHS and HS when the number of clusters is higher. In contrast, NJ deteriorates sharply at the extreme boundary ($t=1-10^{-10}$) when the number of clusters is more than 6. Moreover, for fixed high $t > 1-10^{-5}$, only RHS continues to benefit from an increasing number of clusters, while HS and NJ plateau or decline.
In summary, RHS is the only method that consistently benefits both from being close to the boundary and from increasing the number of clusters. HS improves with boundary proximity, but saturates as the number of clusters grows. NJ initially improves when points are moderately close to the boundary, but its performance collapses in the extreme limit, showing that it cannot handle high curvature combined with clustering.

\begin{figure*}[h]
\centering
  \begin{subfigure}[t]{0.3\linewidth}
     \centering
     \includegraphics[width=\linewidth]{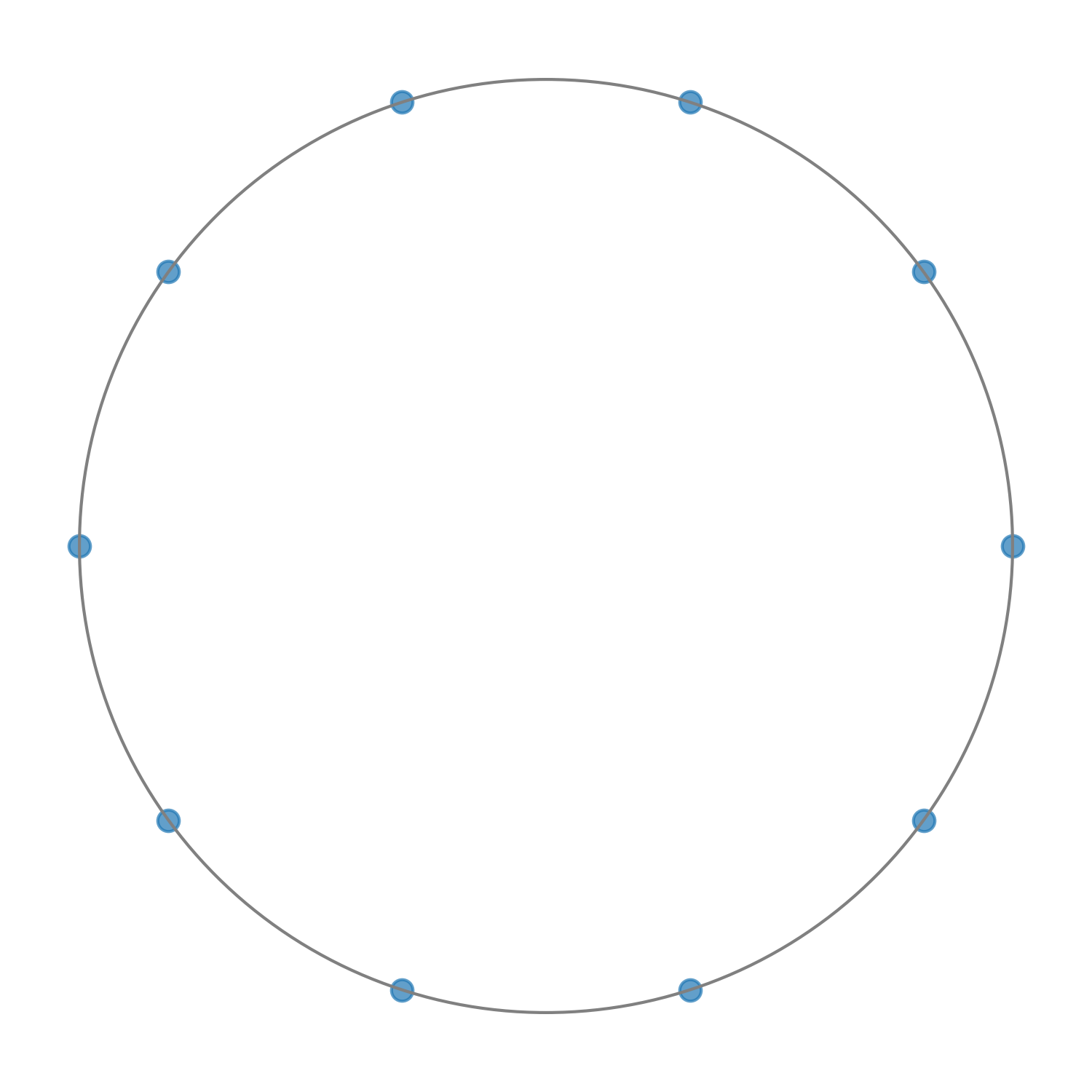}
     \caption*{$|P|=10$}
     \label{fig:exp3_10}
 \end{subfigure}
 \hfill
 \begin{subfigure}[t]{0.3\linewidth}
     \centering
     \includegraphics[width=\linewidth]{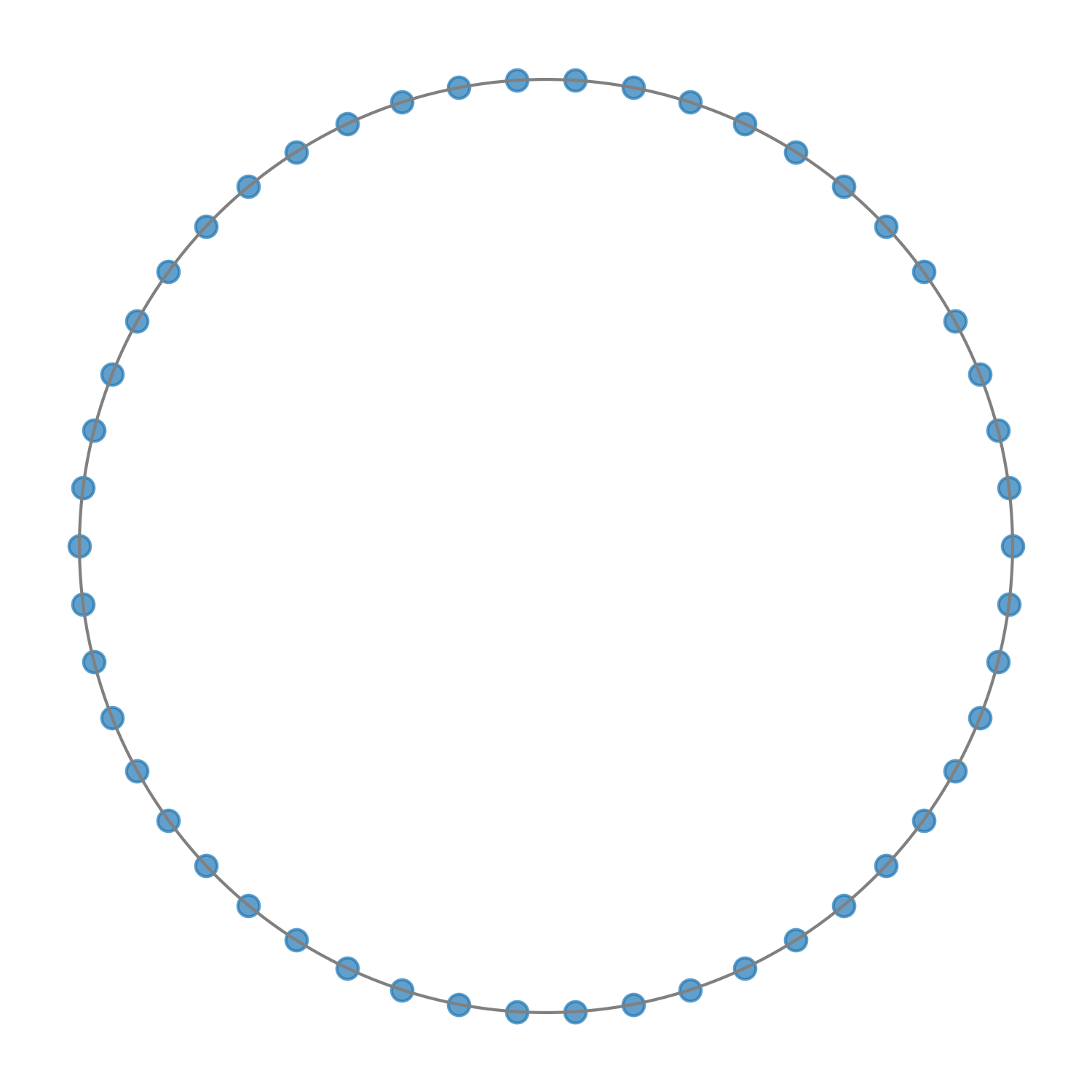}
     \caption*{$|P|=50$}
     \label{fig:exp3_50}
 \end{subfigure}
\hfill
 \begin{subfigure}[t]{0.3\linewidth}
     \centering
     \includegraphics[width=\linewidth]{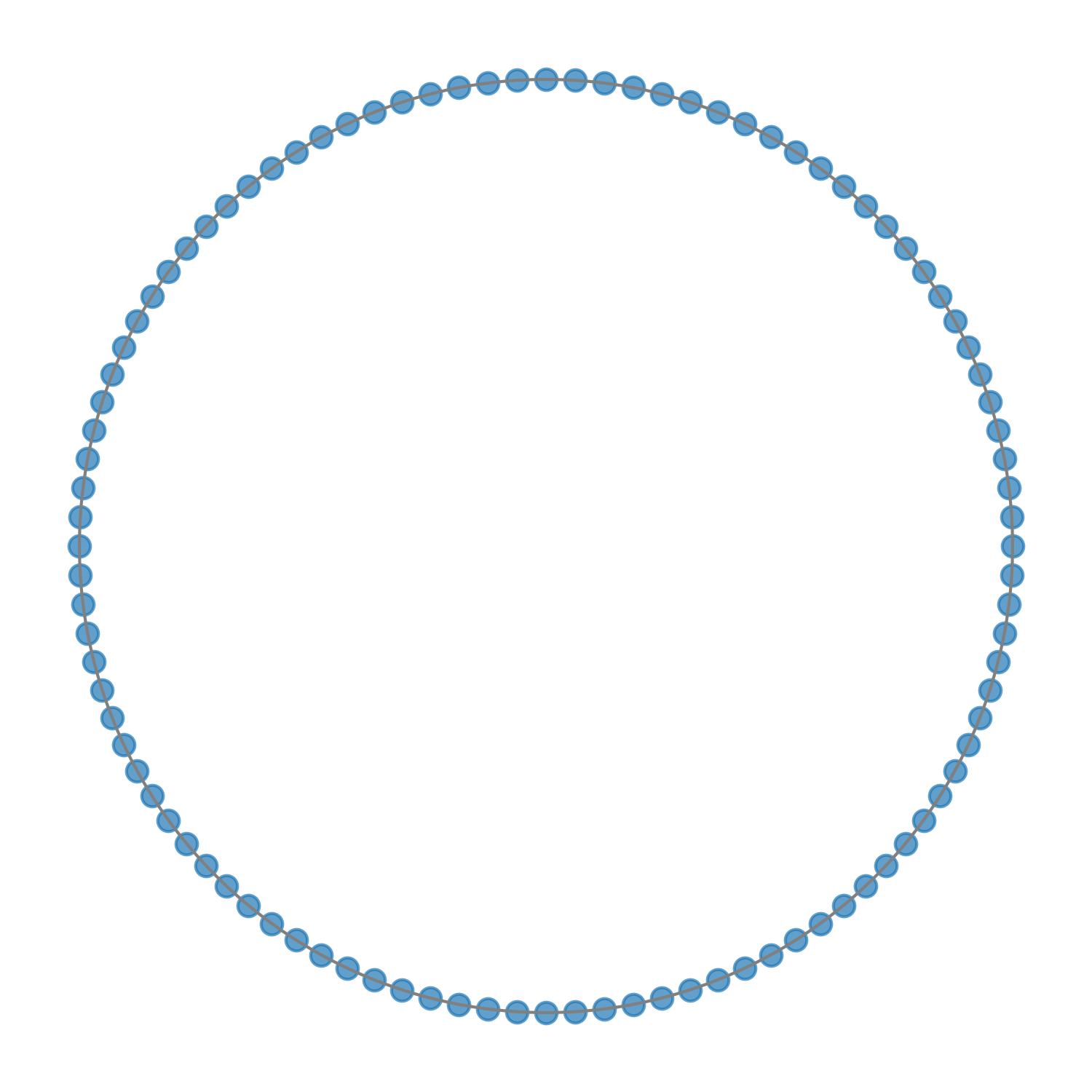}
     \caption*{$|P|=100$}
     \label{fig:exp3_100}
 \end{subfigure}

\caption{Visualization of some of the datasets considered in the \emph{Approaching the Theoretical Limit} experiment. Sampling $|P|\in \{10, 50, 100\}$ points from a mixture of hyperbolic Gaussians $\mathcal{G}(\mu_{|P|,k}(1-10^{-10}), 0.1)$ with $k \in \{1, \cdots, |P|\}$.}
\label{fig:exp3_data}
\end{figure*}

\begin{table*}[ht!]
\centering
\resizebox{0.9\textwidth}{!}{
\begin{tabular}{lcccccc}
\toprule
\multicolumn{1}{c}{} & \multicolumn{2}{c}{NJ} & \multicolumn{2}{c}{HS} & \multicolumn{2}{c}{RHS (ours)} \\
\cmidrule(lr){2-3}
\cmidrule(lr){4-5}
\cmidrule(lr){6-7}
$|P|$ & RED $(\uparrow)$ & CPU & RED $(\uparrow)$ & CPU & RED $(\uparrow)$ & CPU \\
\midrule
4  & $31.31_{\pm 0.07}$ & $5.54_{\pm 0.02}$ & $31.34_{\pm 0.08}$ & $0.01_{\pm 0.002}$ & $31.39_{\pm 0.11}$ & $1.59_{\pm 0.51}$ \\
5  & $34.86_{\pm 0.09}$ & $5.58_{\pm 0.02}$ & $23.26_{\pm 0.02}$ & $0.01_{\pm 0.003}$ & $34.94_{\pm 0.08}$ & $10.26_{\pm 1.44}$ \\
6  & $36.95_{\pm 0.10}$ & $5.60_{\pm 0.07}$ & $18.28_{\pm 0.12}$ & $0.01_{\pm 0.01}$ & $37.01_{\pm 0.10}$ & $15.46_{\pm 1.64}$ \\
7  & $38.30_{\pm 0.07}$ & $5.59_{\pm 0.02}$ & $17.61_{\pm 3.49}$ & $0.01_{\pm 0.005}$ & $38.33_{\pm 0.01}$ & $13.09_{\pm 4.26}$ \\
8  & $39.26_{\pm 0.03}$ & $5.56_{\pm 0.00}$ & $10.81_{\pm 6.10}$ & $0.01_{\pm 0.004}$ & $39.28_{\pm 0.02}$ & $24.27_{\pm 8.48}$ \\
9  & $39.95_{\pm 0.01}$ & $5.59_{\pm 0.03}$ & $13.05_{\pm 2.58}$ & $0.01_{\pm 0.01}$ & $39.96_{\pm 0.02}$ & $52.79_{\pm 18.82}$ \\
10 & $40.48_{\pm 0.01}$ & $5.59_{\pm 0.01}$ & $11.60_{\pm 2.38}$ & $0.02_{\pm 0.005}$ & $40.51_{\pm 0.02}$ & $35.03_{\pm 10.10}$ \\
20 & $42.48_{\pm 0.01}$ & $5.76_{\pm 0.04}$ & $20.86_{\pm 1.99}$ & $0.17_{\pm 0.03}$ & $42.59_{\pm 0.04}$ & $92.29_{\pm 14.17}$ \\
30 & $42.93_{\pm 0.004}$ & $5.84_{\pm 0.03}$ & $21.04_{\pm 2.51}$ & $0.31_{\pm 0.05}$ & $43.03_{\pm 0.11}$ & $89.08_{\pm 25.65}$ \\
40 & $43.10_{\pm 0.02}$ & $6.01_{\pm 0.03}$ & $17.56_{\pm 2.21}$ & $0.33_{\pm 0.02}$ & $43.29_{\pm 0.08}$ & $159.69_{\pm 41.75}$ \\
50 & $43.16_{\pm 0.01}$ & $6.21_{\pm 0.03}$ & $20.07_{\pm 1.17}$ & $0.50_{\pm 0.02}$ & $43.26_{\pm 0.06}$ & $217.99_{\pm 52.70}$ \\
60 & $43.14_{\pm 0.01}$ & $6.40_{\pm 0.03}$ & $18.80_{\pm 1.16}$ & $0.58_{\pm 0.03}$ & $43.26_{\pm 0.11}$ & $214.50_{\pm 64.05}$ \\
70 & $43.12_{\pm 0.003}$ & $6.63_{\pm 0.02}$ & $18.84_{\pm 1.01}$ & $0.63_{\pm 0.06}$ & $43.39_{\pm 0.04}$ & $287.02_{\pm 29.79}$ \\
80 & $43.09_{\pm 0.01}$ & $6.81_{\pm 0.04}$ & $17.09_{\pm 1.30}$ & $0.75_{\pm 0.06}$ & $43.09_{\pm 0.21}$ & $239.36_{\pm 76.02}$ \\
90 & $43.05_{\pm 0.01}$ & $7.03_{\pm 0.01}$ & $18.65_{\pm 0.75}$ & $0.83_{\pm 0.04}$ & $43.15_{\pm 0.13}$ & $304.83_{\pm 106.17}$ \\
100 & $43.02_{\pm 0.009}$ & $7.20_{\pm 0.03}$ & $18.00_{\pm 0.77}$ & $0.80_{\pm 0.01}$ & $42.99_{\pm 0.11}$ & $310.98_{\pm 54.61}$ \\
\bottomrule
\end{tabular}
}
\caption{Sampling $|P|$ points from a mixture of hyperbolic Gaussians $\mathcal{G}(\mu_{|P|, k}(1-10^{-10}), 0.1)$, $k \in \{1, \cdots, |P|\}$ near the boundary of the hyperbolic disk with only one point sampled per Gaussian. RED: reduction over MST (\%). CPU: total CPU time (sec.).}
\label{tab:exp3_results}
\end{table*}


\begin{figure}[b!]
    \centering    \includegraphics[width=0.8\linewidth]{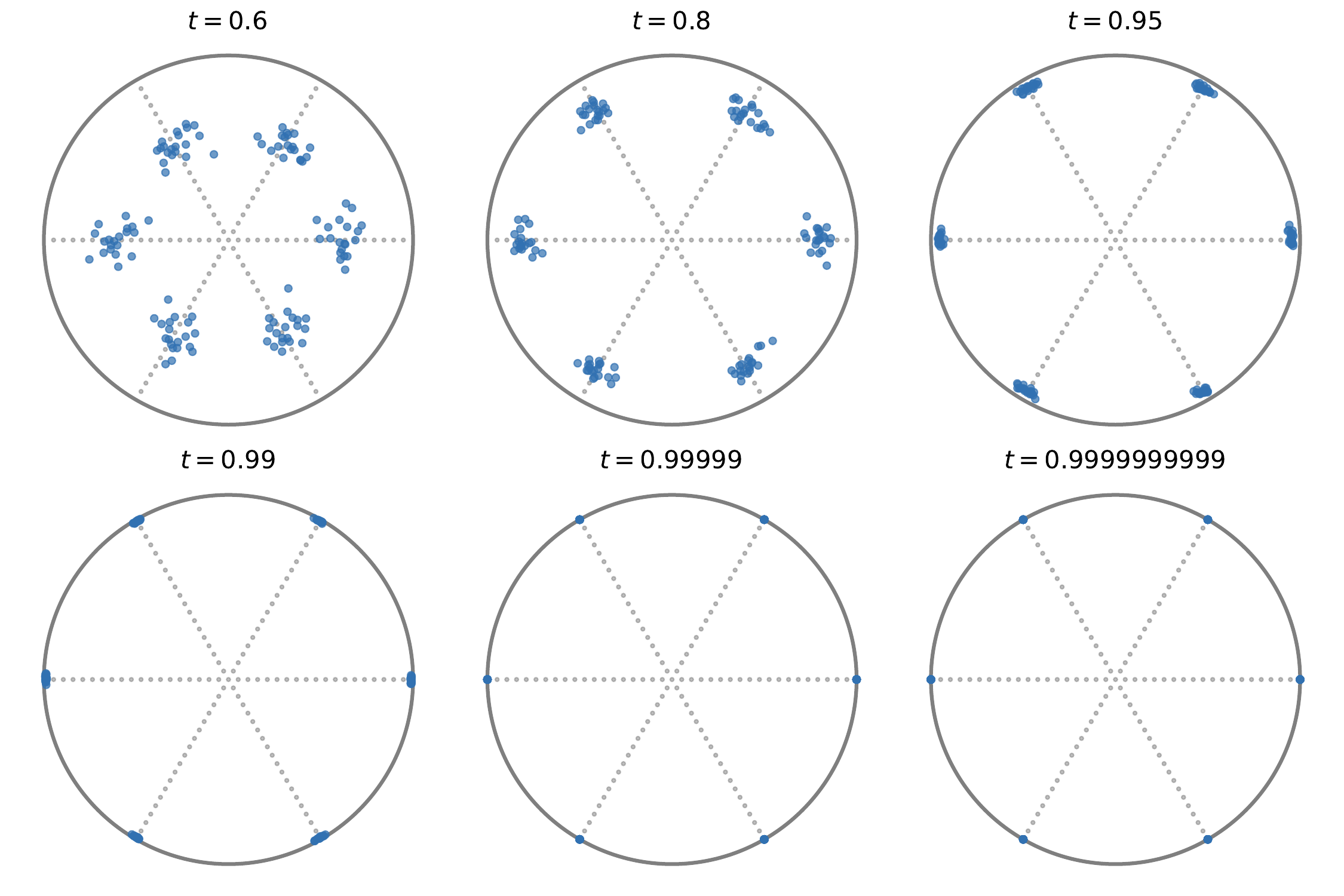}
    \caption{Visualization of the data sets considered in the \emph{Characterizing the Transition Zone} experiment for the $d=6$ case. Sampling $20$ points from each hyperbolic Gaussian $\mathcal{G}(\mu_{6,k}(t), 0.1)$ with $k \in \{1, \cdots, 6\}$, and $t\in\{0.6, 0.8, 0.95, 0.99, 1-10^{-5}, 1-10^{-10}\}$. }
    \label{fig:expTransition_data}
\end{figure}

\begin{figure*}[h!]
\centering

\begin{subfigure}[t]{0.49\linewidth}
    \centering
    \includegraphics[width=\linewidth]{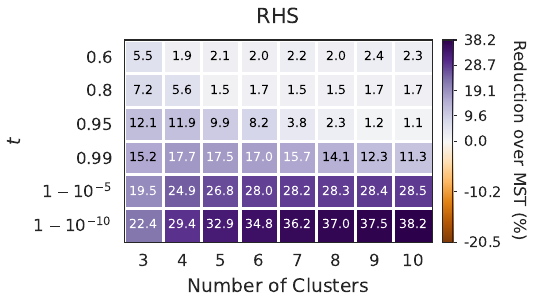}
\end{subfigure}%
\begin{subfigure}[t]{0.49\linewidth}
    \centering
    \includegraphics[width=\linewidth]{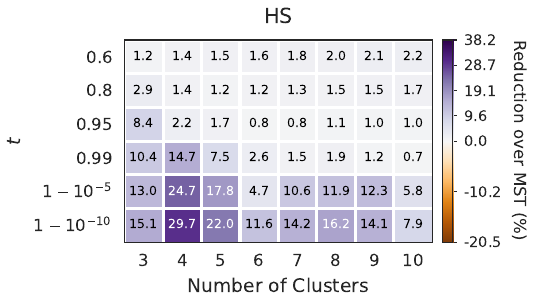}
\end{subfigure}%

\begin{subfigure}[t]{0.49\linewidth}
    \centering
    \includegraphics[width=\linewidth]{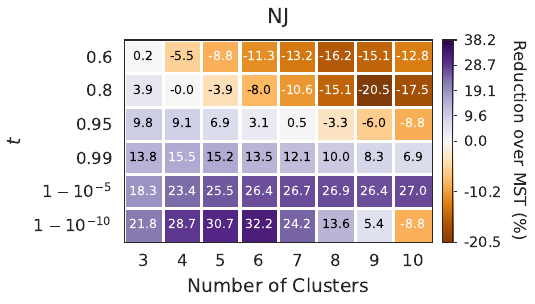}
\end{subfigure}%

\caption{
Convergence analysis for a mixture of $\mathcal{G}(\mu_{d, k}(t), 0.1)$, $k \in \{1, \cdots, d\}$ for $d \in \{3, \cdots, 10\}$ and with an increasing $t$, sampling 20 points per Gaussian.  Values correspond to reduction over MST (\%).}
\label{fig:ConvergenceAnalysis_indv}
\end{figure*}

\subsection{Discussion on Near Boundary Numerical Stability}\label{app:near_dis}
As reported in Appendix~\ref{app:near_res}, numerical instabilities arise only in extreme near-boundary regimes, for example when $t = 1 - 10^{-10}$. This setting is included solely as a theoretical stress test for the upper-bound ratio and is not relevant in practical applications. For less extreme values, such as $t \geq 1 - 10^{-5}$, we do not observe instability.

Indeed, as also noted by \citet{pmlr-v202-mishne23a}, such numerical effects depend in part on the choice of hyperbolic model. We adopt the Klein--Beltrami model because it simplifies the computation of Hyperbolic Delaunay Triangulations and isoptic curves \citep[Appendix~A.2]{garcia2025hypersteiner}. When additional numerical robustness is required, the method could instead be implemented in the hyperboloid model. We leave this variation to future work.

\subsection{Discussion on Scalability and Computational Complexity}
\label{app:scala}

\paragraph{Runtime comparison on synthetic experiments.} As shown in Tables~\ref{tab:exp1}, \ref{tab:exp2}, and \ref{tab:exp3_results}, the proposed method consistently achieves superior performance compared to both baselines. While its computational cost exceeds that of NJ and HS, this is expected for non-deterministic heuristics designed to compute Steiner Minimal Trees. To contextualize our computational requirements, our method requires approximately 187 seconds for 100 points (averaged across all configurations). In comparison, other non-deterministic approaches report varied costs: among DT-based methods, \cite{Beasley1994DTHeuristicSMT} requires 488.4 seconds for 100 points and \cite{Laarhoven2011Randomized} requires 40.2 seconds for 100 points, while \cite{yang2006hybrid} report that their evolutionary algorithm requires over 5,000 seconds for 100 points. Our method thus falls within the lower range of computational costs for non-deterministic heuristics in this domain.

\paragraph{Estimating the computational complexity.} As stated in Appendix~\ref{app:og_hyp} and Appendix~\ref{app:nj}, HyperSteiner has a computational complexity of $\mathcal{O}(|P|\log|P|)$~\citep{garcia2025hypersteiner}, while Neighbor Joining has a complexity of $\mathcal{O}(|P|^3)$ in its standard form~\citep{Studier1988NJ}, and improved versions achieve $\mathcal{O}(|P|^2)$ or $\mathcal{O}(|P|^{3/2}\log|P|)$ (respectively \citep{Elias2009FastNJ, Price2009FastTree}). As mentioned above, due to its stochastic aspect, we cannot compute a theoretical complexity for Randomized HyperSteiner, similar to the situation with its Euclidean stochastic analogs.

Following the methodology of \cite{Beasley1994DTHeuristicSMT}, we compute instead an empirical complexity by modeling the CPU runtime as a function of the number of points $|P|$, using a parametric form fitted via least-squares on log-transformed data. We focus on the results from Table 1, where the hyperbolic Gaussian sampling is locally Euclidean, allowing for rough comparison with Beasley \& Goffinet's empirical finding of $\mathcal{O}(|P|^{2.19})$ for their Euclidean stochastic method. In our case, the average computational time of Randomized HyperSteiner is empirically estimated to be $\mathcal{O}(|P|^{2.1}\log|P|)$ or $\mathcal{O}(|P|^{2.28})$. This confirms that our method is computationally more expensive than the deterministic baselines, as expected for stochastic heuristics, while remaining in the same complexity range as Euclidean stochastic approaches.

\paragraph{Equation \ref{eq:system} computational complexity.} Equation \eqref{eq:system} is solved via MINPACK’s \emph{hybrd} and \emph{hybrj} algorithms (implemented with scipy.optimize.fsolve in the Python library) which are modifications of  Powell's hybrid method. The latter is a version of Newton's method with a trust-region approach, and its computational complexity depends on the accuracy requested and the number of iterations needed to converge, which itself depends on the quality of the initialization. For one iteration of these algorithms, since we have a system of two (nonlinear) equations, the number of arithmetic operations (according to MINPACK's documentation from Burton et al., 1980) is about $11.5\times 2^2 + 1.3 \times 2^{3} = 56.4$ in addition with the number of operations needed to evaluate the functions $\varphi_{x,y,\frac{2\pi}{3}}(s)$ and $\varphi_{y,z,\frac{2\pi}{3}}(s)$. Thus, the computational cost per iteration is constant time. A reasonable value for the maximal number of iterations (given also by the MINPACK documentation) is $100\times (2+1) = 300$. In summary, for a system of two equations in two dimensions, the computational complexity of solving Equation~ \eqref{eq:system} is, essentially, $\mathcal{O}(1)$.

\paragraph{Compute-performance trade-off.} These results present practitioners with a clear trade-off: RHS should be selected when maximizing solution quality is the priority, while the vanilla HyperSteiner remains suitable for applications requiring fast approximations. Notably, as demonstrated in Section~\ref{sec:boundary}, on configurations approaching the boundary, the proposed method achieves performance improvements of 20\% to 40\% over the MST, as well as significant gains over the other baselines. Under such configurations, these substantial improvements justify the additional computational cost.

\subsection{Ablation Study: Insertion probabilities}
\label{app:ins_prob}

As stated in Appendix~\ref{app:exp_phase}, for the insertion probability range $[l,u]$ of our \emph{Expansion phase} in Algorithm \ref{alg:rad_hyper}, we chose the interval $[0.3,0.6]$ following \citet{Laarhoven2011Randomized}. We provide in Tables \ref{tab:ablation_proba} and \ref{tab:ablation_proba_cpu} an ablation study on the probability insertion $p$ for a typical configuration in the hyperbolic space, that is when sampling $|P|$ points from a centered hyperbolic Gaussian. Taken together, these two tables suggest that choosing $p$ between 0.3 and 0.6 provides a good trade-off between maximizing RED and minimizing computational cost.

\vspace{20pt}
\begin{table*}[h!]
    \centering
    \scriptsize
    \setlength{\tabcolsep}{4pt}
    \resizebox{\textwidth}{!}{%
    \begin{tabular}{lccccccccccc}
        \toprule
        $|P|$ & $p=0$ & $p=0.1$ & $p=0.2$ & $p=0.3$ & $p=0.4$ & $p=0.5$ & $p=0.6$ & $p=0.7$ & $p=0.8$ & $p=0.9$ & $p=1$ \\
        \midrule
        10  & $3.03_{\pm 1.42}$ & $3.30_{\pm 1.32}$ & $2.02_{\pm 0.97}$ & $3.29_{\pm 1.72}$ & $\mathbf{3.76_{\pm 0.94}}$ & $3.53_{\pm 1.78}$ & $3.07_{\pm 1.16}$ & $3.45_{\pm 1.89}$ & $1.68_{\pm 1.02}$ & $\mathbf{3.95_{\pm 1.87}}$ & $3.26_{\pm 1.28}$ \\
        20  & $2.81_{\pm 1.14}$ & $2.35_{\pm 0.84}$ & $2.79_{\pm 0.68}$ & $\mathbf{3.41_{\pm 1.26}}$ & $3.18_{\pm 0.79}$ & $3.34_{\pm 0.86}$ & $\mathbf{3.55_{\pm 1.30}}$ & $2.53_{\pm 0.85}$ & $2.87_{\pm 0.45}$ & $2.88_{\pm 0.68}$ & $2.60_{\pm 0.64}$ \\
        30  & $2.60_{\pm 0.75}$ & $2.69_{\pm 0.80}$ & $2.63_{\pm 0.90}$ & $2.23_{\pm 0.52}$ & $2.66_{\pm 0.80}$ & $2.83_{\pm 0.58}$ & $\mathbf{2.89_{\pm 0.73}}$ & $\mathbf{2.94_{\pm 0.61}}$ & $2.84_{\pm 0.81}$ & $2.81_{\pm 0.84}$ & $2.65_{\pm 0.85}$ \\
        40  & $2.17_{\pm 0.56}$ & $2.72_{\pm 0.89}$ & $2.89_{\pm 0.74}$ & $\mathbf{2.96_{\pm 0.70}}$ & $2.50_{\pm 0.35}$ & $2.81_{\pm 0.72}$ & $\mathbf{2.98_{\pm 0.56}}$ & $2.93_{\pm 0.66}$ & $2.71_{\pm 0.50}$ & $2.69_{\pm 0.51}$ & $2.37_{\pm 0.67}$ \\
        50  & $2.24_{\pm 0.59}$ & $2.83_{\pm 0.65}$ & $2.80_{\pm 0.71}$ & $\mathbf{3.36_{\pm 0.46}}$ & $2.59_{\pm 0.39}$ & $3.10_{\pm 0.65}$ & $3.08_{\pm 0.65}$ & $2.74_{\pm 0.51}$ & $2.90_{\pm 0.84}$ & $\mathbf{3.29_{\pm 0.74}}$ & $2.54_{\pm 0.57}$ \\
        60  & $2.08_{\pm 0.53}$ & $\mathbf{3.02_{\pm 0.53}}$ & $2.94_{\pm 0.63}$ & $2.72_{\pm 0.46}$ & $2.70_{\pm 0.57}$ & $2.90_{\pm 0.56}$ & $2.77_{\pm 0.58}$ & $2.77_{\pm 0.26}$ & $\mathbf{3.00_{\pm 0.55}}$ & $2.88_{\pm 0.56}$ & $2.84_{\pm 0.55}$ \\
        70  & $2.20_{\pm 0.52}$ & $2.86_{\pm 0.62}$ & $2.43_{\pm 0.58}$ & $\mathbf{3.25_{\pm 0.60}}$ & $2.82_{\pm 0.67}$ & $2.97_{\pm 0.54}$ & $2.43_{\pm 0.64}$ & $2.78_{\pm 0.57}$ & $\mathbf{3.01_{\pm 0.57}}$ & $2.85_{\pm 0.57}$ & $2.57_{\pm 0.64}$ \\
        80  & $2.21_{\pm 0.59}$ & $2.98_{\pm 0.49}$ & $2.81_{\pm 0.69}$ & $2.88_{\pm 0.63}$ & $2.77_{\pm 0.69}$ & $\mathbf{3.00_{\pm 0.51}}$ & $2.81_{\pm 0.51}$ & $2.73_{\pm 0.69}$ & $\mathbf{3.17_{\pm 0.51}}$ & $2.97_{\pm 0.63}$ & $2.64_{\pm 0.62}$ \\
        90  & $2.27_{\pm 0.50}$ & $2.95_{\pm 0.51}$ & $2.88_{\pm 0.56}$ & $2.89_{\pm 0.52}$ & $2.81_{\pm 0.51}$ & $2.88_{\pm 0.58}$ & $\mathbf{3.18_{\pm 0.57}}$ & $2.88_{\pm 0.48}$ & $2.86_{\pm 0.53}$ & $\mathbf{2.95_{\pm 0.45}}$ & $2.91_{\pm 0.44}$ \\
        100 & $2.29_{\pm 0.45}$ & $2.71_{\pm 0.59}$ & $\mathbf{2.98_{\pm 0.46}}$ & $2.79_{\pm 0.41}$ & $2.82_{\pm 0.40}$ & $2.66_{\pm 0.47}$ & $2.65_{\pm 0.42}$ & $2.91_{\pm 0.47}$ & $2.82_{\pm 0.36}$ & $\mathbf{2.95_{\pm 0.42}}$ & $2.67_{\pm 0.52}$ \\
        \midrule
        Avg & $2.39_{\pm 0.70}$ & $2.84_{\pm 0.72}$ & $2.72_{\pm 0.69}$ & $2.98_{\pm 0.73}$ & $2.86_{\pm 0.61}$ & $3.00_{\pm 0.72}$ & $2.94_{\pm 0.71}$ & $2.87_{\pm 0.70}$ & $2.79_{\pm 0.61}$ & $3.02_{\pm 0.73}$ & $2.70_{\pm 0.68}$ \\
        \bottomrule
    \end{tabular}%
    }
    \caption{Randomized HyperSteiner RED results when sampling $|P|$ points from a Hyperbolic Centered Gaussian (mean $\pm$ standard deviation) for insertion probabilities $p$ between 0 and 1. The two best results for each row are in bold.}
    \label{tab:ablation_proba}
\end{table*}

\begin{table*}[h!]
    \centering
    \scriptsize
    \setlength{\tabcolsep}{4pt}
    \resizebox{\textwidth}{!}{%
    \begin{tabular}{lccccccccccc}
        \toprule
        $|P|$ & $p=0$ & $p=0.1$ & $p=0.2$ & $p=0.3$ & $p=0.4$ & $p=0.5$ & $p=0.6$ & $p=0.7$ & $p=0.8$ & $p=0.9$ & $p=1$ \\
        \midrule
        10  & $8.25_{\pm 4.89}$ & $8.46_{\pm 6.28}$ & $9.38_{\pm 6.38}$ & $6.51_{\pm 5.42}$ & $5.03_{\pm 2.93}$ & $8.07_{\pm 5.13}$ & $7.93_{\pm 5.67}$ & $5.55_{\pm 4.69}$ & $3.62_{\pm 2.36}$ & $6.21_{\pm 4.38}$ & $5.42_{\pm 4.68}$ \\
        20  & $21.22_{\pm 18.91}$ & $21.34_{\pm 17.15}$ & $12.56_{\pm 11.00}$ & $12.96_{\pm 5.43}$ & $12.08_{\pm 4.61}$ & $8.38_{\pm 3.88}$ & $12.43_{\pm 7.75}$ & $12.82_{\pm 8.40}$ & $7.37_{\pm 2.10}$ & $12.19_{\pm 8.40}$ & $25.15_{\pm 46.83}$ \\
        30  & $23.10_{\pm 21.33}$ & $15.49_{\pm 7.27}$ & $12.30_{\pm 6.21}$ & $11.74_{\pm 6.01}$ & $11.15_{\pm 6.21}$ & $12.85_{\pm 5.29}$ & $11.45_{\pm 6.04}$ & $14.40_{\pm 7.32}$ & $15.62_{\pm 4.68}$ & $17.78_{\pm 8.55}$ & $17.07_{\pm 7.11}$ \\
        40  & $20.89_{\pm 18.82}$ & $22.11_{\pm 7.50}$ & $15.97_{\pm 7.03}$ & $13.71_{\pm 5.23}$ & $16.16_{\pm 8.94}$ & $14.51_{\pm 6.36}$ & $18.70_{\pm 7.59}$ & $20.33_{\pm 4.27}$ & $25.16_{\pm 11.66}$ & $22.72_{\pm 7.91}$ & $24.55_{\pm 7.50}$ \\
        50  & $35.75_{\pm 32.54}$ & $20.79_{\pm 6.46}$ & $17.66_{\pm 8.74}$ & $16.15_{\pm 5.77}$ & $20.57_{\pm 10.73}$ & $17.53_{\pm 5.78}$ & $24.03_{\pm 11.60}$ & $35.58_{\pm 10.01}$ & $43.23_{\pm 11.14}$ & $69.53_{\pm 15.59}$ & $74.89_{\pm 28.67}$ \\
        60  & $28.92_{\pm 26.24}$ & $26.41_{\pm 8.73}$ & $17.74_{\pm 5.68}$ & $18.37_{\pm 7.40}$ & $21.36_{\pm 4.92}$ & $24.06_{\pm 9.25}$ & $28.27_{\pm 6.51}$ & $36.43_{\pm 13.48}$ & $67.70_{\pm 31.42}$ & $73.89_{\pm 22.90}$ & $109.33_{\pm 33.30}$ \\
        70  & $55.84_{\pm 71.78}$ & $28.26_{\pm 8.94}$ & $19.50_{\pm 7.85}$ & $20.67_{\pm 6.88}$ & $23.42_{\pm 5.67}$ & $36.56_{\pm 20.86}$ & $40.92_{\pm 9.56}$ & $63.02_{\pm 20.79}$ & $85.85_{\pm 23.53}$ & $134.40_{\pm 39.08}$ & $226.67_{\pm 82.49}$ \\
        80  & $60.66_{\pm 109.63}$ & $30.58_{\pm 12.82}$ & $21.00_{\pm 5.69}$ & $21.98_{\pm 7.64}$ & $30.11_{\pm 8.40}$ & $30.95_{\pm 9.33}$ & $48.54_{\pm 17.12}$ & $78.43_{\pm 18.81}$ & $123.10_{\pm 41.94}$ & $210.40_{\pm 67.90}$ & $293.06_{\pm 75.58}$ \\
        90  & $80.30_{\pm 120.06}$ & $32.09_{\pm 13.54}$ & $33.62_{\pm 12.13}$ & $33.49_{\pm 6.74}$ & $38.06_{\pm 11.01}$ & $44.73_{\pm 12.22}$ & $76.83_{\pm 18.01}$ & $135.59_{\pm 38.33}$ & $261.36_{\pm 70.28}$ & $429.71_{\pm 87.88}$ & $917.67_{\pm 349.84}$ \\
        100 & $29.24_{\pm 13.20}$ & $39.82_{\pm 8.82}$ & $30.29_{\pm 10.15}$ & $41.65_{\pm 17.64}$ & $43.38_{\pm 7.20}$ & $53.57_{\pm 10.77}$ & $121.44_{\pm 38.52}$ & $195.24_{\pm 63.01}$ & $326.40_{\pm 102.91}$ & $925.18_{\pm 211.66}$ & $1936.48_{\pm 524.08}$ \\
        \midrule
        Avg & $36.42_{\pm 43.74}$ & $24.54_{\pm 9.75}$ & $19.00_{\pm 8.09}$ & $19.72_{\pm 7.42}$ & $22.13_{\pm 7.06}$ & $25.12_{\pm 8.89}$ & $39.05_{\pm 12.84}$ & $59.74_{\pm 18.91}$ & $95.94_{\pm 30.20}$ & $190.20_{\pm 47.42}$ & $363.03_{\pm 116.01}$ \\
        \bottomrule
    \end{tabular}%
    }
    \caption{Randomized HyperSteiner CPU time (seconds) when sampling $|P|$ points from a Hyperbolic Centered Gaussian (mean $\pm$ standard deviation) for insertion probabilities $p$ between 0 and 1.}
    \label{tab:ablation_proba_cpu}
\end{table*}
\newpage
\subsection{Ablation Study: Expansion schedule}
\label{app:exp_sche}

As we show in Appendix~\ref{app:exp_phase}, there are several choices for how many times we expand the Delaunay triangulation at each step. In Figure~\ref{fig:exps}, we compare three approaches: ``linear" ($n \mapsto n$) \citep{Beasley1994DTHeuristicSMT}, ``constant" ($n \mapsto 1$), and ours, which we call ``sqrt" ($n \mapsto \lfloor 2\sqrt{n} - 1 \rfloor$). While we provide justifications for selecting the ``sqrt" schedule in Appendix~\ref{app:exp_phase}, we will perform an ablation study in this section to further support this choice.

Hence, we provide in Table~\ref{tab:ab_schedule} a performance comparison on a mixture of $d$ hyperbolic Gaussians, each one centered at a vertex of a regular $d$-gon, with radial parameter $t=0.9$, and a standard deviation of $0.01$. We sample 20 points per Gaussian, which leads to $d$ clusters. This dataset resembles the ones of Section~\ref{sec:boundary} for our \emph{Near Boundary Performance Analysis}, making it a suitable benchmark for our purposes.

As shown in Table~\ref{tab:ab_schedule}, the ``linear" schedule can be more expressive than the ``sqrt" one, but its computational cost increases significantly as the number of clusters grows. Therefore, in line with our justification in Appendix~\ref{app:exp_phase}, we adopt the ``sqrt" schedule in our experiments as it provides the highest performance with a reasonable computational cost. Nevertheless, in practice, the ``constant" schedule could be employed to improve computational efficiency without a significant performance drop.
\vspace{10pt}
\begin{table*}[ht!]
\centering
\resizebox{0.85\textwidth}{!}{
\begin{tabular}{lcccccc}
\toprule
\multicolumn{1}{c}{} & \multicolumn{2}{c}{sqrt} & \multicolumn{2}{c}{constant} & \multicolumn{2}{c}{linear} \\
\cmidrule(lr){2-3}
\cmidrule(lr){4-5}
\cmidrule(lr){6-7}
$d$ & RED $(\uparrow)$ & CPU & RED $(\uparrow)$ & CPU & RED $(\uparrow)$ & CPU \\
\midrule
4 & $\mathbf{16.75}_{\pm 0.05}$ & $14.80_{\pm 6.98}$  & $16.62_{\pm 0.01}$ & $5.91_{\pm 1.00}$   & $16.68_{\pm 0.06}$ & $23.05_{\pm 7.42}$   \\
5 & $\mathbf{13.68}_{\pm 0.07}$ & $36.45_{\pm 15.14}$ & $13.61_{\pm 0.12}$ & $15.41_{\pm 3.82}$  & $13.65_{\pm 0.11}$ & $59.65_{\pm 30.82}$  \\
6 & $10.10_{\pm 0.12}$          & $77.60_{\pm 5.44}$  & $10.06_{\pm 0.15}$ & $66.85_{\pm 20.40}$ & $\mathbf{10.21}_{\pm 0.05}$ & $118.69_{\pm 16.79}$ \\
7 & $\mathbf{6.97}_{\pm 0.02}$  & $172.71_{\pm 66.49}$ & $6.93_{\pm 0.19}$ & $113.59_{\pm 28.29}$ & $6.85_{\pm 0.07}$ & $286.39_{\pm 66.62}$ \\
8 & $3.76_{\pm 0.68}$           & $295.45_{\pm 79.41}$ & $2.49_{\pm 0.31}$ & $200.71_{\pm 22.70}$ & $\mathbf{3.97}_{\pm 0.36}$ & $397.98_{\pm 155.61}$ \\
9 & $1.41_{\pm 0.40}$           & $243.90_{\pm 54.16}$ & $1.11_{\pm 0.88}$ & $171.49_{\pm 54.16}$ & $\mathbf{1.51}_{\pm 0.40}$ & $739.33_{\pm 486.84}$ \\
\bottomrule
\end{tabular}
}
\caption{Ablation of the expansion schedule of the Randomized HyperSteiner, for a mixture of $\mathcal{G}(\mu_{d, k}(0.9), 0.01)$, with $k \in \{1, \cdots, d\}$ for $d \in \{4, \cdots, 10\}$, sampling 20 points per Gaussian. }
\label{tab:ab_schedule}
\end{table*}

\section{Additional Related Work on Hyperbolic Machine Learning}
\label{app:hyp_ml}

Hyperbolic geometry has emerged as a powerful tool in machine learning, particularly for modeling data with latent hierarchical structure. Early work focused on embedding techniques that leverage manifold learning principles \citep{nickel2017poincare, chamberlain2017neural, Sala2018_RepresentationTradeoffsHyperbolicEmbeddings, Chami_2021_HoroPCA, Guo_2022_CO-SNE}. Following the introduction of hyperbolic neural networks \citep{ganea2018hyperbolic}, subsequent work explored hyperbolic generative modeling \citep{mathieu2019continuous, Nagano2019wrapped, bose2020latent, huang2022riemannian}. Recent developments have explored large-scale architectures, including hyperbolic large language models \citep{he2025helmhyperboliclargelanguage, He2025HyperbolicFoundationModels} enabled by adapting the Transformer architecture \citep{Vaswani2017Transformer} to hyperbolic geometry \citep{Gulcehre2018HyperbolicAttentionNetworks, Chen2022FullyHyperbolicNN, Shimizu2021HyperbolicNNPlusPlus, Yang2024Hypformer}.

These developments are motivated by hyperbolic geometry's representational advantages for hierarchical data, where exponential growth patterns can be embedded with arbitrarily low distortion compared to Euclidean space \citep{sarkar2011low}. However, while existing hyperbolic machine learning models effectively capture hierarchical structure in latent form, they do not inherently provide mechanisms for explicitly extracting the underlying hierarchy. Our work addresses this gap by building upon the \citet{garcia2025hypersteiner} framework to enable direct recovery of minimal trees from hyperbolic embeddings, supporting what may be broadly termed \textit{hyperbolic geometric inference}. Related efforts, such as \citet{Medbouhi2024hHyperDGA}, also fall under this paradigm by enabling statistical data analysis in hyperbolic latent spaces.

\end{document}


%
\runningtitle{I use this title instead because the last one was very long}

%

\onecolumn
\aistatstitle{Instructions for Paper Submissions to AISTATS 2026: \\
Supplementary Materials}

\section{FORMATTING INSTRUCTIONS}

To prepare a supplementary pdf file, we ask the authors to use \texttt{aistats2026.sty} as a style file and to follow the same formatting instructions as in the main paper.
The only difference is that the supplementary material must be in a \emph{single-column} format.
You can use \texttt{supplement.tex} in our starter pack as a starting point, or append the supplementary content to the main paper and split the final PDF into two separate files.

Note that reviewers are under no obligation to examine your supplementary material.

\section{MISSING PROOFS}

The supplementary materials may contain detailed proofs of the results that are missing in the main paper.

\subsection{Proof of Lemma 3}

\textit{In this section, we present the detailed proof of Lemma 3 and then [ ... ]}

\section{ADDITIONAL EXPERIMENTS}

If you have additional experimental results, you may include them in the supplementary materials.

\subsection{Effect of the Regularization Parameter}

\textit{Our algorithm depends on the regularization parameter $\lambda$. Figure 1 below illustrates the effect of this parameter on the performance of our algorithm. As we can see, [ ... ]}

\vfill